\newcommand{\cmark}{\ding{51}}%
\newcommand{\xmark}{\ding{55}}%
\newtheorem*{ax1}{Axiom}
  \theoremstyle{definition}
  \newtheorem{defn}{\protect\definitionname}
  \theoremstyle{plain}
  \theoremstyle{plain}
  \newtheorem{claim}{Claim}
  \theoremstyle{definition}
  \newtheorem{ax}{\protect\axiomname}
  \theoremstyle{plain}
  \newtheorem{thm}{\protect\theoremname}
  \theoremstyle{plain}
  \newtheorem{lemma}{Lemma}
  \theoremstyle{plain}
  \newtheorem{cor}{\protect\corollaryname}
  \theoremstyle{plain}
  \newtheorem{proposition}{\protect\propositionname}
\theoremstyle{remark}
  \theoremstyle{definition}
  \newtheorem{example}{Example}
  \newtheorem*{ass*}{Assumption}
  \providecommand{\axiomname}{Axiom}
  \providecommand{\conjecturename}{Conjecture}
  \providecommand{\definitionname}{Definition}
  \providecommand{\corollaryname}{Corollary}
  \providecommand{\theoremname}{Theorem}
\providecommand{\propositionname}{Proposition}
\begin{document}
\title[]{Choice with Endogenous Categorization$^\ast$}
\author[]{Andrew Ellis$^\dag$}
\author[]{Yusufcan Masatlioglu$^\S$}
\thanks{$^{\ast }$We thank  David Dillenberger, Erik Eyster, Nicola Gennaioli, Matt Levy, Collin Raymond, the anonymous referees, Andrei Shleifer, Kate Smith, Rani Spiegler, Tomasz Strzalecki as well as the conference/seminar participants at BRIC 2017, CETC 2017, SAET 2017, Lisbon Meetings 2017, ESSET 2019, Brown, UPenn, Pompeu Fabra, and Harvard for helpful comments and discussions. This project began at ESSET Gerzensee, whose hospitality is gratefully acknowledged.}
\thanks{$^\dag$ Department of Economics, London School of Economics, Haughton Street, London, WC2A 2AE. Email: \texttt{a.ellis@lse.ac.uk.}}
\thanks{$^\S$ University of Maryland, 3147E Tydings Hall, 7343 Preinkert Dr.,  College Park, MD 20742. E-mail: \texttt{yusufcan@umd.edu}}
\date{1 February, 2021}
\begin{abstract} We propose and axiomatize the categorical thinking model (CTM) in which the framing of the
decision problem affects how agents categorize alternatives, that in turn affects their evaluation of it. Prominent models of salience, status quo bias, loss-aversion, inequality aversion, and present bias all fit under the umbrella of CTM. This suggests categorization is an underlying mechanism of key departures from the neoclassical model of choice. 
We specialize CTM to provide
a behavioral foundation for the salient thinking model of  \cite{bordalo2013salience} that highlights its strong predictions and distinctions from other models.
\end{abstract}
\maketitle
\newpage
\begin{quote}
``\textit{All organisms assign objects and events in the environment to separate classes or categories...  Any species lacking this ability would quickly become extinct.}''
--\cite{ashby2005human}
\end{quote}
\section{Introduction}

Categories shape how people perceive and react to the world. A real estate agent shows  clients a house in a worse neighborhood before showing them the one the agent wants to sell, so that they categorize the target's neighborhood as a gain rather than a loss. A worker may reject a higher paying job offer in a different city because the worker does not categorize  it as unambiguously better than the status quo.
A fan categorizes a \$5 soda as a bargain at their favorite team's home stadium but a rip-off in a grocery store.  
A negotiator rejects, and refuses to make, offers categorized as unfair. 
An experimental subject is  willing to wait an extra week to turn a reward of \$100 into \$110 only when both rewards are categorized as long term. This paper develops a model that generates these behaviors through a common underlying mechanism: categorization. 

We propose and axiomatize the Categorical Thinking Model (CTM) based on two key features of these examples:  categorization depends on the context and affects how the decision maker (DM) evaluates alternatives.   A DM  conforming to CTM first groups objects into categories based on a reference point, and then maximizes a category- and reference-dependent utility function over alternatives. While the reference point does not affect her ranking within a given category, it may affect comparisons across categories.
We show that a number of important models across different choice environments are special cases of CTM. Our analysis suggests categorization as their common cognitive underpinning and reveals their common behavior as well as the behavior that distinguishes  them.

Prominent models of  loss-aversion, status quo bias, salience, inequality aversion, present bias, and others all fit under the umbrella of CTM.
A loss-averse DM \citep{TK91} categorizes alternatives according to which attributes are
gains and which are losses, then treats the two very differently.
A DM subject to status quo bias \citep{MO05} categorizes alternatives according to whether they unambiguously improve on the status quo, then penalizes the ones that do not.
A salient-thinking DM \citep{bordalo2013salience} categorizes alternatives according to which attribute stands out most, then  overweights that attribute.
An inequality-averse DM \citep{fehrschmidt99} categorizes social allocations according to whether she feels  envy  or guilt towards each of the others, then evaluates the allocation accordingly.
A quasi-hyperbolic DM \citep{phelpspollak1968} categorizes dated rewards as short- or long-term, then discounts the former at a higher rate.

We show that a family of reference-dependent preference relations conforms to CTM if, and only if, the ranking between alternatives belonging to the same category  does not depend on the reference point   and satisfies some standard axioms.  Within CTM, a large latitude of models is allowed, yet they share the important commonalities identified by our result.
For instance, the salient-thinking model \citep{bordalo2013salience} (BGS) and the constant loss-aversion model \citep{TK91} (TK) both fall under CTM, so our result establishes their common foundations and that categorization can serve as their common  psychological underpinning. 
The result  formalizes the behavioral implications separating models in CTM from those outside it. 
For example,  the context shifts weight between attributes in both the focus-weighted utility model \citep{koszegi2013model} and BGS.   The former necessarily violates  reference irrelevance and so is not a special case of CTM, while BGS satisfies it, as well as the other axioms.

While we initially consider exogenously-specified categories,
our framework has the advantage of allowing categories to be derived endogenously from choice behavior.%
\footnote{Non-choice data is an additional source of identification and can be used in conjunction with or in lieu of our methods.}
We provide a method to identify the categorization based on the changes in trade-offs between attributes.
This allows our model to consider phenomena for which the psychology makes only partial category predictions, like salience.
By endogenously identifying categories, the result extends the model's applicability beyond cases with unambiguous categorization,  such as gains and losses.

In economics, the most prominent model of salience, BGS,  accounts for a number of empirical anomalies, but because its new components are unobservable, understanding all of its implications for behavior can be difficult. 
We apply our results to provide  the first complete characterization of the observable choice behavior equivalent to BGS, clarifying and identifying the nature of the assumptions used by it.\footnote{In a recent paper, \cite{lanzani2019} provides a characterization of the related model for risk. This paper provides complimentary insights on the role of salience in different environments. For instance, the only objects in both environments are sure-monetary payments on which both predict  $x \succ y \iff x>y$. }
The first crucial step towards understanding the model is understanding its novel \emph{salience function} that determines which attribute stands out for a given reference point. 
While the salience function influences which attribute is salient, the weight given to each attribute is independent of its magnitude, so BGS is a special case of CTM. We identify properties that its categories must satisfy and the regularities that distinguish it from other CTMs.

In some models,  the set of available options endogenously determines the reference point. For instance, the reference point is the average of the available alternatives in BGS, so varying the budget set affects the salience of, and so the DM's evaluation of, a given alternative. 
Our final contribution addresses this challenge by extending our characterization of CTM and the identification of categories to accommodate an endogenous reference point. We take a choice correspondence describing the DM's behavior, and assume each menu is mapped to a reference point, such as the average alternative in it.  
As long as the reference point varies systematically with the choice problem, we characterize the properties of a choice correspondence that conforms to CTM.  Specifically, we show that if the DM's choices obey the natural analogs of our axioms in the exogenous-reference setting, then CTM rationalizes her behavior.
Together, the results admit a characterization of BGS with both an endogenous reference point and endogenous identification of categories, unlike our previous results that relied on exogeneity of at least one of the two.

The paper proceeds as follows. The next subsection provides a brief overview of the relevant psychology literature on categorization. In Section \ref{sec: Model}, we introduce the CTM and discuss the models covered under its umbrella. In Section \ref{sec: axiomatization}, we axiomatize the CTM and  compare and contrast  the models of riskless choice discussed in Section \ref{sec: Model}.  Section \ref{sec: BGS} contains our analysis of the salient thinking model. In Section \ref{sec: Choice Correspondence} we introduce the endogenous reference point setting and apply our axiomatizations of the CTM to it.   Section \ref{sec: conclusion} concludes by comparing our results with the related economics literature.
\subsection{Psychology of Categorization}
There is  a long literature in psychology and marketing discussing categorization. Recent review articles include \cite{ashby2005human, loken2006consumer, loken2008categorization}, and \cite{cosmides2013evolutionary}. Much of the literature focuses on how subjects form categories and how they add new alternatives to existing categories. The two properties on which CTM is based are well-documented by this literature. 

First, categories are context dependent. \cite{tversky1977features, tversky1978studies} 
	present evidence that replacing one item in a set of objects can
drastically alter how people categorize the  remaining objects.   \cite{tversky1978studies} argue that categorization {\it{``is generally not invariant with respect to changes in context or frame of reference.''}} For example, they show that subjects put East Germany and West Germany into the same category when the salient feature is geography or cultural background, but categorize the two differently when politics are salient. 
Similarly, \cite{Choi2016} posit that depending on the context, a person may categorize an Apple Watch as a tech product, a fashion product, a fitness product, or just as a watch.
 \cite{ratneshwar1991substitution}
 show that subjects categorize ice cream and cookies together in terms of similarity (e.g., they are both desserts), but categorize ice cream and hot dogs together in terms of usage (e.g., both are good snacks to have at the pool).
\cite{stewart2002sequence} present  evidence that information about the relative magnitude of sounds that is derived from a comparison with a  reference point is used to categorize them.

Second, how a person categorizes an object affects its final valuation.
In a classic series of experiments, \cite{Rosch1975} shows that subjects perceptually encode differently categorized but physically identical stimuli as distinct objects.
\cite{wankeetal1999lobster} demonstrate that people evaluate wine more positively when it is in the same category as lobster instead of with cigarettes.
\cite{mogilner2008mere} show that categorizing goods differently results in varying levels of satisfaction. \cite{chernev2011dieter} shows that bundling a healthy food item with a junk food item reduces the reported caloric content beyond that of the junk food alone. 

Moreover, CTM models categories as regions in the alternative space.  This  closely tracks psychology's decision bound theory. As  \citet[p. 152]{ashby2005human} describe,  the subject ``\textit{partition[s] the stimulus space into response regions... determines which region the percept is in, and then emits the associated response.}'' \cite{ashbygott1988} show it can accommodate examples incompatible with other theories of category formation, such as prototype theory. Moreover, there is substantial experimental support for it, such as \cite{AshbyWaldron1999,anderson1991adaptive, love2004sustain}.

\section{Model}
\label{sec: Model}
The DM makes a choice of an  alternative in $X=\mathbb{R}^n_{++}$, and we focus on $n=2$ when not otherwise noted.%
\footnote{\label{fn: Rn}We note when there is a distinction between general $n$ and $n=2$. Theorem \ref{thm: BGS regions} and the results that rely on it use the full structure of $\mathbb{R}^2_{++}$. 	
	The remaining results all generalize to any $X$ that is a finite Cartesian product of open, linearly ordered, separable, connected sets endowed with the order topology, where $X$ itself has the product topology.}
The next subsections explore three different interpretations of $X$ in different contexts: as a riskless object with different attributes, as a dated reward or consumption stream, and as an allocation of consumption across individuals.
We often use the convention of writing $x\in X$ as $(x_i, x_{-i})$ with $x_{-i}$ denoting the components of $x$ different from $i$.

The DM maximizes a complete and transitive preference relation $\succsim_r $ over $X$ 
when her reference point is $r \in X$.  As usual, $\succ_r$ denotes strict preference  and $\sim_r$ indifference.
In Sections \ref{sec: Model}-\ref{sec: BGS}, we assume that the reference point is exogenously given, so our primitive is a family $\{ \succsim_r\}_{r\in X}$. This isolates the effects of categorization from that of reference point formation and allows easier comparison with the existing literature like \cite{TK91}. We relax this assumption in Section \ref{sec: Choice Correspondence} to allow endogenous reference point formation.
\subsection{Categorical Thinking Model}
We model category formation via a function that maps reference points to subsets of alternatives belonging to each category. We allow categories to have a very general structure. 
\begin{defn}\label{defn: category function} 
	A vector-valued function $\mathcal{K} =(K^1, K^2, \dots, K^m)$ is a {\it{category function}} if each $K^k: X  \rightarrow {2^{X}}$ satisfies the following properties:
	\vspace{-0.5cm}
	\begin{enumerate}
		\item $K^k(r)$ is a non-empty, regular open set, and $cl(K^k(r))$ is connected,\footnote{A set $A$ is regular open if $A=int(cl(A))$.}
		\item $\bigcup_{k=1}^m K^k (r)$ is dense,
		\item $K^k(r) \bigcap K^l (r) = \emptyset$ for all $k\neq l$, and
		\item $K^k(\cdot)$ is continuous.\footnote{That is, each $K^k$ is both upper and lower hemicontinuous when viewed as a correspondence.}
	\end{enumerate}
\end{defn}
We interpret the properties of the category function as follows. Every category contains some alternative for every
reference point. If a particular product, say $x$, belongs to the category $k$, then so do all products that are close enough to $x$. For any two points in the same category, we can find a path in its closure, so categories cannot be the union of ``islands.'' Almost every alternative is in at least one category, and none are in two categories. Further, if the reference point does not change too much, then neither do the categories.

Categories arise from the psychology of the phenomenon to be modeled. For CTM to be applicable, one must know or infer the category function. Often the psychology makes unambiguous predictions about categorization. For instance, with gain-loss utility, a DM  treats alternatives that dominate the reference point differently than those better in only one dimension. 
Other times, non-choice data such as hypothetical questions, subjective valuations, reaction times, physiological reactions, and neurological responses combine with the psychology to make unambiguous predictions.
When only partial predictions are possible even after adjusting for other sources of information, the modeler must infer the categorization  (see Propositions \ref{prop: general category ID},  \ref{prop: revealing regions}, and 
\ref{prop: revealing regions with c}).

Say that a function $ U^k : X \rightarrow \mathbb{R}$ is additively separable and monotonic if $U^k(x) = \sum_{i=1}^n U^k_i(x_i) $ where each $U^k_i(\cdot)$ is strictly monotone and continuous.%
\footnote{That is, each $U^k_i$ either strictly increases or strictly decreases.}
We can now state the formal representation.
\begin{defn} 
	The family $\{\succsim_r\}_{r \in X}$ conforms to the \emph{Categorical Thinking Model (CTM)} under category function $\mathcal{K} =(K^1, K^2, \dots, K^m)$  if for each category $k$ there is an additively separable and monotonic  $ U^k: X \rightarrow \mathbb{R}$  
	and a family $\{U^k(\cdot|r)\}_{r\in X}$ of continuous, increasing transformations of $U^k(\cdot)$ so that for any $r \in X$   \[
	\text{if }x \in K^k(r)\text{ and }y \in K^l(r)\text{, then
	 }x \succsim_r y \iff  U^k(x|r)  \geq U^l(y|r).
	\]
\end{defn}

A DM conforming to CTM values each alternative  in a way that depends not only on its attributes  but also on the category to which it belongs.
She values $x$ at $U^k(x|r)$ when $x$ is categorized as $k$ for $r$, and since $U^k(\cdot|r)$ typically does not equal $U^l(\cdot|r)$, her categorization affects her valuation.%
\footnote{When $U^k(x|r)\neq U^l(x|r)$ for some $x\in X$, discontinuities may occur but only on the boundary between categories. This is consistent with a number of findings in the psychology literature.
As observed by \citet[p. 6]{Rosch1978},
``\textit{In the perceived world, 
information-rich bundles of perceptual and functional attributes
occur that form natural discontinuities, and ... cuts in categorization are
made at these discontinuities}.''}
On the one hand, the DM evaluates alternatives in the category  independently of the reference point because each $U^k(\cdot|r)$   is an increasing transformation of $U^k(\cdot)$ for any $r$. 
Consequently, the \emph{category utility function} $U^k(\cdot)$ governs the trade-off between attributes within category $k$. 
On the other hand, the reference point may affect the DM's choice between alternatives belong to different categories since $U^k(x|r)$ need not equal $U^k(x|r')$.

The following subclasses are of particular interest.
A DM conforms to \emph{Increasing CTM} if $U^k_i$ increases with $x_i$ for every category $k$ and dimension $i$.
She conforms to \emph{Affine CTM} if $U^k(\cdot|r)$ is an affine transformation of $U^k$ for each $r$, and to \emph{Strong CTM} if  $U^k(\cdot|r)=U^k(\cdot)$ for each $r$.

\subsubsection*{Remarks on the model }
A reference point is a specific instance of the general concept of framing. Our framework extends to cover other forms of framing, such as the intensity of advertising, the amount of light in a supermarket, and expectations in the form of lotteries (as in \cite{KR06}). Our definition of a category function extends naturally to mappings from frames to categories, and most of our results continue to hold when  behavior is described by a family of complete and transitive preferences indexed by a sufficiently well-behaved set of frames.\footnote{Specifically, a non-empty, compact, path-connected subset of a metric space. }

 Not every reference-dependent model is a CTM. For example, the general loss-aversion model of \cite{TK91} and the reference-dependent CES model of \cite{munro2003theory} do not fall into the class of CTMs since the reference point affects the marginal rate of substitution between attributes. 
Nor does it encompass all models in which the framing distorts the indifference curves:  the models of  \cite{koszegi2013model}, \cite{bhatia2013attention}, and \cite{bushong2015model} all fall outside of the CTM umbrella for the same reason.

 CTM treats categories as stark and does not allow the framing to change how the DM makes trade-offs within a category. 
 It rules out related models in which the weight on a dimension changes continuously with the reference point. Nevertheless, such models can be approximated by CTM with a large number of categories when weights depend on the position of the alternative relative to the frame, as in \cite{BGS_MAC}. In contrast, when the weighting depends on the frame alone as in \cite{koszegi2013model}, the indifference curves shift in the same way at each point. If this model could be approximated by CTM, then every category would have the same indifference curves, which would in turn imply that the frame does not affect the DM's choice.  
 
\subsection{Riskless Consumer Choice}
\label{sec: consumer choice}
In this subsection, we consider our primary application: riskless consumer  choice. To show how different models fit into our
framework, we first define psychologically relevant categories for each model and then map them to a category function. 
For the purpose of illustration,  
Figure \ref{fig:models} plots their indifference curves and categories, with darker lines indicating higher utility.
\begin{figure}[h]
	\begin{center}
		\includegraphics[width=1\textwidth]{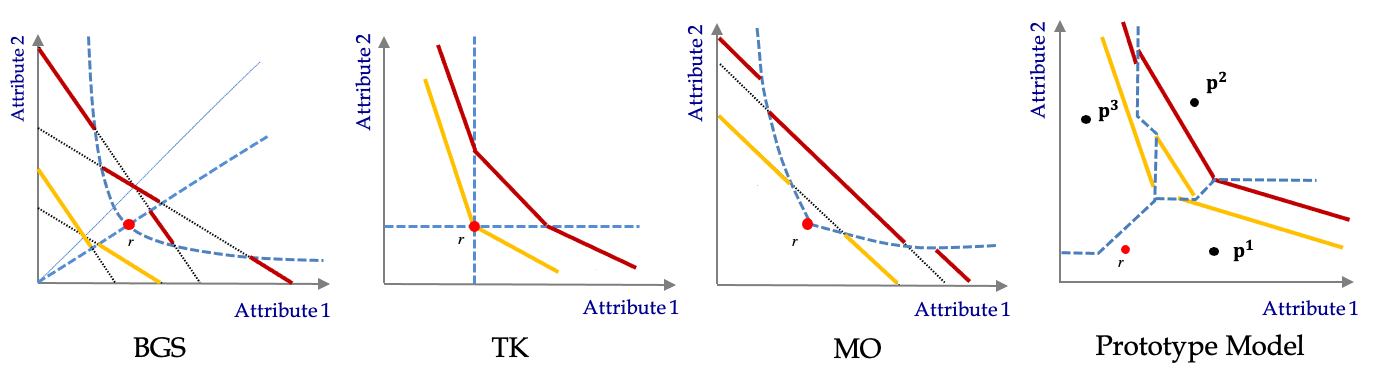}
	\end{center}
	\caption{CTM for Riskless Choice}
	\label{fig:models}
\end{figure}
\subsubsection*{Constant Loss Aversion Model (TK)}  One of the first and most broadly adopted economic insights from psychologists is that subjects treat gains and losses differently \citep{KT79}.
Accordingly, people categorize alternatives according to whether each of their attributes (or possible outcomes in the case of risk) are gains or losses. Typically, losses loom larger than gains.
\cite{TK91}  provide foundations for a reference-dependent model that captures loss aversion among riskless objects.
 
In the model, the DM determines gains and losses relative to a reference point $r$.
Given that we have two attributes, there are four different categories: (i) gain in both dimensions, (ii) loss in the first dimension and gain in the second dimension, (iii) gain in the first dimension and loss in the second dimension, and (iv) loss in both dimensions.%
\footnote{With $n$ attributes, there are $2^n$ categories.}
The \emph{gain-loss category function} $\mathcal{K}^{GL}=\left( K^{1},K^{2},K^{3},K^{4} \right)$ where $K^1(r)=\{x:x \gg r\}$,   $K^{2}(r)=\{x:x_1<r_1\text{ and }x_2>r_2\}$, $K^{3}(r)=\{x:x_1>r_1\text{ and }x_2<r_2\}$,  and $K^{4}(r)=\{x:x \ll r\}$  formally defines  the four categories described above.

In the absence of losses, the DM  values each alternative with an additive utility function, $ u(x_1)-u(r_1)+ v(x_2)-v(r_2)$, which attaches equal weight to each attribute. If she experiences a loss in attribute $i$, then she inflates the weight attached to that attribute by  $\lambda_i $.
Then, the utility function is:
\begin{equation*}
V_{TK}(x|r)=\left\{
\begin{array}{ll}
u_1(x_1)-u_1(r_1)  + u_2(x_2) -u_2(r_2)  & \text{ if  } x \in K^1(r)  \\
\lambda_1 (u_1(x_1)-u_1(r_1))  +u_2(x_2) -u_2(r_2)  & \text{ if  }  x \in K^2(r)\\
u_1(x_1)-u_1(r_1)  +  \lambda_2 (u_2(x_2) -u_2(r_2))  & \text{ if }  x \in K^3(r)\\
\lambda_1 (u_1(x_1)-u_1(r_1))  +  \lambda_2 (u_2(x_2) -u_2(r_2))  & \text{ if  } x \in K^4(r) \\
\end{array}%
\right. 
\end{equation*}	
where $\lambda_1,\lambda_2>0$ ($>1$ if loss averse) and each $u_i$ strictly increases. TK is a special case of Affine CTM with four categories defined by a gain-loss category function.
\subsubsection*{Status Quo Bias Model (MO)}  Particularly for difficult decisions, rejecting the status quo for another alternative causes psychological discomfort, unless that alternative is unambiguously superior to it  (see \cite{fleming2010overcoming}). 
People categorize alternatives according to whether they are obvious improvements, and tend to stick to a suboptimal status quo, particularly when the trade-off is unfamiliar and unclear. 
\cite{MO05} introduce this  concept to economics by modeling individuals who incur
an additional utility cost when they abandon the status quo for something not obviously better. 
 
\cite{MO05} derive a closed set  $Q(r)$ that denotes the alternatives that are unambiguously superior to the default option $r$ which include but are not limited to those that exceed $r$ in all attributes  (Figure \ref{fig:models}).  This set formally maps to a category function $\mathcal{K}^{MO}=(K^1,K^2)$ where 
 $K^1(r)=\{x | \ x \in int(Q(r))\}$  and $K^2(r)=\{x | \ x \notin Q(r)\}$. 
The former contains all those bundles obviously better than the status quo, and the latter contains those that are not. Here, we consider a special case of their model. If an alternative is not obviously better than the status quo, then the DM pays a cost $c(r)>0$ to move away from the status quo, which may depend on the reference point. For any $x \neq r$, we have:
\begin{equation*}
\label{eq:MO rep}
V_{MO}(x|r)=\left\{ 
\begin{array}{lc}
u_1(x_1) + u_2(x_2)  & \text{ if } x \in K^1(r)  \\
u_1(x_1) + u_2(x_2) - c(r)  & \text{ if  }  x \in K^2(r)\end{array}%
\right. .
\end{equation*}
This is an example of an Affine CTM for general $c$, and a Strong CTM when $c(r)$ is constant.
\subsubsection*{Salient Thinking Model (BGS)}
The context in which a decision takes place causes some features of an alternative to stand out, making them more salient than others. 
When a portion of the alternative is more salient,   
psychologists have found that ``the information contained
in that portion will receive disproportionate weighing in
subsequent judgments'' \citep{taylor1982stalking}.
That is, people unconsciously categorize goods according to which of their features is most salient.
\cite{bordalo2013salience} propose a behavioral model based on salience and show that it has a number of important consequences. 

In the model,  a salience function $\sigma:\mathbb{R}_{++}\times\mathbb{R}_{++}\rightarrow \mathbb{R}_+ $ determines the salience of a given attribute of an alternative.%
\footnote{We describe the properties of $\sigma$ more fully in Section \ref{sec: BGS}.}
Formally, the salience category function $\mathcal{K}^{BGS}=(K^1,K^2)$ when  $
K^1(r)=\{x:\sigma(x_1,r_1)>\sigma(x_2,r_2)\}$ and $K^2(r)=\{x:\sigma(x_1,r_1)<\sigma(x_2,r_2)\}$. This function
indicates which alternatives have each salient attribute.
In words, the DM categorizes objects according to the attribute that differs the most from the reference point according to the salience function, and $K^i(r)$ is the set of those for which attribute $i$ stands out the most.
That is, given a reference $(r_1,r_2)$, attribute 1 is salient for good $x$ if  $\sigma( x_1, r_1)  > \sigma( x_2, r_2) $, and attribute 2 is salient for good $x$ if  $\sigma( x_1, r_1)  <  \sigma( x_2, r_2) $. BGS propose  the salience function  $\sigma(x_k,r_k)=\frac{|x_k-r_k|}{x_k+r_k}$, and we illustrate the indifference curves based on this function in Figure \ref{fig:models}.

An attribute receives more weight when it is salient than when it is not. The family $\{\succsim_r\}_{r\in X}$ has a \emph{BGS $(\sigma;w_1,w_2,u_1,u_2)$ representation} if each $\succsim_r$ is represented by
	\begin{equation*}
	\label{eq:BGS rep}
	V_{BGS}(x|r)=\left\{ 
	\begin{array}{cc}
	w^1_1 u_1(x_1) +  w^1_2 u_2( x_2)  & \text{ if } x \in K^1(r)\\
	w^2_1 u_1( x_1) +  w^2_2 u_2( x_2)  & \text{ if } x \in K^2(r)\\
	\end{array}%
	\right. 
	\end{equation*} 
	for a salience function $\sigma$ with strictly positive weights with $\frac{w^1_1}{w^1_2}>\frac{w^2_1}{w^2_2}$, and each $u_i$ strictly increases.
	 Because $\frac{w^1_1}{w^1_2}>\frac{w^2_1}{w^2_2}$, the DM is less willing to trade-off less of attribute 1 for more of attribute 2 when attribute 1 is salient than when it is not. Consequently, alternatives relatively strong in the first dimension improve when categorized as $1$-salient, but those relatively strong in the second are hurt.
	 
\subsubsection*{Prototype Theory (PT)} 
A key role of categorization is to simplify the representation of a complex environment. People evaluate objects categorized in the same way according to similar criteria, and one way in which psychologists explain category formation is through prototype theory \citep{posner1970retention}. It argues that people categorize a stimulus according to how similar it is to a prototype that is the ``most typical'' member of the category. 
As \citet[p. 36]{Rosch1978}  argues,  ``Categories can be viewed in terms of their clear cases  if the perceiver places emphasis on the correlational structure of perceived attributes.''
We propose a model of choice based on these ideas. The DM compares each alternative  to each prototype and categorizes it accordingly. Then, she evaluates it according to how it differs from the prototype.

In the model, there are $m$ prototypes, $p^1,\dots,p^m$, and the DM categorizes each alternative according to  how close it is to a prototype.
Then, category $K^i(r)$ is the set of alternatives most similar to $p^i$ and, as suggested by \cite{tversky1978studies}, similarity may depend on the reference point.
Formally, there is a family of metrics indexed by $r$ so that $d_r(x,y)$ indicates how far away the DM perceives $x$ to be from $y$ given  $r$; each is continuous with respect to the usual metric on $X$.%
\footnote{This metric could be replaced by similarity function, as proposed by\cite{tversky1977features}, without changing any of the key insights.}
The category function  $\mathcal{K}^{P}=(K^1, \dots, K^m)$ where 
$K^i(r)=\left\{x: i = \arg \min_j d_r(p^j,x) \right\}$.
The DM evaluates alternatives in category $i$ according to:
$$V_{PT}(x|r)=U(p^i)+\lambda^i_1(x_1-p^i_1)+\lambda^i_2(x_2-p^i_2)\text{ if } x \in K^i(r)$$
where $U(\cdot)$ is a hedonic utility function and $\lambda^i_j>0$.
A particularly interesting specification is when $\lambda^i_j= \frac{\partial}{\partial p^i_j}U(p^i)$.
Then, the DM approximates the utility of $x$ according to a first-order Taylor expansion around the prototype most similar to it (Figure \ref{fig:models}).\footnote{In Figure \ref{fig:models}, we use $d_r(p^j,x)=r_1|x_1-p^j_1|+r_2|x_2-p^j_2|$ to illustrate this model.} 
This is an example of a Strong CTM.
\subsection{Time Preference}

We can apply our framework to  time preference where each alternative corresponds to an amount of consumption at a given point in time. 
People treat future outcomes differently than immediate outcomes  \citep{frederick2002time}. 
\citep{McClure2004} documents physiological reasons for this distinction: decisions involving immediate trade-offs are associated with the limbic system, but the
prefrontal and parietal regions are active in decisions involving future trade-offs. Consequently, people  categorize rewards as being short-term or long-term, and many suffer from present-bias; that is, they are less patient for those in the former category than those in the latter.  Economists have employed quasi-hyperbolic discounting \citep{phelpspollak1968} to capture this behavior. In Appendix \ref{appendix:otherCTM}, we formally illustrate that this model is a special case of CTM when what is ``present'' depends on a reference outcome.
\subsection{Social preferences}
Our final application is to other-regarding preferences where alternatives represent allocations of consumption to each of $n$ individuals. Two leading models of social preferences, the inequity aversion model of \cite{fehrschmidt99} and the distributional preferences model of \cite{charness2002understanding}, fit under the umbrella of CTM.
These models implicitly depend on the outcomes that the DM expects for herself and others, which we model as a social reference point.
While most take the equitable outcome as the social reference point, Fehr and Schmidt note that ``[t]he determination of the relevant reference group and the relevant reference outcome for a given class of individuals is ultimately an empirical question'' (page 821), and that it may depend on, among other things, the social context.
In the inequity aversion model, people categorize social allocations according to whether their inequities are advantageous or disadvantageous; the former causes them to experience envy of an individual's allocation, while the latter experiences guilt.
The distributional preferences model focuses on people's trade-off between their own material payoff and overall social welfare, where welfare includes both a utilitarian component and one that focuses on the utility of the worst-treated person.
Consequently, individuals categorize social allocations according to the identity of the worst-treated individual.
In the appendix, we discuss these models with a general reference point and show that they fall under the umbrella of CTM.

\section{Behavioral Foundation for CTM}
\label{sec: axiomatization}
In this section, we provide a set of behavioral postulates characterizing increasing CTM. These postulates represents the key features of the model. We show that they hold if and only if the data is representable by increasing CTM, rendering the model behaviorally testable. In subsequent subsections, we explore the various strengthenings of the model and provide axiomatizations of these as well.

Our axioms apply to the family $\{\succsim_r\}_{r \in X}$, an  observable object, and are stated in terms of $\mathcal{K}$,  a component of the model.
They inform us whether that family is CTM with category function $\mathcal{K}$. In other words, the axioms answer the question ``Given $\mathcal{K}$, are there category utilities for which $\{\succsim_r\}_{r \in X}$ is CTM under $\mathcal{K}$?'' Since the categories themselves convey much of the psychology captured by CTM, they tie the behavior of the DM, as reflected by $\{\succsim_r\}_{r \in X}$, to the phenomenon to be captured. Of course, this leaves open the possibility that the family  $\{\succsim_r\}_{r \in X}$ violates the axioms for one category function $ \mathcal{K}$ but satisfies them for  $\mathcal{K}'$. Since $\mathcal{K}$ and $\mathcal{K}'$ would presumably be derived under different theories of behavior, the axioms inform which, if either, of the two describes the DM's choices.  

Another important question is ``Are there category utilities and a $\mathcal{K}$ for which $\{\succsim_r\}_{r \in X}$ is CTM?''  In Section \ref{sec:revealing}, we discuss how to answer this question. We show that under mild conditions, one can first infer the DM's endogenous category function $\hat{\mathcal{K}}$. One could then test the axioms on $\{\succsim_r\}_{r \in X}$ for $\hat{\mathcal{K}}$. Proposition \ref{prop: revealing regions} and Corollary \ref{cor: endog BGS} illustrate how to do so for BGS.

Define the revealed ranking within category $k$ as the  binary relation $\succsim^k$ for which $x \succsim^k y$ if and only if there exists $r$ such that $x,y\in K^k(r)$ and $x \succsim_r y$. The sub-relations $\succ^k$ and $\sim^k$ are defined in the usual way.  The ranking $\succsim^k $ captures preference within category $k$. The following axiom states that the within-category revealed preference  has no cycles.

\begin{ax}[Weak Reference Irrelevance]
	\label{ax: first}
	The relation $\succsim^k$ is acyclic. That is, if $x^1 \succsim^k x^2 \succsim^k \dots \succsim^k x^m$, then  $x^m \not \succ^k x^1$. 
\end{ax}

Weak Reference Irrelevance ensures that the DM reacts consistently to alternatives when they are categorized the same way.
That is, the categories reflect the DM's psychological treatment of the alternative.
Although she may have choice cycles, these cycles occur only when the context changes how the DM categorizes alternatives.
Since $\succsim^k$ is acyclic, we can take its transitive closure to derive full comparisons.
Let $\succsim^{k*}$ be is transitive closure, with $\succ^{k*}$ and $\sim^{k*}$ the asymmetric and symmetric parts.

Within a category, preference has an additive structure. The next axiom implies that each $\succsim_r$ satisfies Cancellation when restricted to a given category.
\begin{ax}[Category Cancellation]
	\label{ax: CDC}
		For all $x_1, y_1, z_1 , x_2,y_2,z_2\in \mathbb{R}_{++}$, $r \in X$, and category $j$ so that $(x_1,z_2), (z_1,y_2),(z_1,x_2), (y_1,z_2),(x_1,x_2) ,  (y_1, y_2) \in K^j(r)$:\\		
		 If $(x_1,z_2) \succsim_r (z_1,y_2)$ and $(z_1,x_2) \succsim_r (y_1,z_2)$, then $(x_1,x_2) \succsim_r  (y_1, y_2)$.
\end{ax}

Category Cancellation adapts the well-known Cancellation axiom to our setting, differing in its requirement that the alternatives belong to the same category. Without the qualifiers on how alternatives are categorized, the axiom is a well-known necessary condition for an additive representation that appears in \cite{krantz1971} and \cite{TK91}, among others.    If $X$ has strictly more than two dimensions, then we can replace it with the analog of P2  \citep{Savage1954}; see  \cite{debreu1959topological}.%
\footnote{Formally, for any $x,y,x',y' \in K^k(r)$  and subset of indexes $E$, if $x_i=x'_i$ and $y_i=y'_i$ for $i \in E$,  $x_i=y_i$ and $x'_i=y'_i$ for all $i \notin E$, and $x\succsim_r y$, then  $x' \succsim_r y'$. This is implied by Category Monotonicity when $n=2$, so a stronger condition is necessary.} 

The next axiom requires that Monotonicity holds between objects categorized the same way.
\begin{ax}[Category Monotonicity (CM)]
	\label{ax: CM}
	For any $x,y,r \in X$: if $x \geq y$ and $x \neq y$, then $y \not \succsim^{k*} x$ for any category $k$; in particular, if $x,y \in K^k(r)$, then $x \succ_r y$.
\end{ax}
 Since both attributes are ``goods'' as opposed to ``bads,'' Monotonicity means that if a product $x$ contains more of some or all attributes, but no less of any, than another product $y$, then $x$ is preferred to $y$. The postulate requires that choice respects Monotonicity for alternatives  within the same category. However, it does  not require that this comparison holds when the goods belong to different categories, and we shall see later that salience can distort comparisons enough to cause Monotonicity violations.
 
 Finally, the family of preference relations is suitably continuous.
\begin{ax}[Category Continuity]
	\label{ax:last_CTM}
	For any $r \in X$, $x \in \bigcup_i K^i(r)$, and category $j$, the sets $UC_j(x)=\{y \in K^j(r):y \succ_r x\}$ and $LC_j(x)=\{y\in K^j(r):x \succ_r y\}$ are open. Moreover, the set  $$\left\{x \in \bigcup_i K^i(r):UC_j(x)\bigcup LC_j(x)=K^j(r)\text{ and }UC_j(x)\neq K^j(r) \text{ and }LC_j(x)\neq K^j(r) \right\} $$ has an empty interior.
\end{ax}
Category  continuity adapts the usual continuity condition to apply only within a category. It says that when $y$ is preferred to $x$ in a given context and $y'$ is close enough to $y$, then $y'$ is also preferred to $x$, provided that $y'$ belongs to the same category as $y$. The final condition requires that if an alternative $x$ is neither better than everything within category $j$ nor worse than everything within category $j$, then there exists something in category $j$ that is as good as $x$, or as good as something arbitrarily close to $x$.
For such an $x$, the category must intersect almost all indifference curves close to $x$'s since each category is almost connected.

Finally, we make a structural assumption.
\begin{ass*}[Structure]
	The category function $\mathcal{K} $ is such that for any category $k$, the following sets are connected: $E^k=\bigcup_{r\in X} K^k(r)$, $\{x\in E^k: x_i=s\}$ for all dimensions $i$ and scalars $s$, and $\{y \in E^k: x \sim^{k*} y \}$ for all $x\in E^k$. 
\end{ass*}

The Structure Assumption is satisfied for all the models we discussed in the previous section. Indeed, $E^k=\mathbb{R}^n_{++}$ for every category $k$ in each of these models, except prototype theory.%
\footnote{For instance, $p^k \notin E^j$ for every $j \neq k$. We thank a referee for pointing this out.}
These conditions establish that the objects categorized in the same way have enough topological structure so that ``local'' properties can be extended to global ones.
\cite{chateauneuf1993local}  show that the structure assumption, applied to a single preference relation and domain, is needed to guarantee that a local additive representation implies a global one.

\begin{thm}
	\label{thm: weak CTM}
	Assume the Structure Assumption holds. The family $\{\succsim_r\}_{r\in X}$ satisfies Weak Reference Irrelevance, Category Cancellation, Category Monotonicity, and Category Continuity  for $\mathcal{K} $  if and only if it conforms to increasing CTM under $\mathcal{K} $.
 
\end{thm}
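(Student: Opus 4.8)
The plan is to prove the two directions separately, with necessity routine and sufficiency carrying the real content. For necessity I would fix a CTM representation $(U^k,\{U^k(\cdot\mid r)\})$ and verify each axiom. Because each $U^k(\cdot\mid r)$ is an increasing transformation of $U^k$, whenever $x\succsim^k y$ there is an $r$ with $x,y\in K^k(r)$ and $U^k(x\mid r)\ge U^k(y\mid r)$, hence $U^k(x)\ge U^k(y)$; thus $\succsim^k$ is contained in the complete preorder induced by $U^k$ and is therefore acyclic, which is Weak Reference Irrelevance (and which already shows $\succsim^{k*}$ is represented by $U^k$). Category Cancellation follows from the additive separability of $U^k$ on the common category, Category Monotonicity from strict monotonicity of the components $U^k_i$, and Category Continuity from continuity of each $U^k(\cdot\mid r)$ together with connectedness of $K^k(r)$, which makes the image an interval and confines the ``gap'' set of the last clause to a nowhere-dense set.

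For sufficiency I would argue one category at a time. First, Weak Reference Irrelevance lets me treat $\succsim^k$ as a single reference-free relation on $E^k=\bigcup_{r}K^k(r)$ and pass to its transitive closure $\succsim^{k*}$; the goal is to construct a single additively separable, continuous, strictly increasing $U^k$ on $E^k$ representing $\succsim^{k*}$ (with completeness on $E^k$ emerging from the representation itself). The key local observation is that every $x\in E^k$ lies in some open $K^k(r)$, so a small coordinate rectangle around $x$ sits inside $K^k(r)$; on that rectangle $\succsim_r$ coincides with $\succsim^k$, and Category Cancellation (for $n=2$, equivalently its $P2$ analog for $n\ge 3$), Category Monotonicity, and Category Continuity supply—after checking restricted solvability and the Archimedean property, both of which follow from within-category continuity and monotonicity—all the hypotheses of the classical additive representation theorem on a connected product domain \citep{krantz1971,debreu1959topological}. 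This yields a local additive representation $U^k=\sum_i U^k_i$ near each point, with each $U^k_i$ continuous and strictly increasing.

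The main obstacle, and the heart of the argument, is to splice these local additive representations into a single global one on the possibly non-rectangular region $E^k$. This is exactly the setting of the local-to-global additive representation result of \cite{chateauneuf1993local}: local additive representations that agree on overlaps extend to a global additive representation provided the domain and its level sets are suitably connected. Here the Structure Assumption delivers precisely these hypotheses—connectedness of $E^k$, of each coordinate slice $\{x\in E^k:x_i=s\}$, and of each indifference set $\{y\in E^k:x\sim^{k*}y\}$—so their theorem produces a global $U^k$ on $E^k$ that is additively separable with strictly increasing continuous components and represents $\succsim^{k*}$. This is the step where the connectedness conditions are indispensable and where essentially all of the technical care is concentrated.

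It remains, for each fixed $r$, to calibrate the $U^k$ across categories so that one function represents $\succsim_r$ on $\bigcup_i K^i(r)$. Since $\succsim_r$ is a complete preorder whose discontinuities are confined to the finitely many category boundaries, the induced order on indifference classes is a separable linear order and hence embeds order-isomorphically into $\mathbb{R}$; the empty-interior clause of Category Continuity rules out the order gaps that would obstruct this embedding. This produces $W_r:\bigcup_i K^i(r)\to\mathbb{R}$ representing $\succsim_r$ whose restriction to each $K^k(r)$ is, by within-category representability, an increasing transformation of $U^k$. I would then extend each such transformation to a continuous increasing function on the whole range of $U^k$ and set $U^k(\cdot\mid r)$ to be the resulting composition. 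By construction $U^k(\cdot\mid r)$ is a continuous increasing transformation of $U^k$, and for $x\in K^k(r)$, $y\in K^l(r)$ we have $x\succsim_r y\iff W_r(x)\ge W_r(y)\iff U^k(x\mid r)\ge U^l(y\mid r)$, which is the increasing CTM representation and completes the proof.
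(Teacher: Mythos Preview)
Your overall architecture matches the paper's and you correctly identify the Chateauneuf--Wakker local-to-global theorem as the engine for building each category utility $U^k$. The genuine gap is in how you invoke it. You write that completeness of $\succsim^{k*}$ on $E^k$ will ``emerge from the representation itself,'' but CW's Theorem~2.2 takes as a \emph{hypothesis} that the relation is a weak order on the whole domain; it does not manufacture completeness. The transitive closure $\succsim^{k*}$ is transitive by construction but not a priori complete: two points $x,y\in E^k$ need not lie in any common $K^k(r)$, and there is no reason a finite $\succsim^k$-chain connects them. The paper spends a separate lemma on exactly this---connect $x$ to $y$ by a path in $E^k$, cover the path by finitely many balls each inside some $K^k(r_j)$, and chain comparability through the cover---and another lemma on showing $\succ^{k*}$ has open contour sets. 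Both are prerequisites for CW, not byproducts, and both use the connectedness parts of the Structure Assumption in ways your sketch leaves implicit.

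Your cross-category step is also looser than the paper's. Rather than an abstract order-embedding of $\succsim_r$, the paper uses the second clause of Category Continuity to prove a trichotomy for each pair $K^i(r),K^j(r)$ (either one dominates the other pointwise, or there exist indifferent representatives), groups categories into ``indifference-sequence'' equivalence classes, totally orders those classes, and builds each $U^k(\cdot\mid r)$ recursively within a class. Your separable-linear-order argument could plausibly be completed, but as written it does not explain why the restriction of $W_r$ to $K^k(r)$ is a \emph{continuous} transformation of $U^k$; that is precisely where the empty-interior clause is doing work, and it needs more than ruling out order gaps.
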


Increasing CTM captures the behavior implied by the axioms, so we call Axioms \ref{ax: first}-\ref{ax:last_CTM} the CTM axioms.
Taken together, they establish that the DM acts rationally when restricting attention to alternatives categorized in the same way for a given reference point.
That is, CTM captures a DM who differs from the neoclassical model only when alternatives are categorized differently.
The theorem reveals that a number of other reference dependent models have been studied by the literature fall outside the scope of our analysis.
For instance, \cite{bhatia2013attention}, \cite{munro2003theory}, the non-constant loss averse version of \cite{TK91}, and the continuous version of the salient thinking model (see online appendix of \cite{bordalo2013salience} and the related  \cite{BGS_MAC}) all violate weak reference irrelevance for any specification of the category function.
We provide the details in Appendix \ref{sec: behavioral models that are not RPM}.

We provide a brief outline of how the proof works, and all omitted proofs can be found in the appendix. The axioms are sufficient for a ``local'' additive representation of $\succsim_r$ (and thus $\succsim^k$) on an open ball around each alternative within category $k$.
The Structure Assumption allows us to apply Theorem 2.2 of \cite{chateauneuf1993local}  to aggregate the local additive representation of $\succsim^k$ into a global one.  To do so, we must establish that the global preference is complete, transitive, monotone, and continuous.
We establish these properties for preference within each category by  showing that the transitive closure of each $\succsim^k$ is complete and suitably continuous.
The remainder of the proof shows that Categorical Continuity allows us to stitch the different within-category representations together into an overall utility function.

\subsection{Reweighting}

In all of the models discussed in Section \ref{sec: consumer choice}, the DM evaluates the difference between alternatives categorized in the same way similarly.
That is, regardless of the category, the DM agrees on how much better a value of $x$ versus $y$ is in dimension $i$.
Categorization affects only how much weight she puts on each dimension.
This is captured by the following axiom.

\begin{ax}[Reference Interlocking]
	\label{ax: ref_inter}
	For any $a,b,a',b',x',y',x,y\in  X$ and categories $k,j$ with
	$x_{-i}=a_{-i}$, $y_{-i}=b_{-i}$,  $x'_{-i}=a'_{-i}$, $y'_{-i}=b'_{-i}$, $x_i=x'_i$, $y_i=y'_i$, $a_i=a'_i$, $b_i=b'_i$:\\
	if $x \sim^k y$, $a \succsim^k b$, and $x' \sim^j y'$, then  it does not hold that $b' \succ^j a'$.
\end{ax}

The term ``Reference Interlocking'' comes from \cite{TK91}. If  each $\succsim^k$ is complete, then their statement of it is equivalent given the other axioms. 
Roughly, the DM agrees on the difference in utilities along a given dimension regardless of how an alternative is categorized.
To interpret, observe that the first pair of comparisons reveals that the difference between $a_i$ and $b_i$ exceeds that between $x_i$ and $y_i$ when the alternatives belong to category $k$.
For alternatives categorized in $j$, the DM should not reveal the opposite ranking.
We defer to the above paper for a detailed discussion.

\begin{thm}
	\label{thm: RI}
	Suppose that $\{\succsim_r\}_{r\in X}$ conforms to increasing  CTM under $\mathcal{K}$ and each $E^k$ is connected.
	For each dimension $i$, there exist a utility index $u_i$  and a weight $w^k_i>0$ for each category $k$ so that each category utility $U^k$ is cardinally equivalent to one that maps each $x \in E^k$ to $\sum_i w^k_i u_i(x_i) $ if and only if Reference Interlocking holds.

\end{thm}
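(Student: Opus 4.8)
The plan is to prove both implications, treating necessity as a direct computation and reserving the real work for sufficiency, where I will reconstruct a common dimensionwise utility index from the within-category difference orderings that Reference Interlocking forces to agree.

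\emph{Necessity.} Suppose each $U^k$ is cardinally equivalent on $E^k$ to $x \mapsto \sum_i w^k_i u_i(x_i)$ with $w^k_i>0$ and a common $u_i$. Take categories $k,j$, dimension $i$, and points as in Axiom \ref{ax: ref_inter}. Since $x_{-i}=a_{-i}$ and $y_{-i}=b_{-i}$, the $-i$ contributions to $U^k$ coincide in the pairs $(x,a)$ and $(y,b)$, so subtracting the equality encoding $x\sim^k y$ from the inequality encoding $a\succsim^k b$ and dividing by $w^k_i>0$ yields the category-free comparison $u_i(a_i)-u_i(b_i)\geq u_i(x_i)-u_i(y_i)$. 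Because the \emph{same} function $u_i$ governs category $j$ and $a'_{-i}=x'_{-i}$, $b'_{-i}=y'_{-i}$, $x'\sim^j y'$ together with this inequality forces $a'\succsim^j b'$, i.e. $\neg(b'\succ^j a')$ (completeness of $\succsim^{j*}$, which holds under CTM, lets me read the negated strict preference as weak preference). The whole point is that the $-i$ terms and the category weight cancel, leaving a comparison of $u_i$-differences that cannot depend on the category.

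\emph{Sufficiency.} Assume Reference Interlocking. From the increasing CTM hypothesis I have, for each $k$, an additively separable representation $U^k=\sum_i U^k_i$ of $\succsim^{k*}$ on the connected set $E^k$, with each $U^k_i$ continuous and strictly increasing; by the uniqueness theory of additive conjoint measurement \citep{krantz1971} this is pinned down up to a common positive rescaling and componentwise shifts. Fix a dimension $i$. For each category $k$ I define a revealed $i$-difference ordering by calibrating pairs of $i$-levels against a common ``yardstick'' gap in the remaining coordinates, all within $E^k$; by construction this ordering is represented by the increments of $U^k_i$. The core step is to read Axiom \ref{ax: ref_inter} as the statement that these difference orderings coincide across categories: the unprimed points calibrate the gap $(a_i,b_i)$ against $(x_i,y_i)$ in category $k$, the primed points re-calibrate the same $i$-levels in category $j$, and the axiom says precisely that weak dominance of $(a_i,b_i)$ over $(x_i,y_i)$ in $k$ transfers to $j$; relabeling $k\leftrightarrow j$ gives the converse. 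Two continuous strictly increasing functions inducing the same difference ordering are positive affine transforms of one another (a Cauchy-equation argument applied to $U^j_i\circ (U^k_i)^{-1}$, whose increments then depend only on the length of the interval), so $U^j_i=\gamma^{jk}_i U^k_i+\delta^{jk}_i$ with $\gamma^{jk}_i>0$.

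To assemble the representation I fix a reference category, set $u_i:=U^1_i$ and $w^k_i:=\gamma^{k1}_i>0$, so that $U^k_i=w^k_i u_i+\delta^{k1}_i$ and hence $U^k(x)=\sum_i w^k_i u_i(x_i)+\text{const}$, which is cardinally equivalent to $\sum_i w^k_i u_i(x_i)$; positivity of the weights is inherited from the common increasing direction of the components. The step I expect to be the main obstacle is the measurement setup underlying the core step: I must check that the calibrating $-i$ coordinates actually exist inside each $E^k$ (solvability of the $\sim^k$ equations), that the $i$-difference orderings are defined on a common range of $i$-levels shared by all categories so they can be compared at all, and that the domains are rich enough—via connectedness of $E^k$ and its $i$-sections together with continuity and strict monotonicity of the $U^k_i$—for the induced difference structure to be continuous and Archimedean, so that the affine-uniqueness conclusion applies. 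When $E^k=\mathbb{R}^n_{++}$ for every $k$, which holds for every model of Section \ref{sec: Model} except prototype theory, these requirements are automatic, and in general they follow from the assumed connectedness.
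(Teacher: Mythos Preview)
Your overall architecture matches the paper's: show that Reference Interlocking forces the dimension-$i$ difference orderings induced by $U^k_i$ and $U^j_i$ to agree, deduce that $U^j_i$ is a positive affine transform of $U^k_i$, then assemble the common $u_i$ and weights. The necessity computation is fine.

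The gap is in your ``core step.'' You treat Reference Interlocking as if it delivers a \emph{global} statement that the $i$-difference orderings coincide, but the axiom only speaks when the eight calibrating points actually live in the right categories for some references, i.e. when $x\sim^k y$, $a\succsim^k b$, $x'\sim^j y'$ can all be witnessed. For a given quadruple of $i$-levels this need not be possible globally; it is only guaranteed \emph{locally}, because around any $z$ in both $E^k_i$ and $E^j_i$ one can find small balls contained in $K^k(r^k)$ and $K^j(r^j)$ where the required indifference and preference comparisons can be manufactured. The paper makes exactly this move: it first proves a local claim---for every $z\in E^k_1\cap E^j_1$ there is a neighborhood $O_z$ on which $U^k_1$-differences and $U^j_1$-differences are ordered identically---and then promotes the local affine identification $U^j_1=\alpha U^k_1+\beta$ to a global one by a compactness/connectedness patching argument over a covering of $[x_*,x^*]$ by such neighborhoods, showing the set on which the global constants agree is nonempty, open, and closed. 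Your final sentence (``in general they follow from the assumed connectedness'') is precisely where this local-to-global argument needs to go; as written it is an assertion, not a proof, and it is the substantive content of the sufficiency direction.
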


All of the models in Section \ref{sec: consumer choice} satisfy the axiom, and are thus special cases of increasing CTM satisfying Reference Interlocking.
For instance, differences in the salient dimension of BGS receive higher weight, but the relative size of two given differences in the same dimension is the same regardless of whether both are salient or both are not.
The axiom implies that the utility index within each category must be the same, up to an increasing, affine transformation.

\subsection{Behavioral Foundation for  Affine CTM}
In this section, we explore when an Affine CTM exists. That is, we provide conditions under which $U^k(\cdot|r)$ a positive affine transformation of $U^k(\cdot|r')$ for any $r,r'$. All of the models from Section \ref{sec: consumer choice} fall into this class.%
\footnote{For MO, this is true only when $c(r)<\infty$.} 

Unsurprisingly, the key restriction relative to CTM is that tradeoffs across categories are affine. As is usual, this is captured by a form of linearity, or the ``Independence Axiom.''  We require it to hold only when alternatives combined belong to the same category, and adjust for the curvature of the utility index.

To state the key axiom, we define an operation $\oplus^k$ along similar lines as \cite{ghirardato2003subjective}.  For $x,y\in \mathbb{R}$ and a category $k$, $\frac12 x\oplus^k_i \frac12 y=z$ when there exists $a,b$ such that $(x_i,a_{-i})\sim^{k*} (z_i,b_{-i})$ and $(z_i,a_{-i})\sim^{k*} (y_i,b_{-i})$.
If $\succsim^k$ has an additive representation, then $\frac12 U^k_i(x)+ \frac12 U^k_i(y)= U^k_i(z)$. 
Define $\oplus^k$ similarly for alternatives: $\frac12 x \oplus^k \frac12 y=z$ if and only if $z_i =\frac12 x_i\oplus^k_i \frac12 y_i$ for each dimension $i$.
Finally, define $\alpha x \oplus^k (1-\alpha)y$ by taking limits.%
\footnote{In general, $\alpha x \oplus^k (1-\alpha)y$ need not exist. However, it does exist ``locally,'' which is all we require in the proof. That is, if $x \in K^k(r)$, then there exists an open set $O$ with $x\in O$ on which $\alpha y \oplus^k (1-\alpha)z$ exists for every $\alpha \in [0,1]$ and $y,z \in O$.}
We note that if $U^k_i$ is linear, then $\alpha x \oplus^k_i (1-\alpha) y=\alpha x+(1-\alpha) y$.

\begin{ax}[Affine Across Categories (AAC)]
	\label{ax: AAC}
	For any $r \in X$, $x,x',\alpha x \oplus^j (1-\alpha) x' \in K^j(r)$, and $y,y',\alpha y \oplus^k (1-\alpha) y' \in K^k(r)$: if $x \succsim_r y$ and $x' \succsim_{r} y'$, then $\alpha x \oplus^j (1-\alpha) x' \succsim_r \alpha y \oplus^k (1-\alpha) y'$. 
\end{ax}

This axiom is a natural adaptation of the linearity axiom, a close relative of the independence axiom.
If we strengthened Affine Across Categories to be stated using the traditional linearity condition, then we would obtain a representation where each $U^k(\cdot|r)$ is itself an affine function.
Otherwise, it requires that the $\oplus^k$ operation preserves indifference.

The second axiom deals with a technical issue.  
\begin{ax}[Unbounded]
	\label{ax: unbounded}
	For any $r\in X$: if $K^k(r)$ contains a sequence $x_n$ so that $U^k(x_n) \rightarrow \infty$ ($-\infty$), then for any $x \in X$ there exists $x^*\in K^k(r)$ so that $x^* \succ_r x$ ($x \succ_r x^*$).
\end{ax}

We note that $U^k$ is unique up to a positive affine transformation. Hence whenever the utility of some sequence goes to infinity for some representation of $\succsim^k$, it must also converge to infinity for any other representation as well.
While the axiom can be stated in terms of primitives, 
we instead state it in terms of the $U^k$.%
\footnote{The statement in terms of primitives involves standard sequences and does not reveal key aspects of behavior, so we instead present the simpler and easier to interpret one above.
	In special cases, this is easy to do. For instance, if $U^k$ is linear, then the axiom simply states that if $K^k(r)$ is an unbounded set, then the conclusion of the above axiom holds.}
It ensures that a category containing alternatives whose utility goes to positive (negative) infinity must contain an alternative better (worse) than any other given alternative.
If it failed, then no affine transformation of the category utility would represent the preference.


\begin{thm}
	\label{result: Affine CTM}
	Assume the Structure Assumption holds.
	Then, 
	$\{\succsim_r\}_{r\in X}$ satisfies the CTM axioms,  Affine Across Categories, and Unbounded for $\mathcal{K}$ if and only if it conforms to Affine Increasing CTM  under $\mathcal{K}$.
\end{thm}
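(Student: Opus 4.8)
The plan is to prove necessity by direct verification and sufficiency by reducing affinity of the reference-dependent transformations to the uniqueness of the generator of a quasi-arithmetic mean. Since the hypotheses include the CTM axioms, Theorem \ref{thm: weak CTM} already gives increasing CTM, so throughout I may assume category utilities $U^k$ and continuous increasing transformations $g_{k,r}$ with $U^k(\cdot|r)=g_{k,r}(U^k(\cdot))$; write $V(\cdot|r)$ for the resulting representation of $\succsim_r$. The only work is to upgrade ``increasing'' to ``affine.''

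For necessity, suppose the representation is affine, so $U^k(\cdot|r)=a_k(r)U^k(\cdot)+b_k(r)$ with $a_k(r)>0$. By construction $\oplus^k$ is $U^k$-averaging, i.e. $U^k(\alpha x\oplus^k(1-\alpha)x')=\alpha U^k(x)+(1-\alpha)U^k(x')$, so affinity yields $U^j(\alpha x\oplus^j(1-\alpha)x'\,|\,r)=\alpha U^j(x|r)+(1-\alpha)U^j(x'|r)$: the value of a mixture is the convex combination of the values. Axiom \ref{ax: AAC} then follows from monotonicity of convex combinations, and Axiom \ref{ax: unbounded} follows because $U^k(x_n)\to\infty$ forces $U^k(x_n|r)\to\infty$, so $K^k(r)$ contains alternatives strictly preferred to any fixed $x$.

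For sufficiency, fix $r$ and let $I_k$ be the (interval) range of $V(\cdot|r)$ on $K^k(r)$. For $v,v'\in I_k$ pick $x,x'\in K^k(r)$ with $V(x|r)=v$, $V(x'|r)=v'$ and set $M_k(v,v'):=V(\tfrac12 x\oplus^k\tfrac12 x'\,|\,r)$. Applying Axiom \ref{ax: AAC} within category $k$ to $x\succsim_r\hat x$, $x'\succsim_r\hat x'$ and then to the reversed comparisons shows that the $\oplus^k$-mixture of indifferent alternatives is indifferent, so $M_k$ is well defined. Applying Axiom \ref{ax: AAC} across two categories (to $x\sim_r y$, $x'\sim_r y'$ with $x,x'\in K^j(r)$, $y,y'\in K^k(r)$, both ways) gives $M_j=M_k$ on the overlap of their ranges, so there is a single mean $M$. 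Because $\oplus^k$ is $U^k$-averaging, on $I_k$ one has $M(v,v')=g_{k,r}\!\left(\tfrac12 g_{k,r}^{-1}(v)+\tfrac12 g_{k,r}^{-1}(v')\right)$, that is, $M$ is the quasi-arithmetic mean generated by $g_{k,r}^{-1}$. The classical uniqueness theorem for such generators then gives, on each overlap $I_j\cap I_k$ with nonempty interior, a positive affine relation $g_{j,r}^{-1}=\alpha_{jk}\,g_{k,r}^{-1}+\beta_{jk}$. Fixing a base category $j_0$ and reparametrizing by $\psi:=g_{j_0,r}^{-1}$, the representation $W(\cdot|r):=\psi(V(\cdot|r))$ restricts on each $K^k(r)$ to a positive affine transform $a_k(r)U^k+b_k(r)$ of $U^k$; these are the desired Affine CTM category utilities, and they represent $\succsim_r$ because $W(\cdot|r)$ does.

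The main obstacle is gluing the pairwise affine relations $(\alpha_{jk},\beta_{jk})$ into a single global calibration valid on all of $\bigcup_k K^k(r)$. This is exactly where the Structure Assumption and Axiom \ref{ax: unbounded} enter: connectedness of each $E^k$ together with density and disjointness of the categories forces the ranges $I_k$ to form an overlapping chain covering the full range of $V(\cdot|r)$, so the constants propagate consistently, while Unbounded guarantees that wherever $U^k$ is unbounded the interval $I_k$ is genuinely unbounded, ruling out gaps that would break the chaining or prevent $\psi\circ V$ from being affine in $U^k$ at the ends of the range. A secondary technical point, flagged already after the definition of $\oplus^k$, is that mixtures exist only locally; hence the functional identity for $M$ must first be established on small neighborhoods and then extended using continuity and the connectedness supplied by the Structure Assumption.
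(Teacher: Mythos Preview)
Your quasi-arithmetic-mean route is a legitimate alternative to the paper's indifference-sequence construction, and the core mechanism---AAC forces the $\oplus^k$-means to agree across categories, hence the generators $g_{k,r}^{-1}$ are affinely related on overlaps---is correct. The paper instead builds the affine calibration directly: Lemma \ref{lem: renormalization} renormalizes one category against another using a single indifference point and AAC, and then Lemmas \ref{lem: shorter sequence}--\ref{lem: utility on IS} propagate this along chains of categories linked by indifference. Your approach is slicker on the overlap part; the paper's is more explicit about the combinatorics of gluing.

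There is, however, a genuine gap in your gluing step. You assert that the Structure Assumption forces the value ranges $I_k$ to form an overlapping chain, but this is false: the Structure Assumption concerns $E^k=\bigcup_r K^k(r)$ and says nothing about the ranges of $V(\cdot|r)$ on the individual $K^k(r)$'s for a fixed $r$. It is entirely possible that, for a given $r$, every alternative in $K^j(r)$ is strictly preferred to every alternative in $K^k(r)$, so $I_j\cap I_k=\emptyset$ and your mean identity yields no relation between $g_{j,r}$ and $g_{k,r}$. The paper handles this case explicitly: it partitions the categories into $\bowtie_r$-equivalence classes (categories linked by chains of indifference), shows via Lemma \ref{lem: ordering of ISes} that distinct classes are totally ordered by $\succ_r$, represents each class affinely using the overlap argument, and then stacks the classes by shifting. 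Your reparametrization $W=\psi\circ V$ with $\psi=g_{j_0,r}^{-1}$ will in general \emph{not} be affine in $U^k$ on a category $k$ whose range is disjoint from $I_{j_0}$, because $g_{j_0,r}^{-1}\circ g_{k,r}$ has no reason to be affine there.

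Your account of Unbounded is also off. Its role is not to ``rule out gaps in the chaining'' but to ensure that categories with unbounded $U^k$-range sit at the top or bottom of the $\bowtie_r$-order, so that the intermediate equivalence classes have bounded value ranges and can be affinely shifted and stacked without collision; see the final paragraphs of the paper's proof. The fix to your argument is to run the quasi-arithmetic-mean step only within each $\bowtie_r$-class, then order and stack the classes as the paper does, invoking Unbounded exactly for that last step.
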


	All the models discussed in Section \ref{sec: Model} fall into the class of Affine CTM, so the result reveals the behavior all have in common. 	Relative to CTM, Affine Across Categories imposes stronger requirements on how the DM relates alternatives in different categories. Not only does the DM evaluate utility within a category using an additive function, but the additive structure persists across categories.	Moreover, this aids with interpreting utility differences. If every pair of categories contains alternatives indifferent to one another, the entire representation is unique up to a common positive affine transformation. We call the combination of Axioms \ref{ax: first}-\ref{ax:last_CTM} and \ref{ax: AAC}-\ref{ax: unbounded} the Affine CTM axioms.

\subsection{Behavioral Foundation for Strong CTM}

For a Strong CTM, changing the reference point does not reverse the ranking of two products unless it also changes their categorization.
The following axiom imposes this.

\begin{ax}[Reference Irrelevance] For any $x,y,r,r'\in X$:\\ 
	\label{ax:SC}
	if  $x\in K^k(r) \bigcap K^k(r')$ and $y\in K^l(r) \bigcap K^l(r')$, then  $x \succsim_{r} y$ if and only if  $x \succsim_{r'} y$.
\end{ax}

For the general CTM, the reference point influences choice trough two channels: the category to which it belongs and its valuation. The axiom eliminates the latter.
When comparing two alternatives across different reference points, the DM's relative ranking does not change when neither's category changes. This property greatly limits the effect of the reference point.  In fact, a sufficiently small change in the reference never leads to a preference reversal. 
\begin{thm}
	\label{result: strong CTM}
	
	Assume the Structure Assumption holds and for any categories $i,j$ and any $ r \in X$, there exists $x \in K^i(r)$ and $y  \in K^j(r)$ with $x \sim_{r} y$.
	Then, $\{\succsim_r\}_{r\in X}$ satisfies the Affine CTM axioms and Reference Irrelevance for $\mathcal{K}$  if and only if conforms to Strong, Increasing CTM under $\mathcal{K}$.
\end{thm}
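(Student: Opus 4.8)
The plan is to prove the two directions separately, leaning on Theorem~\ref{result: Affine CTM} for the substantive one. For the \emph{if} direction, observe that a Strong Increasing CTM is an Affine Increasing CTM (take the affine transformation to be the identity), so by Theorem~\ref{result: Affine CTM} it satisfies the CTM axioms, Affine Across Categories, and Unbounded. Reference Irrelevance is then immediate: if $x\in K^k(r)\cap K^k(r')$ and $y\in K^l(r)\cap K^l(r')$, then both $x\succsim_r y$ and $x\succsim_{r'}y$ are equivalent to $U^k(x)\ge U^l(y)$, hence to each other.

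For the \emph{only if} direction, Theorem~\ref{result: Affine CTM} already yields an Affine Increasing CTM: additively separable, increasing category utilities $U^k$ together with, for each $r$ and $k$, scalars $\alpha^k(r)>0$ and $\beta^k(r)$ such that $U^k(\cdot\,|r)=\alpha^k(r)U^k(\cdot)+\beta^k(r)$ represents $\succsim_r$ across categories. Strong CTM asks us to choose category utilities absorbing all $r$-dependence, so it suffices to show that for every $r$ the affine transformation is the \emph{same} across categories, i.e.\ $\alpha^k(r)\equiv\alpha(r)$ and $\beta^k(r)\equiv\beta(r)$ with $\alpha(r)>0$ independent of $k$. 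Indeed, a common positive affine transformation preserves the cross-category order, since for $x\in K^k(r),\,y\in K^l(r)$ one has $x\succsim_r y\iff \alpha(r)U^k(x)+\beta(r)\ge\alpha(r)U^l(y)+\beta(r)\iff U^k(x)\ge U^l(y)$; hence the fixed family $(U^k)_k$ itself represents every $\succsim_r$, which is exactly Strong Increasing CTM under $\mathcal K$.

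The core step is local. Fix a base reference $\rho$ and renormalize so that $U^k=U^k(\cdot\,|\rho)$ (legitimate, since an increasing affine image of an additively separable, increasing function is again additively separable and increasing); then $\alpha^k(\rho)=1$ and $\beta^k(\rho)=0$ for all $k$. I claim that for $r$ close to $\rho$ and any two categories $k,l$ one has $\alpha^k(r)=\alpha^l(r)$ and $\beta^k(r)=\beta^l(r)$. Take $x$ in $K^k(r)\cap K^k(\rho)$ and $y$ in $K^l(r)\cap K^l(\rho)$, and write $s=U^k(x)$, $t=U^l(y)$. Reference Irrelevance gives $x\succsim_r y\iff x\succsim_\rho y$, that is $\alpha^k(r)s+\beta^k(r)\ge\alpha^l(r)t+\beta^l(r)\iff s\ge t$. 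By continuity of $\mathcal K$ the overlaps have nonempty interior for $r$ near $\rho$, so as $x,y$ vary the pair $(s,t)$ sweeps a two-dimensional rectangle; and by the hypothesis that every pair of categories contains mutually indifferent alternatives at $\rho$, together with continuity, the diagonal $\{s=t\}$ meets this rectangle. Two linear inequalities cutting out the same region of a rectangle crossed by their common boundary must have proportional coefficients, and matching against $s\ge t$ forces $\alpha^k(r)=\alpha^l(r)$ and $\beta^k(r)=\beta^l(r)$. Since the indifference hypothesis covers \emph{all} pairs, the coefficients are common across all $m$ categories for $r$ near $\rho$.

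Finally I globalize by chaining. The local step applies around any base point, producing, for $r$ in a neighborhood of that point, a single positive affine map carrying the representation there to the one at $r$. Given an arbitrary $r$, choose a path in $X$ from $\rho$ to $r$ (possible since $X$ is path-connected), cover its compact image by finitely many such neighborhoods, and compose the resulting common affine maps—a composition of common positive affine maps is again one—to conclude that $(U^k(\cdot\,|r))_k$ is a common positive affine transform of $(U^k)_k$. This gives $\alpha^k(r)\equiv\alpha(r)$ and $\beta^k(r)\equiv\beta(r)$ for every $r$, completing the argument. The main obstacle is nonvacuousness of the local step: Reference Irrelevance constrains the coefficients only where the indifference boundary between two categories actually passes through the region in which both alternatives stay in their original categories across both references; where one category dominates the other throughout the overlap, the axiom is silent and the coefficients are unconstrained. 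This is precisely why the theorem assumes mutual indifference between every pair of categories at every reference—combined with the continuity of $\mathcal K$, it guarantees the boundary enters the overlap for nearby $r$ and pins the coefficients down.
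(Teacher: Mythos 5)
Your proof is correct, and it shares the paper's overall skeleton: both directions reduce to Theorem \ref{result: Affine CTM}, and the substantive step is showing that Reference Irrelevance, the mutual-indifference hypothesis, and continuity of $\mathcal{K}$ rigidify the reference-dependent affine transformations. Where you genuinely differ is in how that rigidity is established and globalized. The paper normalizes category $1$ so that $U^1(\cdot|r)$ is identical for all $r$, then argues by contradiction: supposing $U^k(\cdot|r)\neq U^k(\cdot|r')$ for some $k$, it takes a sequence of ``disagreeing'' references converging to a closest such point $\hat r$, a sequence of ``agreeing'' references converging to the same $\hat r$, and uses an indifference $x^k\sim_{\hat r}x^1$ (preserved under nearby references by Reference Irrelevance) to transfer the anchored, reference-free $U^1$ values onto category $k$, forcing the two sequences' utilities to agree --- a contradiction. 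You instead prove a local lemma with no anchor category --- around any base reference the maps $(\alpha^k(r),\beta^k(r))$ are common across all categories, pinned down by the biconditional $s\ge t\iff\alpha^k(r)s+\beta^k(r)\ge\alpha^l(r)t+\beta^l(r)$ holding on a utility rectangle crossed by the diagonal --- and then globalize by covering a path of references and composing common affine maps. Both are valid; yours is more modular and makes explicit exactly where the mutual-indifference hypothesis bites (it places the diagonal in the interior of the rectangle, without which the coefficients are unconstrained, as you note), while the paper's anchoring trick avoids pairwise bookkeeping and path-chaining, handling globalization in a single infimum/sequence stroke. One small point of care in your local step: the overlaps $K^k(r)\cap K^k(\rho)$ need not be connected, so the swept set of utility pairs need not literally be a rectangle; but since the mutually indifferent pair admits open balls inside the overlaps for $r$ near $\rho$ (the paper's Lemma \ref{lem: usc}), the images contain a genuine open rectangle around the diagonal point, which is all your half-plane argument requires.
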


Since BGS, MO, and PT are Strong CTM, Theorem \ref{result: strong CTM} characterizes the behavior they have in common. While the reference  plays a role in categorization, it plays no role in choice after categorization is taken into account.
TK, which belongs to Affine CTM but not Strong CTM, must therefore violate reference irrelevance.

\subsection{Comparing Models of Riskless Choice}
	TK, BGS, MO, PT,  and the neoclassical model all conform to Affine CTM, so Theorems \ref{thm: weak CTM}  and \ref{result: Affine CTM} describe the behavior that they have in common. 
	However, the analysis so far, as well as the functional forms of the models, leaves open the question of what behavior distinguishes them.
	Of course, they differ in how alternatives are categorized, but the models also reflect distinct behavior within and across categories. 
	
	In addition to Reference Irrelevance, they are distinguished by whether they satisfy two classic axioms: Monotonicity and Cancellation, the unrestricted versions of Category Monotonicity and Category Cancellation.%
	\footnote{The formal statements are obtained by dropping the requirement in those two axioms that the alternatives belong to the same category.}
	The first requires that a dominant bundle is chosen, and the latter that an additive structure obtains.
	The representation theorem of \cite{TK91}  imposes those two axioms in addition to continuity. In Appendix \ref{pf: thm:TKinRPM}, we show that an Affine CTM with a Gain-Loss category function satisfies the two classic axioms and continuity if and only if it has a TK representation.
	We provide a detailed examination of the BGS model in Section \ref{sec: BGS}.
	
	Table \ref{table:comparison}  compares the four models in terms of Reference Irrelevance, Monotonicity and Cancellation, when BGS, TK, MO, and PT do not coincide with the neoclassical model. 
	Only the neoclassical model satisfies all conditions; none of the other four do.   On the one hand, BGS and PT satisfy Reference Irrelevance but violate Monotonicity and Cancellation. On the  other, TK maintains Monotonicity and Cancellation but violates Reference Irrelevance.  Finally, MO satisfies all but Cancellation.\footnote{Propositions \ref{thm:BGS preference} and \ref{thm:TKinRPM} give the \cmark's of the table for BGS and TK. It is routine to verify that MO satisfies Monotonicity and Reference Irrelevance and the PT satisfies RI. We provide examples showing the other properties are violated in Appendix \ref{sec: examples table}.}
	
	\begin{table}[h!]
		\begin{center}
			\begin{tabular}{lc|ccccc}
				\hline \hline
				~                  & ~     & Neoclassical & \ BGS  \ & \  \ TK \ \ & \ \ MO \ \ &\ \  PT\ \  \\ \hline 
				CTM       & ~     & \cmark              &  \cmark   & \cmark  & \cmark & \cmark \\ \hline
				Monotonicity       & ~     & \cmark              &  \xmark   & \cmark  & \cmark & \xmark \\ \hline
				Reference Irrelevance & ~     & \cmark               & \cmark  &  \xmark  & \cmark%
				\footnotemark & \cmark \\ \hline
				Cancellation 	   & ~	   & \cmark 	          &  \xmark	& \cmark  &  \xmark & \xmark  \\   \hline \hline
			\end{tabular}
		\end{center}
		\caption{Comparisons of Models}\label{table:comparison}
	\end{table}
	
	\footnotetext{Whenever $c(r)=c(r')$ for every $r,r' \in X$.}
	We provide a plausible example violating the Cancellation axiom, and hence behavior inconsistent with TK. Then, we illustrate BGS can accommodate this example even without requiring a shift in the reference point. While the example is one simple test to distinguish BGS from TK, it is also powerful as it works for a fixed reference point.

	\begin{example} 
		\label{ex: BGS cancellation}
		\textit{Consider a consumer who visits the same wine bar regularly. The bartender occasionally offers promotions. The customer prefers to pay $\$8$ for a glass of French Syrah rather than $\$2$ for a glass of Australian Shiraz. At the same time, she prefers to pay $\$2$ for a bottle of water rather than $\$10$ for the glass of French Syrah. However, without any promotion in the store, she prefers paying $\$10$ for Australian Shiraz to paying $\$8$ for water.}
	\end{example}

	The behavior in this example is both intuitively and formally consistent with the salient thinking model of BGS.\footnote{Implicitly, the example reveals that the quality of French Syrah is higher than Australian Shiraz which is in turn higher than water. The numerical value of quality assigned to each beverage is irrelevant to the violation of Cancellation. For examples of qualities so that choice can be represented by the BGS model, one can calculate that $(-8,q_{fs}) \succ_r (-2, q_{as})$, $(-2,q_{w}) \succ_r (-10, q_{fs})$ and $(-10,q_{as}) \succ_r (-8,q_w)$ for $q_{fs}=8$, $q_{as}=6.9$,  $q_w=5.1$, and the reference point $r=(\frac12(-10+-8),\frac12 (q_w +q_{as}))$ when $w=0.6$.}
	Without any promotion, the consumer expects to pay a high price for a relatively low quality selection. 
	When choosing between Syrah or Shiraz, the consumer focuses on the French wine's sublime quality, and she is willing to pay  at least $\$ 6$ more for it. When choosing between water and Syrah, the low price of water stands out and she reveals that the gap between wine and water is less than $\$8$. However, when there is no promotion, she focuses again on the quality, and she is willing to pay an additional $\$ 2$ for even her less-preferred Australian Shiraz over water. Notice that this explanation does not require that the reference points are different. Since the consumer visits this bar regularly, intuitively, her reference point should be fixed and stable.

\subsection{Revealing categories}\label{sec:revealing}

Up to now, we have taken the category function as known.
This subsection explores the extent to which one can infer categories directly from choices. We  first show this can be done when the categorization of the object alters the trade-offs between attributes, so local behavior directly reveals how an object is categorized. 
Finally, we outline an alternative approach applicable in the presence of discontinuities across categories, even when trade-offs are unaffected by categorization.

Our identification of the categories is based on local indifference sets (LIS).
For a CTM with categories $k$ and $l$, we  write $LIS^k(x)=LIS^l(x)$ if there exists a neighborhood $O$ of $x$ so that 
$$U^k(y)=U^k(x) \iff U^l(y)=U^l(x) \text{ for all }y\in O;$$
otherwise, $LIS^k(x)\neq LIS^l(x)$.
If $LIS^k(x)=LIS^l(x)$, then any alternative indifferent to $x$ when $x$ is categorized as $k$ is also indifferent to $x$ when it is categorized as $l$, provided that it is not too far away from $x$. In neoclassical consumer theory  with sufficiently differentiable utility, this is equivalent to the marginal rate of substitution at the bundle $x$  being equal across categories. 
Put another way, the trade-off between the attributes does not depend on how the alternative is categorized. 
If $LIS^k(x)\neq LIS^l(x)$, then categorization affects trade-offs between attributes. We require the latter, i.e., a different pattern of substitution within each category.

\begin{proposition}
\label{prop: general category ID} Let $\{\succsim_r\}_{r\in X}$ be  a CTM. For any category $k$ such that $LIS^k(x)\neq LIS^l(x)$ for every $x \in X$ and category $l \neq k$, category $k$ is uniquely identified.
\end{proposition}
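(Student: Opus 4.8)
The plan is to show that membership in category $k$ is pinned down by a single directly observable feature of the data — the local indifference sets of the observed preferences — and that the distinctness hypothesis is exactly what prevents any other category from mimicking it. Write $LIS_r(x)$ for the germ at $x$ of the indifference set $\{y : y \sim_r x\}$; this is recoverable from $\{\succsim_r\}_{r\in X}$ alone. The first and routine step relates it to the representation: fix $r$ and $x \in \bigcup_i K^i(r)$. Since each category is open, $x$ lies in the interior of the category $j$ to which it belongs, so some neighborhood $O$ of $x$ is contained in $K^j(r)$, and on $O$ the preference $\succsim_r$ is represented by $U^j(\cdot\,|r)$, a strictly increasing transformation of $U^j$. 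Increasing transformations preserve indifference sets, so as germs $LIS_r(x) = LIS^j(x)$. Because $\bigcup_i K^i(r)$ is dense, the only points to which the identity does not apply form the nowhere-dense complement $X \setminus \bigcup_i K^i(r)$, and these lie in no category (in particular not in $K^k(r)$), so they are harmless.

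The key step is the membership test. I claim that for $x \in \bigcup_i K^i(r)$,
$$x \in K^k(r) \quad\Longleftrightarrow\quad LIS_r(x) = LIS^k(x).$$
The forward direction is the identity of the first step with $j=k$. For the converse, suppose $LIS_r(x)=LIS^k(x)$ while $x\in K^l(r)$ with $l\neq k$; then the first step gives $LIS^l(x)=LIS_r(x)=LIS^k(x)$, contradicting the hypothesis that $LIS^k(x)\neq LIS^l(x)$ for every $x$ and every $l\neq k$. Thus the distinctness assumption is precisely what rules out false positives: no alternative can exhibit the category-$k$ trade-off pattern unless it is genuinely categorized as $k$. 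Intersecting the test set with $\bigcup_i K^i(r)$ then recovers $K^k(r)$ exactly, and continuity of $K^k(\cdot)$ lets me pin down behavior at limiting references.

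The main obstacle is that the template $LIS^k(\cdot)$ is itself defined through the representation $U^k$ and so is not manifestly observable; establishing genuine (cross-representation) identification requires showing that the whole field $x\mapsto LIS^k(x)$ over $E^k=\bigcup_r K^k(r)$ is determined by $\{\succsim_r\}$ alone. I would handle this by propagation. By the first step, wherever the data place $x$ in category $k$ the observed $LIS_r(x)$ already equals $LIS^k(x)$, so the category-$k$ field coincides with an observable field on a relatively open subset of $E^k$. Continuity of $U^k$, hence of $x\mapsto LIS^k(x)$, together with connectedness of $E^k$ (guaranteed by the Structure Assumption), forces this to be the unique continuous field of additively separable local indifference sets extending it, while distinctness guarantees it never agrees pointwise with a rival category's field. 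Consequently any other CTM representation of the same data must assign exactly $K^k(r)$ to the unique category displaying this field, which is the asserted uniqueness. I expect the genuinely delicate part to be not any single local argument but the global bookkeeping: verifying that the propagated field is single-valued across all of the connected region $E^k$ and that matching to a possibly differently-indexed, differently-refined category function in a rival representation still returns the same set.
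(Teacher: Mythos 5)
Your first two steps coincide with the paper's own argument: inside its category, $\succsim_r$ is locally represented by an increasing transformation of $U^j$, so the observed germ $LIS_r(x)$ equals $LIS^j(x)$; and the distinctness hypothesis rules out false positives, so on $\bigcup_i K^i(r)$ the test ``$LIS_r(x)=LIS^k(x)$'' holds exactly on $K^k(r)$. The genuine gap is in how you convert this test into identification. Your recovery step --- ``intersect the test set with $\bigcup_i K^i(r)$'' --- uses $\bigcup_i K^i(r)$ as an input, but that set is exactly as unobservable as $K^k(r)$ itself: it is part of the category function being identified. As stated, the recovery is circular; a rival representation has a different union of categories, and your identity only shows that the two candidate regions for category $k$ differ by a subset of two nowhere dense sets, not that they are equal. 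The paper closes this with a topological step you never take, and it hinges on a property you never invoke: regular openness of the categories (property (1) of Definition \ref{defn: category function}). Writing $K$ for the test set, one has $K^k(r)\subset int(K)$ since $K^k(r)$ is open; $K\cap K^l(r)=\emptyset$ for $l\neq k$ gives $int(K)\cap cl(K^l(r))=\emptyset$ because $K^l(r)$ is open; finiteness of the categories and density of their union give $int(K)\subset cl(K^k(r))$; and finally regular openness yields $int(K)\subset int(cl(K^k(r)))=K^k(r)$. Hence $K^k(r)=int(K)$, a formula that references only $\sim_r$ and the category-$k$ indifference structure, which is what ``uniquely identified'' requires. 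Your own remark that the uncategorized points are nowhere dense supplies most of the ingredients, so the repair is short --- but it is the step that turns your set identity into an identification theorem.

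Your third paragraph addresses a different and stronger question --- recovering the field $x\mapsto LIS^k(x)$ itself from choices --- which the proposition does not claim and the paper does not prove at this level of generality (it is resolved only for BGS, in Proposition \ref{prop: revealing regions}, where property \textbf{S4} supplies reference points at which the category of $x$ is known a priori). The propagation argument you sketch is not rigorous as written: continuity plus connectedness does not by itself give unique continuation of a field of indifference germs; one would need to topologize the germs, prove continuity of $x\mapsto LIS^k(x)$ in that topology, and run a clopen-partition argument using pointwise distinctness of \emph{all} pairs of fields, and it also invokes the Structure Assumption (connectedness of $E^k$), which is not among the proposition's hypotheses. Since none of this is needed for the stated result, it should be dropped rather than patched.
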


The result shows that categories are uniquely identified whenever the DM makes different trade-offs at every alternative for different categories. Moreover, if the assumption of Proposition \ref{prop: general category ID} holds for all categories, we can reveal all the categories. Proposition \ref{prop: revealing regions} shows the result is always applicable to the salient thinking model. 
Moreover, it applies to prototype theory whenever $\lambda^k$ is not a rescaling of $\lambda^l$ for any $k \neq l$, and to TK for the gain-loss and loss-gain regions whenever $\lambda_1, \lambda_2 \neq 1$.

For an intuition, suppose that the category utilities are affine (i.e., $U^k(x)= \sum_i u^k_i x_i +\beta^k$), so indifference curves are (piecewise) straight lines. Then, $LIS^k(x) \neq LIS^l(x)$ whenever their slope within category $k$ differs from the slope within $l$ are different.
Examining the DM's choices between alternatives close to $x$ allows us to identify the slope of the indifference curve at that point, and hence whether $x$ belongs to category $k$ or to $l$.
%
%

While almost all CTM satisfy the hypothesis of Proposition \ref{prop: general category ID}, some do not, such as MO (see Figure \ref{fig:models}). Although we cannot apply the above result to identify their categories, one can also identify the categories by utilizing discontinuities at the border.
To illustrate, in the MO model the utility of an alternative sharply drops when it is no longer unambiguously better than the status quo. This leads to a discontinuity in the indifference curve, and these discontinuities trace out the boundary between the categories.
We discuss how this argument generalizes in Appendix \ref{sec: prop: revealing regions}, and also provide an example where distinguishing categories from choice is impossible.

%

\section{BGS Model and Categories}
\label{sec: BGS}

The salient-thinking model accounts for a number of empirical anomalies for the neoclassical model with a single, intuitive mechanism. Despite its popularity, it can be difficult to understand all of the model's implications for choice: its new components, in particular the salience that determines which attribute stands out for a given reference point, are unobservable.
This section uses CTM to provide a characterization of the choice behavior implied by BGS.    
We begin by studying  the properties of the categories generated by the salience function.

We say $\sigma$ is a \emph{salience function} if it satisfies four basic properties: i) it increases in contrast: for $\epsilon>0$ and $a>b$, $\sigma(a+\epsilon,b)>\sigma(a,b)$   and $\sigma(a,b-\epsilon)>\sigma(a,b)$;   ii)  it  is continuous in both arguments; iii)  it is symmetric: $ \sigma(a,b)= \sigma(b,a)$; and iv) it is grounded: $\sigma(r,r)=\sigma(r',r')$ for all $r,r'\in X$.  
Two other properties are often imposed: $\sigma$ is   Homogeneous of Degree Zero (HOD) if for all $\alpha > 0$,   $\sigma(\alpha a, \alpha b) =  \sigma( a, b)$, and $\sigma$ has   diminishing sensitivity if for all $\epsilon > 0$ and $a,b>0$,   $\sigma( a+\epsilon,  b+\epsilon) \leq  \sigma( a, b)$.\footnote{Requiring a strict inequality is problematic. If, as usually assumed, $\sigma$ is HOD, then $\sigma(r,r)=\sigma(\alpha r, \alpha r)=\sigma(r+\epsilon,r+\epsilon)$ for $\alpha>1$ and $\epsilon=(\alpha-1)r$, a  contradiction.}  
The first three  properties of the salience function are explicitly stated by \cite{bordalo2013salience},  and the fourth is  satisfied by all of the specifications in the literature. It is a necessary condition for an attribute to be salient for a good only if it differs from the reference good in it.

Consider the following properties of categories. 

\begin{description} 
	\item[S0] (Basic) For any $r\in X$: $K^1(r) \bigcap K^2(r)=\emptyset$,   $K^1(r) \bigcup K^2(r)$ is dense in $X$,  $K^1$ and $K^2$ are continuous in $r$, and $K^1(r)$ and $K^2(r)$ are regular open sets.
	
	\item[S1] (Moderation) For any $\lambda \in [0,1]$ and $r\in X$: if $x \in K^{k}(r)$, $y_{k}=x_{k}$, and $y_{-k} =\lambda x_{-k} +(1-\lambda) r_{-k}$, then $y \in K^{k}(r)$.
	
	\item[S2] (Symmetry) If $(a,b) \in K^k(c,d)$, then $(c,d) \in K^k(a,b)$ and $(b,a) \in K^{-k}(d,c)$.
	
	\item[S3] (Transitivity) If $(a_1,a_2) \notin K^2(r_1,r_2)$ and $(a_2,a_3) \notin K^2(r_2,r_3)$ then $(a_1,a_3) \notin K^2(r_1,r_3)$.

	\item[S4] (Difference) For any $x,y,z$ with $y \neq z$, $(x,y) \in K^2(x,z)$ and  $(y,x) \in K^1(z,x)$.	
	
	\item[S5] (Diminishing Sensitivity) For any $x,y,K^1,K^2,\epsilon>0$,
	if $(x,y) \notin K^1(r_1,r_2)$, then $(x+\epsilon,y) \notin K^1(r_1+\epsilon,r_2)$.
	\item[S6] (Equal Salience) For any $x,r \in X$: if $\frac{x_1}{r_1}=\frac{x_2}{r_2}$ or $\frac{x_1}{r_1}=\frac{r_2}{x_2}$, then $x \notin K^k(r)$ for $k=1,2$.

\end{description}
\begin{figure}[h]
	\begin{center}
		\includegraphics[width=.9\textwidth]{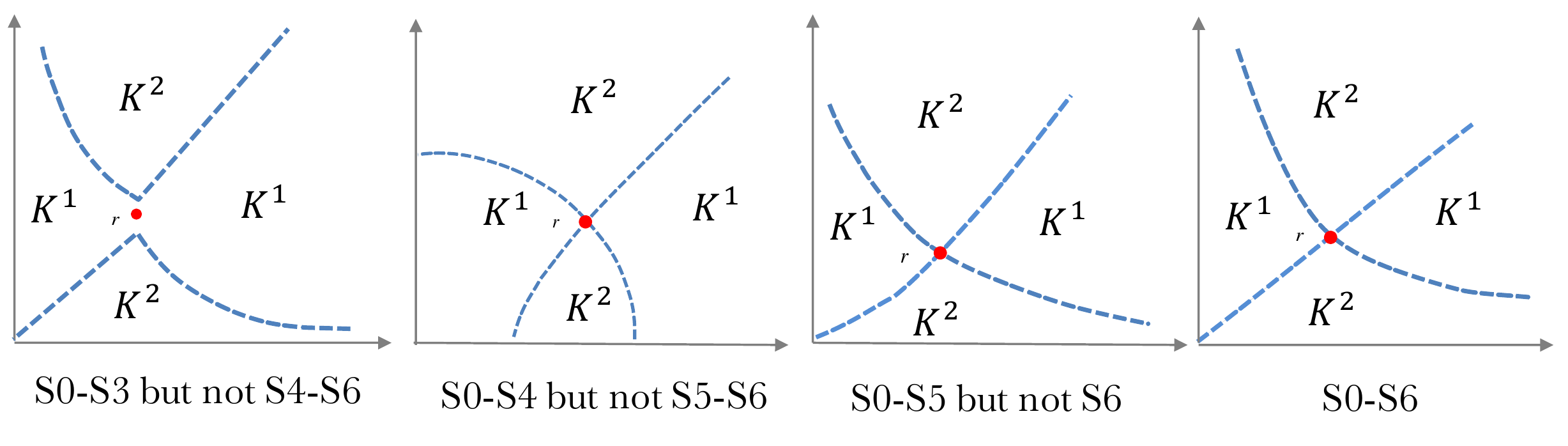}
	\end{center}
	\caption{Properties \textbf{S0-S6} Illustrated}
	\label{fig:categories}
\end{figure}
The properties have natural interpretations. Any category function satisfies \textbf{S0} by definition; we include it for completeness. {\bf{S1}} indicates that  making a bundle's less salient attribute closer to the reference point does not change the salience of the bundle. That is,  when $x$ and $y$ differ only in attribute $l$, and $y$ is closer to the reference in that attribute, if $x$ is $k$-salient, then so is $y$.
{\bf{S2}} requires that the same ranking is used for each attribute. 
{\bf{S3}} adapts transitivity to the salience ranking. It says that if $a_1$ stands out more relative to $r_1$ than $a_2$ does to $r_2$, and $a_2$ stands out more relative to $r_2$ than $a_3$ does to $r_3$, then $a_1$ stands out more relative to $r_1$ than $a_3$ does to $r_3$.
{\bf{S4}} says simply that any difference stands out more than no difference.
{\bf{S5}} implies that increasing both the good and the reference by the same amount in the same dimension does not move the good from one category to another.
{\bf{S6}} reads that if every attribute of $x$ differs from the reference point by the same percentage, then none of the attributes stands out.  More formally, if the percentage difference between $x_k$ and $r_k$ is the same across attributes, then $x$ is not $k$-salient for any $k \in \{0,1\}$.
%
%

Figure \ref{fig:categories} provides examples of categories that satisfy some but not all of the properties. Their formal definition and a verification that they satisfy the claimed properties can be found in Example \ref{ex: regions BGS} in the Appendix.

We say that \emph{categories are generated by a salience function} $\sigma$ if $x\in K^i(r)$ if and only if $\sigma(x_i,r_i) >\sigma(x_{j},r_j)$ for all $j \neq i$.
Thoerem \ref{thm: BGS regions} shows that category functions satisfying \textbf{S0-S4} are so generated.
\textbf{S5} and \textbf{S6} impose diminishing sensitivity and homogeneity of degree zero, respectively.
\begin{thm}
	\label{thm: BGS regions}
	The category function satisfies:\vspace{-1.5em}
	\begin{enumerate}
		\item \textbf{S0-S4} if and only if there exists a salience function  $\sigma$  that generates it;
		\item \textbf{S0-S5} if and only if the $\sigma$ that generates it has  diminishing sensitivity; and
		\item \textbf{S0}, \textbf{S1}, and \textbf{S6} if and only if it satisfies \textbf{S0-S6}  if and only if it is generated by an HOD salience function  $\sigma$. Any HOD salience function generates the same categories.
	\end{enumerate} 
\end{thm}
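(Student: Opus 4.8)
The plan is to treat the three parts in sequence, with the bulk of the work in constructing a salience function from a category function satisfying \textbf{S0}--\textbf{S4}. The easy directions (a salience function, resp.\ one with diminishing sensitivity, resp.\ an HOD one, generates categories satisfying the listed properties) are direct verifications, which I would dispatch first: symmetry of $\sigma$ gives \textbf{S2}, groundedness together with increases-in-contrast gives \textbf{S4}, transitivity of $\geq$ applied to $\sigma$-values gives \textbf{S3}, and the computation $\sigma(\alpha\rho,\rho)=\sigma(\alpha,1)$ under HOD gives \textbf{S6}.

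For the hard direction of part (1), I would encode the category data as a binary relation on attribute--reference pairs. Define $\succsim_s$ on $\mathbb{R}_{++}\times\mathbb{R}_{++}$ by declaring $(a,b)\succ_s(c,d)$ iff $(a,c)\in K^1(b,d)$ and $(a,b)\sim_s(c,d)$ iff $(a,c)\notin K^1(b,d)\cup K^2(b,d)$; the point is that, for any representation $\sigma$, membership $x\in K^1(r)$ is exactly $\sigma(x_1,r_1)>\sigma(x_2,r_2)$, so the relation \emph{is} the salience comparison. I would then check that \textbf{S0}--\textbf{S3} make $\succsim_s$ a continuous weak order: completeness is the trichotomy ``$K^1$, $K^2$, or boundary'' from \textbf{S0}; \textbf{S2} guarantees the strict and symmetric parts are consistent under swapping the two coordinates (so that $\succ_s,\sim_s$ really are the parts of $\succsim_s$); \textbf{S3} is literally transitivity of $\{\notin K^2\}=\succsim_s$; and openness of the contour sets follows from continuity and regular-openness of $K^1,K^2$ in \textbf{S0}. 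A Debreu/Eilenberg--Rader representation on the connected, separable space $\mathbb{R}_{++}^2$ then yields a continuous $\sigma$ with $x\in K^1(r)\iff\sigma(x_1,r_1)>\sigma(x_2,r_2)$ by construction.

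It then remains to verify $\sigma$ is a salience function. Symmetry $\sigma(a,b)=\sigma(b,a)$ follows from \textbf{S2}, which forces $(a,b)\sim_s(b,a)$. Groundedness follows from \textbf{S4}, which says every off-diagonal pair strictly dominates every diagonal pair in $\succsim_s$; approximating $(r,r)$ by off-diagonal $(r,r_n)$ with $r_n\to r$ and using continuity of $\sigma$ sandwiches $\sigma(r,r)=\sigma(r',r')$. For increases in contrast I would first extract weak monotonicity from \textbf{S1}: taking the moderated value itself as the competing attribute shows that if moving an attribute toward its reference raised $\sigma$, then some $(v',v')$ would land in $K^1(\rho,\rho)$, which is impossible; hence $\sigma(v,\rho)$ is weakly monotone in the deviation of $v$ from $\rho$. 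Strictness then comes from \textbf{S0}, since a flat interval of $\sigma(\cdot,\rho)$ would make an entire open box of bundles equally salient and so disjoint from the dense set $K^1(r)\cup K^2(r)$, a contradiction. Continuity of $\sigma$ is inherited from the representation. I expect this density-to-strictness step, together with the verification that the contour sets of $\succsim_s$ are open, to be the fiddliest parts here.

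Parts (2) and (3) build on the $\sigma$ from part (1). For part (2) I would prove \textbf{S5} $\iff$ diminishing sensitivity directly: diminishing sensitivity plus $\sigma(x,r_1)\le\sigma(y,r_2)$ gives $\sigma(x+\epsilon,r_1+\epsilon)\le\sigma(y,r_2)$, which is \textbf{S5}; conversely, applying \textbf{S5} to the boundary pair $(a,b)\notin K^1(b,a)$ (which holds since $\sigma(a,b)=\sigma(b,a)$) yields $\sigma(a+\epsilon,b+\epsilon)\le\sigma(a,b)$. For part (3), the computation above gives HOD $\Rightarrow$ \textbf{S6}, and since an HOD salience function is automatically (weakly) diminishingly sensitive, HOD $\Rightarrow$ \textbf{S0}--\textbf{S6}, while \textbf{S0}--\textbf{S6} trivially implies \textbf{S0}, \textbf{S1}, \textbf{S6}. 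The substantive converse is \textbf{S0}, \textbf{S1}, \textbf{S6} $\Rightarrow$ HOD generation, and here \textbf{S6} is the key: its first clause says $\sigma(\alpha\rho,\rho)$ is independent of $\rho$, i.e.\ $\sigma(v,\rho)=\phi(v/\rho)$, which is exactly homogeneity of degree zero, while its second clause forces $\phi(\alpha)=\phi(1/\alpha)$. I anticipate the main obstacle in part (3) to be recovering enough of \textbf{S2}--\textbf{S4} from only \textbf{S0}, \textbf{S1}, \textbf{S6} to invoke part (1): I would do this by using \textbf{S6} to place the equal-salience boundary exactly on the two curves $x_1/r_1=x_2/r_2$ and $x_1x_2=r_1r_2$, using \textbf{S0} to conclude these curves are the entire boundary and that each of the four regions they cut out is monochromatic, and using \textbf{S1} to fix the orientation so that attribute $i$ is salient exactly when $|\ln(x_i/r_i)|$ is larger. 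Since this ``compare $|\ln(x_i/r_i)|$'' rule is visibly independent of $\phi$, it also yields the final claim that every HOD salience function generates identical categories.
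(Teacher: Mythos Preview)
Your proposal is correct and follows essentially the same architecture as the paper: encode the category data as a weak order on attribute--reference pairs, apply a Debreu-type representation to extract $\sigma$, and read off the four salience-function properties from \textbf{S1}--\textbf{S4}; part (2) is handled identically via the boundary pair $(a,b)\notin K^1(b,a)$.

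The one place your route diverges from the paper is part (3). You frame the argument as ``recover \textbf{S2}--\textbf{S4} from \textbf{S0}, \textbf{S1}, \textbf{S6}, invoke part (1), then show $\sigma$ is HOD via $\sigma(\alpha\rho,\rho)=\phi(\alpha)$.'' The paper instead skips part (1) entirely here: it writes down the concrete function $\sigma(a,b)=\max\{a/b,b/a\}$ and shows directly, using only \textbf{S0}, \textbf{S1}, \textbf{S6}, that this $\sigma$ generates the given categories (the key step being that if some $x$ lay in the ``wrong'' region, \textbf{S1} lets you moderate one coordinate onto an \textbf{S6} curve while remaining in that region, contradicting \textbf{S6}). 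Your plan ultimately lands on the same computation---your ``compare $|\ln(x_i/r_i)|$'' rule is the paper's $\sigma$---so the difference is expository rather than mathematical. The paper's route is cleaner because it avoids the circularity of talking about ``$\sigma(\alpha\rho,\rho)$ is independent of $\rho$'' before any $\sigma$ has been constructed; your route has the minor virtue of making the implication \textbf{S0}, \textbf{S1}, \textbf{S6} $\Rightarrow$ \textbf{S2}--\textbf{S5} explicit. One small misattribution to watch: the claim that ``each of the four regions is monochromatic'' does not follow from \textbf{S0} alone; it is precisely the \textbf{S1}-moderation argument above that does this work, with regular-openness from \textbf{S0} then upgrading $A\subset cl(K^1(r))$ to $A\subset K^1(r)$.
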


The result characterizes categories generated by a salience function.%
\footnote{Theorem \ref{thm: BGS regions} relies on the full structure of $\mathbb{R}^2$ for the last two results, as noted in Footnote \ref{fn: Rn}. Diminishing sensitivity and Homogeneity are both cardinal properties, and so are undefined without cardinal structure on $X$. Properties \textbf{S0-S4} are defined. Subsequent results that rely on Theorem \ref{thm: BGS regions}, such as Propositions \ref{thm:BGS preference} and \ref{thm: BGS choice}, remain true when imposing only \textbf{S0-S4} in this setting.}  It translates the functional form assumptions on the salience function in terms properties of categories. The most common specifications of the salience function are all  HOD, and so satisfy all of the above properties. Surprisingly, the result shows that there is a unique category function satisfying all the properties. Hence, any two HOD salience functions lead to exactly the same behavior.

We now turn to the question of identifying the salience of alternatives from choice behavior alone. 
\begin{proposition}
	\label{prop: revealing regions}
	Given that $\{\succsim_r\}_{r\in X}$ has a BGS representation, the categories are uniquely identified. The category function is $\hat{\mathcal{K}}=(\hat{K}^{1},\hat{K}^{2})$ with \[\hat{K}^{i}(r)=
int \left\{x \in X: \exists \epsilon>0\ s.t.\ \forall y \in B_{\epsilon}(x),\ y \sim_r x \iff y \sim_{r^i_x} x \right\}
\]
for every $r$, where $r^1_x=(x_1/2,x_2)$ and $r^2_x=(x_1,x_2/2)$.
\end{proposition}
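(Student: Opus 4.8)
The plan is to separate the proposition into its two assertions: the abstract uniqueness claim, which I would derive directly from Proposition \ref{prop: general category ID}, and the explicit formula $\hat{\mathcal{K}}$, which I would verify by showing that the two auxiliary reference points $r^1_x,r^2_x$ pin down the category of $x$ and that the interior operation cleans up the boundary.

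First I would record that in a BGS representation the local indifference sets always differ across the two categories. Within category $k$ the DM maximizes $U^k(x)=w^k_1u_1(x_1)+w^k_2u_2(x_2)$, so in the coordinates $(s,t)=(u_1(y_1)-u_1(x_1),\,u_2(y_2)-u_2(x_2))$ — which sweep out a full neighborhood of the origin because each $u_i$ is continuous and strictly monotone, hence a local homeomorphism onto an interval — the set $\{y:U^k(y)=U^k(x)\}$ is the line $w^k_1 s+w^k_2 t=0$. Since $\tfrac{w^1_1}{w^1_2}\neq\tfrac{w^2_1}{w^2_2}$ these two lines are distinct, so $LIS^1(x)\neq LIS^2(x)$ at every $x\in X$. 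Proposition \ref{prop: general category ID} then gives that both categories are uniquely identified, which is the first sentence of the claim.

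For the formula, fix $i$ and $r$ and write $S^i(r)$ for the set appearing in braces. The crucial observation is that $r^i_x$ forces $x$ into category $i$: the generating salience function $\sigma$ is grounded, so $\sigma(t,t)$ equals a constant $\sigma_0$, and it increases in contrast, so $\sigma(a,b)>\sigma_0$ whenever $a\neq b$. Hence for $r^1_x=(x_1/2,x_2)$ we get $\sigma(x_1,x_1/2)>\sigma_0=\sigma(x_2,x_2)$, so $x\in K^1(r^1_x)$, and symmetrically $x\in K^2(r^2_x)$. I then compare local indifference sets. If $x\in K^i(r)$, then $K^i(r)$ and $K^i(r^i_x)$ are both open and contain $x$, so on their intersection $\succsim_r$ and $\succsim_{r^i_x}$ are each represented by an increasing transform of the \emph{same} $U^i$; they therefore share their indifference sets locally and $x\in S^i(r)$, giving $K^i(r)\subseteq S^i(r)$. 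If instead $x\in K^j(r)$ with $j\neq i$, then near $x$ the relation $\succsim_r$ has the level sets of $U^j$ while $\succsim_{r^i_x}$ has those of $U^i$, and $LIS^i(x)\neq LIS^j(x)$ produces, in every ball about $x$, a point $y$ at which the biconditional $y\sim_r x\Leftrightarrow y\sim_{r^i_x}x$ fails; hence $x\notin S^i(r)$ and $S^i(r)\cap K^j(r)=\emptyset$.

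It remains to pass to the interior, and this is the step I expect to be the main obstacle, since $S^i(r)$ may disagree with $K^i(r)$ on the nowhere-dense boundary between the two categories. Combining the inclusions gives $K^i(r)\subseteq S^i(r)\subseteq X\setminus K^j(r)$. Take any $z\in int\,S^i(r)$ with open ball $B\subseteq S^i(r)$; then $B\cap K^j(r)=\emptyset$, and since $K^1(r)\cup K^2(r)$ is dense, $B\cap K^i(r)$ is dense in $B$, so $B\subseteq cl(K^i(r))$. Because the BGS categories are regular open, $int\,cl(K^i(r))=K^i(r)$, so the open set $B$, and in particular $z$, lies in $K^i(r)$. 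Together with $K^i(r)\subseteq int\,S^i(r)$ (as $K^i(r)$ is open) this yields $\hat K^i(r)=int\,S^i(r)=K^i(r)$, as claimed. Throughout I would keep the level-set argument of the second paragraph phrased in terms of distinct curves rather than marginal rates of substitution, so that it uses only continuity and strict monotonicity of the $u_i$ and not any differentiability.
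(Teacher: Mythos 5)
Your proposal is correct and takes essentially the same route as the paper: show $LIS^1(x)\neq LIS^2(x)$ at every $x$ via the distinct weight ratios $w^1_1/w^1_2\neq w^2_1/w^2_2$ and the auxiliary references $r^i_x$ (which place $x$ in category $i$ by groundedness and increasing contrast, i.e.\ property \textbf{S4}), invoke Proposition \ref{prop: general category ID} for unique identification, and then recover $K^i(r)$ as the interior of the displayed set using density and regular openness of the categories. The only difference is presentational: the paper compresses your final interior step into ``using the above arguments'' (deferring to the proof of Proposition \ref{prop: general category ID}), whereas you write out the two inclusions $K^i(r)\subseteq S^i(r)\subseteq X\setminus K^j(r)$ and the density/regular-openness argument explicitly.
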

Given a family $\{\succsim_r\}_{r\in X}$, the result identifies which alternatives have what salience.
As BGS necessarily satisfies the LIS condition, Proposition \ref{prop: general category ID} ensures that the categories are identified. This result improves on the previous one by providing an expression, solely in terms of the primitives, for the set of alternatives in each category. This ensures that the modeler can identify the categorization, and so the salience function, endogenously, i.e. from the DM's behavior alone.
This facilitates a full answer to the question of when a DM has a BGS representation for some salience function.

	In addition to the particular form of categories, BGS satisfies two properties that distinguish it from other CTMs. The most general of these is Reference Irrelevance, above, making BGS a Strong CTM. The other follows.
	
	\begin{ax}[Salient Dimension Overweighted, SDO] For any $x,y,r,r'\in X$:\\
		if $x,y \in K^k(r) \cap K^l(r^\prime)$, 
		$x \succsim_r y$, $x_l>y_l$, and $y_k>x_k$, then $x \succ_{r^\prime} y$.
	\end{ax}
	
	This axiom requires that categories correspond to the dimension that gets the most weight. That is, the DM is more willing to choose an alternative whose ``best'' attribute is $k$ when it is $k$-salient. 
	To illustrate, consider alternatives $x,y$ with $x_1>y_1$ and $y_2>x_2$. Because $x$ is relatively strong in attribute $1$,  $x$ should benefit more than $y$ from a focus on it. If $x$ is chosen over $y$ when attribute $2$ stands out for both, then this advantage in the first dimension is so strong that even a focus on the other one does not offset it. Hence, the DM should surely choose $x$ over $y$ for sure when attribute $1$ stands out for it.

	\begin{proposition} \label{thm:BGS preference} 
	Assume that there exist $x \in K^k(r)$ and $y  \in K^j(r)$ with $x \sim_{r} y$  for any categories $k,j$ and any $ r \in X$. Then, the family $\{\succsim_r\}_{r\in X}$ satisfies the Affine CTM axioms, Reference Interlocking, Reference Irrelevance, and Salient Dimension Overweighted for a category function  $\mathcal{K} $ satisfying   \textbf{S0}-\textbf{S5} if and only if it has a BGS representation where $\sigma$ has diminishing sensitivity.
	\end{proposition}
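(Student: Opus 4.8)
The plan is to prove the equivalence by assembling the representation theorems already in hand and then isolating the two genuinely new ingredients: the weight inequality $\frac{w^1_1}{w^1_2}>\frac{w^2_1}{w^2_2}$ and the reduction to a constant-free additive form. For the necessity direction I would start from a BGS representation and observe that, since each $u_i$ is strictly increasing and the weights positive, it is a Strong, Increasing CTM (the value $V_{BGS}(x|r)$ depends on $r$ only through category membership) whose category utilities have the additive form $\sum_i w^k_i u_i(x_i)$; the Structure Assumption holds because $E^k=\mathbb{R}^2_{++}$. Then Theorem \ref{thm: weak CTM} gives the CTM axioms, Theorem \ref{result: Affine CTM} the Affine CTM axioms, Theorem \ref{result: strong CTM} (whose extra hypothesis is exactly the maintained cross-category indifference assumption) Reference Irrelevance, and Theorem \ref{thm: RI} Reference Interlocking, while \textbf{S0}--\textbf{S5} follow from Theorem \ref{thm: BGS regions}(2). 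The only axiom needing direct verification is SDO: for $x,y\in K^k(r)\cap K^l(r')$ with $x_l>y_l$, $y_k>x_k$, $x\succsim_r y$, set $A=u_l(x_l)-u_l(y_l)>0$ and $B=u_k(y_k)-u_k(x_k)>0$; then $x\succsim_r y$ gives $w^k_l A\ge w^k_k B$, and the weight inequality (equivalent to $\frac{w^l_l}{w^l_k}>\frac{w^k_l}{w^k_k}$ under either labeling) yields $\frac{A}{B}\ge\frac{w^k_k}{w^k_l}>\frac{w^l_k}{w^l_l}$, i.e.\ $w^l_l A>w^l_k B$, which is $x\succ_{r'}y$.

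For sufficiency I would assume the listed axioms for a category function satisfying \textbf{S0}--\textbf{S5}. First, Theorem \ref{result: strong CTM} delivers a Strong, Increasing CTM, i.e.\ reference-independent category utilities $U^k$ with $x\succsim_r y\iff U^k(x)\ge U^l(y)$ for $x\in K^k(r)$, $y\in K^l(r)$, jointly calibrated on one scale. Next, Theorem \ref{thm: RI} (applicable since each $E^k=\mathbb{R}^2_{++}$ is connected) shows each $U^k$ is cardinally equivalent to $\sum_i w^k_i u_i(x_i)$ with a \emph{common} index $u_i$ and positive weights; absorbing the per-category positive scale factor into the weights, the jointly-calibrated representation must take the form $V(x)=\sum_i w^k_i u_i(x_i)+\beta_k$ on $K^k(r)$. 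Finally, Theorem \ref{thm: BGS regions}(2) identifies the category function as generated by a salience function $\sigma$ with diminishing sensitivity.

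What remains is to pin down the weights and eliminate the constants $\beta_k$. To get $\frac{w^1_1}{w^1_2}>\frac{w^2_1}{w^2_2}$ I would exhibit $x,y$ with $x_1>y_1$, $y_2>x_2$ and $w^2_1(u_1(x_1)-u_1(y_1))=w^2_2(u_2(y_2)-u_2(x_2))$, plus references $r,r'$ making both alternatives $2$-salient and both $1$-salient respectively; since the within-category ranking is reference-free, $x\sim_r y$ in $K^2$, so SDO forces $x\succ_{r'}y$ in $K^1$, and substituting $\frac{A}{B}=\frac{w^2_2}{w^2_1}$ gives $w^1_1 w^2_2>w^1_2 w^2_1$, i.e.\ the strict inequality. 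With the determinant $w^1_1 w^2_2-w^1_2 w^2_1\neq 0$ secured, the linear system $\sum_i w^k_i c_i=\beta_k$ ($k=1,2$) has a solution $c$, and replacing $u_i$ by the still strictly increasing $u_i+c_i$ absorbs the constants, leaving $V(x)=\sum_i w^k_i u_i(x_i)$ on $K^k$—precisely a BGS $(\sigma;w_1,w_2,u_1,u_2)$ representation with diminishing sensitivity.

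The hard part will be the existence of the boundary configuration that extracts the \emph{strict} weight inequality: one must simultaneously realize an exact within-category-$2$ indifference and place both alternatives in $K^2(r)$ for some $r$ and in $K^1(r')$ for some $r'$. This is where the salience geometry enters—using that $\sigma$ increases in contrast, I would drive $r_2$ away from $x_2,y_2$ while keeping $r_1$ near $x_1,y_1$ to force $2$-salience, and symmetrically for $1$-salience—and note that a mere limiting argument would only deliver the weak inequality, so the exact indifference case is genuinely needed. A secondary subtlety deserving care is the calibration step: the per-category cardinal equivalences from Theorem \ref{thm: RI} must be reconciled with the single Strong-CTM scale, which is exactly what leaves the residual constants $\beta_k$ that the nonsingular-weight argument then removes.
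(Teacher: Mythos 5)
Your proposal is correct and follows essentially the same route as the paper's proof: apply Theorems \ref{result: strong CTM}, \ref{thm: RI}, and \ref{thm: BGS regions} to get a Strong CTM with reweighted additive category utilities over salience-generated categories, extract the strict inequality $w^1_1w^2_2>w^1_2w^2_1$ by applying SDO to a pair that is exactly indifferent under one categorization while co-categorized under two references (the paper realizes precisely your ``hard part'' configuration via \textbf{S4} plus openness of the categories, fixing $r=(x_1,x_2/2)$ and $r'=(x_1/2,x_2)$ and then choosing the indifferent $y$ inside the intersection ball), and then remove the additive constants using nonsingularity of the weight matrix. Your absorption of both constants by solving the $2\times 2$ system $\sum_i w^k_i c_i=\beta_k$ is a minor, equally valid variant of the paper's normalization $\beta^1=0$ followed by shifting one index $u_i$ by $\beta^2/(w^2_i-w^1_i)$.
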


	This result characterizes the BGS model. It also provides guidance for comparing it with other models in the CTM class (see Figure \ref{fig:models} and Table \ref{table:comparison}). By outlining the model's testable implications, the result provides guidance on how to design experiments to test it.\footnote{The assumption that alternatives indifferent to each other exist in each category for each reference point is not strictly necessary. A sufficient condition for it to be necessary is that the utility indexes are both unbounded above (or below).} 
	
	\cite{bordalo2013salience} focus on a special case where the model is linear:  $w^1_1=w^2_2=1-w^2_1=1-w^1_2>\frac12$ and $u_1(x)=u_2(x)=x$. In an earlier version of this paper, we show this model is characterized by strengthening Affine Across Categories to require linearity and imposing a reflection axiom that requires permuting two alternatives and the reference point in the same way not to reverse the DM's choice between the two.\footnote{Formally, the first is that Affine Across Categories holds with $\oplus^k$ replaced by the usual $+$ operation. The second is that $(a,b) \succsim_{r_1,r_2} (c,d)$ if and only if $(b,a) \succsim_{r_2,r_1} (d,c)$. One can verify that these additional assumptions imply that the ancillary assumption about indifference holds.}

Taken together Propositions   \ref{prop: revealing regions} and \ref{thm:BGS preference} provide an outline for a fully subjective axiomatization of a family of preferences with a BGS representation. 
\begin{cor}
\label{cor: endog BGS}
Assume that there exist $x \in \hat{K}^k(r)$ and $y  \in \hat{K}^j(r)$ with $x \sim_{r} y$  for any categories $k,j$ and any $ r \in X$. Then, the family $\{\succsim_r\}_{r\in X}$ satisfies the Affine CTM axioms, Reference Interlocking, Reference Irrelevance, and Salient Dimension Overweighted for 
$\hat{\mathcal{K}}$ and
$\hat{\mathcal{K}}$ satisfies   \textbf{S0}-\textbf{S5} if and only if it has a BGS representation where $\sigma$ has diminishing sensitivity.
\end{cor}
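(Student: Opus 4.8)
The plan is to obtain the Corollary by splicing together the endogenous identification of categories in Proposition \ref{prop: revealing regions} with the characterization under an exogenously given category function in Proposition \ref{thm:BGS preference}. The two results share the same list of axioms (the Affine CTM axioms, Reference Interlocking, Reference Irrelevance, and Salient Dimension Overweighted) and the same structural requirements (\textbf{S0}--\textbf{S5} together with the ancillary indifference condition); the only difference is that the Corollary replaces the exogenous $\mathcal{K}$ by the endogenous $\hat{\mathcal{K}}$ defined through the formula in Proposition \ref{prop: revealing regions}. The heart of the argument is therefore to show that these two category functions coincide whenever a BGS representation exists, so that each proposition can be applied with $\mathcal{K}=\hat{\mathcal{K}}$.

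For the sufficiency direction, I would suppose that $\{\succsim_r\}_{r\in X}$ satisfies the listed axioms for $\hat{\mathcal{K}}$, that $\hat{\mathcal{K}}$ satisfies \textbf{S0}--\textbf{S5}, and that the indifference condition holds for $\hat{\mathcal{K}}$. Because \textbf{S0} guarantees that $\hat{\mathcal{K}}$ is a genuine category function in the sense of Definition \ref{defn: category function}, I can simply invoke Proposition \ref{thm:BGS preference} with the exogenous category function taken to be $\hat{\mathcal{K}}$. Its hypotheses are met verbatim, and its conclusion delivers a BGS representation whose salience function has diminishing sensitivity, so no new work is required beyond checking that the hypotheses transfer unchanged.

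For the necessity direction, suppose $\{\succsim_r\}_{r\in X}$ has a BGS representation generated by a salience function $\sigma$ with diminishing sensitivity, and let $\mathcal{K}^{BGS}$ be the category function it generates. First, Proposition \ref{prop: revealing regions} identifies the endogenous object exactly: $\hat{\mathcal{K}}=\mathcal{K}^{BGS}$. Next, Theorem \ref{thm: BGS regions}(2) shows that, since $\sigma$ has diminishing sensitivity, $\mathcal{K}^{BGS}$ satisfies \textbf{S0}--\textbf{S5}; transporting this along the identification gives that $\hat{\mathcal{K}}$ satisfies \textbf{S0}--\textbf{S5}. Finally, the necessity half of Proposition \ref{thm:BGS preference} shows that the family satisfies all of the listed axioms for $\mathcal{K}^{BGS}=\hat{\mathcal{K}}$, and the indifference condition stated for $\hat{\mathcal{K}}$ is exactly the one assumed in the Corollary.

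The main obstacle is the identification step $\hat{\mathcal{K}}=\mathcal{K}^{BGS}$, which rests on BGS satisfying the LIS condition so that Proposition \ref{prop: general category ID}, sharpened to the explicit primitive formula of Proposition \ref{prop: revealing regions}, applies. Once this is in hand, the remaining effort is purely bookkeeping: verifying that the axiom list and the structural conditions of the two propositions match so that each can be applied with $\mathcal{K}=\hat{\mathcal{K}}$. The one point requiring care is that Proposition \ref{thm:BGS preference} is stated for an exogenous category function, so in the sufficiency direction I must appeal to \textbf{S0} to certify that the derived $\hat{\mathcal{K}}$ is an admissible category function before the proposition can be invoked.
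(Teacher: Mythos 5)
Your proposal is correct and follows essentially the same route as the paper: the paper's own argument is exactly the splice of Proposition \ref{prop: revealing regions} (giving $\hat{\mathcal{K}}=\mathcal{K}^{BGS}$ under a BGS representation), Theorem \ref{thm: BGS regions} (giving \textbf{S0}--\textbf{S5} under diminishing sensitivity), and Proposition \ref{thm:BGS preference} applied with $\mathcal{K}=\hat{\mathcal{K}}$ in both directions. The identification step you single out as the heart of the argument is precisely what the paper emphasizes, so there is nothing to add.
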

We illustrate necessity of the result.
When the family of preferences has a BGS representation, Proposition \ref{prop: revealing regions} shows that the category function equals $\hat{K}$. Moreover, $\hat{K}$ satisfies S0-S5 when $\sigma$ has diminishing sensitivity by Theorem \ref{thm: BGS regions}, and the family of preferences satisfy the axioms in Proposition \ref{thm:BGS preference} for $\hat{K}$.
Thus, the result provides the complete testable implications of BGS in terms of the family of preferences alone.
\section{Choice Correspondence}
\label{sec: Choice Correspondence}

In this section, the modeler observes the DM's choice from finite subsets of alternatives but not her reference point. A model consists of both a theory of reference formation and a theory of choice given categorization.
In this setting, we can jointly test the theory of choice given categorization, categorization given reference, and reference formation.

We model reference formation via a \emph{reference generator} $A$ that maps finite subsets of alternatives to reference points, with the interpretation that $A(S)$ is the reference point when the menu is $S$.
Examples include the BGS theory that $A(S)$ is the average alternative, that $A(S)$ is the median bundle, that $A(S)$ is the upper (or lower) bound of $S$, and the \cite{KR06} theory that $A(S)=c(S)$.
If additional observable data on the choice context is provided, then it is easy to extend our results to $A$ being a function of that as well. For instance, \cite{MO05} theorize that the initial endowment $e$ is observable and that $A(S,e)=e$, and \cite{BGS_MAC} theorize that past histories $h$ of consumption are available and that $A(S,h)$ is the average between the bundles in $S$ and those in $h$.

Fixing a categorization function $\mathcal{K} $ and a reference generator $A$, let $\mathcal{X}$ be the set of finite and non-empty subsets of $X$ such that every alternative is categorized. Formally, $S\in \mathcal{X}$ only if  $ S \subset \bigcup_{i=1}^m K^i(A(S))$. We call these categorized menus or  menus for short. The requirement ensures that each alternative in the choice set belongs to a category given the reference point  $A(S)$. We leave open how the DM chooses when alternatives that are uncategorized belong to the choice set. By leaving the choice from this small set of menus ambiguous, we can more clearly state the properties of choice implied by the model.%
\footnote{One can, of course, extend the model to account for these choices. For instance, \cite{bordalo2013salience} hypothesize that these alternatives are evaluated according to their sum. Complications arise because the uncategorized alternatives are ``small:'' its complement is open and dense.}

We summarize the DM's choices by a choice correspondence $c:\mathcal{X} \rightrightarrows X$ with $c(S) \subseteq S$ and $c(S) \neq \emptyset$ for each $S \in \mathcal{X}$.
\begin{defn}
	The choice correspondence $c$  \emph{conforms to Strong-CTM under $(\mathcal{K} ,A)$} if there exists a family of preference relations $\{\succsim_r\}_{r\in X}$ that conforms to Increasing Strong CTM under $\mathcal{K} $
	so that $$c(S)= \left\{x \in S: x \succsim_{A(S)} y\text{ for all }y\in S \right\}$$ for every $S \in \mathcal{X}$.
\end{defn}

\subsection{Reference point formation}

Provided that the reference generator is responsive enough to changes in the menu, there is the possibility of testing the properties required by categorization on $\succsim_r$. One example of enough structure is that the reference point is the average bundle. However, this is just one example. An even more general sufficient condition is as follows.

\begin{ass*}
	A function $A$ is a \emph{generalized average} if for any $S=\{x^1,\dots,x^m\} \in \mathcal{X}$:\\
	(i) the function $x\mapsto A([S\setminus \{x_1\}] \bigcup \{x\})$ is continuous at $x_1$, and\\
	(ii) 
	for any $\epsilon>0$ and any finite $S' \subset \bigcup_i K^i(A(S))$, there exists $S^* \in \mathcal{X}$ so that $S^* \supset S \bigcup S'$, $d\left( A \left( S^*  \right) , A (S) \right)<\epsilon$, and for any $x' \in S^* \setminus S'$, $\min_{x\in S}d(x',x)<\epsilon^2$.
\end{ass*}

Examples of generalized average reference  include the average bundle $$A_{a}(S)=\left( \frac{\sum_{x\in S} x_1}{|S|},\frac{\sum_{x\in S} x_2}{|S|}\right),$$
the median value of each attribute, 
and a weighted average $$A_{wa}(S)=\left( \frac{\sum_{x\in S} w(x)x_1}{\sum_{x\in S} w(x)},\frac{\sum_{x\in S} w(x) x_2}{\sum_{x\in S} w(x)}\right)$$ for any continuous weight function  $w:X \rightarrow [a,b]$ with $b>a>0$.
We sometimes impose the additional requirement that $A(S) \in co(S) \setminus ext(S)$  for all non-singleton $S$; if so, we call $A$ a \emph{strong generalized average}.
The first and last of these examples satisfy this property.
The supremum and infimum are \emph{not} generalized averages, nor (necessarily) is the choice acclimating reference generator, $c(S)=A(S)$.\footnote{Recall $\sup S= \left( \max_{x \in S} x_1,\max_{x \in S} x_2 \right)$ and $\inf S$ is defined analogously.}

\subsection{Behavioral Foundations for Strong-CTM}

We now consider the behavior by a DM who conforms to Strong-CTM for a given category function and reference generator.
To do so, we make use of our earlier analysis by revealing how the DM evaluates alternatives categorized in a given way.
When $A(S)$ is a generalized average, this provides enough structure to identify enough of the family to apply our earlier analysis.

The main behavioral content comes from the choice correspondence equivalent of Reference Irrelevance. 
To state it, we introduce the following definition and notation.
\begin{defn}
	The alternative $x$ in category $k$ is indirectly revealed preferred to alternative $y$ in category $j$, written $(x,k) \succsim^R (y,j)$, if there exists finite sequences of pairs $(x^i,S^i)_{i=1}^{n}$  such that $x=x^1 \in K^k (A(S^1))$, $y \in K^j (A(S^n))\bigcap S^n$, and for each $i$: $x^i \in c(S^i)$, $x^{i+1} \in S^i$, and $x^{i+1} \in K^{k_i}(A(S^i)) \cap K^{k_i}(A(S^{i+1}))$ for some $k_i$.
\end{defn}
To interpret the definition, consider menus $S^1,S^2$ and alternatives $x^1 \in  c(S^1)$ and $x^2 \in c(S^2) \cap S^1$, where $x^2$ is categorized in the same way for both menus. For example, $x^1$ is in category $1$ for $S^1$, and $x^2$ is in category  $2$ for both. The observation $x^1 \in  c(S^1)$ reveals that the valuation of $x^1$ is at least as high as that of $x^2$ when $x^1$ belongs to the first category and $x^2$ to the second. Pick any $y \in S^2$, and suppose the DM categorizes it as $j$ in $S^2$. Since  $x^2$ is chosen from $S^2$, the DM perceives that $x^2$ has a higher value than  $y$, when she categorizes the first as $2$ and the second as $j$. By transitivity, the DM also perceives that $x^1$ in $1$ has a higher value than $y$ in $j$. The relations $\succsim^R$ captures this and extends it to longer sequences. 

We replace Reference Irrelevance with the following weakening of the Strong Axiom of Revealed Preference (SARP).

\begin{ax1}[Category SARP]\label{ax:firstc}\label{ax:SSARP}
	For any $S \in \mathcal{X}$, if $(x,k) \succsim^R (y,j)$, $x \in K^k(A(S)) \bigcap S$, $y \in K^j(A(S)) \bigcap S$, and $y \in c(S)$, then $x \in c(S)$.\end{ax1}

Consider menus $S^1,S^2$ that both contain $x^1$ and $x^2$  where $x^2$ is in category  $k$ for both $S^1$ and $S^2$.  If $x^1$ is chosen from and belongs to category $1$ for $S^1$, then $(x^1,1) \succsim^R (x^2,k)$.  If  $x^1$ belongs to category $1$ in $S_2$, then  the DM values both it and $x^2$ the same in $S^1$ as in $S^2$ since neither's categorization  changed. If $x^2$ is chosen from $S^2$, then $x^1$ must be chosen as well. In particular, the DM obeys the Weak Axiom of Revealed Preference (WARP) whenever she categorizes chosen alternatives in the same way. However, the axiom leaves open the possibility of a WARP violation when either is differentially categorized.


The axiom extends this logic to sequences of choices in much the same way that SARP does to WARP. A finite sequence of choices, where the  choice from the next menu is available in the current one and has the same salience in both, does not lead to a choice reversal. Since salience does not change along the sequence of choices, the choices do not exhibit a reversal.

Category SARP limits the effect of unchosen alternatives.
Modifying them can alter the DM's choice, but only insofar as it changes the reference point and thus the salience of alternatives.
When comparing the same two alternatives in different menus, the DM's relative ranking does not change when neither's salience changes.
This property greatly limits the effect of the reference point.   In fact, a sufficiently small change in the reference never leads to a preference reversal.

The remaining axioms are the natural generalizations to the choice correspondence of Category Cancellation, Category Monotonicity, Category Continuity, Reference Interlocking, and Affine Across Categories.
We denote these by appending a ``*'' to distinguish from their reference-dependent-preference formulation.
Appendix \ref{sec: axioms for c} contains their formal statement.

As before, we require some additional topological structure on the categories.
For a category $k$, let $$E^{R,k} = \{x \in X: x\in K^k(A(S)), \ \{x\}= c(S) \}$$  and
$$D^k=\bigcup_{S \in \mathcal{X}} \left\{K^k(A(S)) \bigcap S \right\} .$$ 
The generalization of the structure assumption is as follows.
\begin{ass*}[Revealed Structure]
	For any category $k$, $E^{R,k}$ is open, $E^{R,k}$ is dense in $D^k$, and the following sets are connected: $E^{R,k}$, $\{x\in E^{R,k}: x_j=s\}$ for all dimensions $j$ and scalars $s \in \mathbb{R}$, and 
	$\{y \in E^{R,k}: (x,k) \sim^{R} (y,k) \}$ for all $x\in E^{R,k}$.
\end{ass*}
In addition to what was imposed by the Structure Assumption, we require that almost all objects categorized in a category are chosen in some menu.
This can be weakened, but is typically satisfied by the models in which we are interested, such as BGS.

We require one last assumption.
\begin{ax1}[Comparability Across Regions, CAR]
	\label{ax: BGS-c-cont}
	If $x \in E^{R,k}$, then for any $j$ there exists 
	$y\in E^{R,j}$ so that $(x,k) \sim^{R} (y,j)$.
\end{ax1}
This is a  version of the assumption in Theorem \ref{result: strong CTM}.
It requires that every alternative chosen when it belongs to category $k$ is revealed to be equally good to some other alternative when it is categorized in category $j$. 
With it, we can now state the result.
%
%


\begin{thm}
	\label{thm: strong CTM choice}
	Assume that Revealed Structure 	and CAR hold and that $A$ is a generalized average. 
	A choice correspondence $c$ conforms to strong-CTM under $(\mathcal{K},A) $ if and only if $c$ satisfies Category-SARP, Category Monotonicity*, Category Cancellation*, Category Continuity*, and Affine Across Categories*.	
\end{thm}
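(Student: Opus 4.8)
The plan is to reduce Theorem~\ref{thm: strong CTM choice} to the preference-based characterization in Theorem~\ref{result: strong CTM} by constructing, from the choice correspondence $c$, a family $\{\succsim_r\}_{r\in X}$ that conforms to Strong Increasing CTM and then verifying that $c$ is generated by that family via the generalized-average reference generator $A$. Sufficiency is the substantive direction; necessity is routine and I would dispatch it by checking that each starred axiom and Category-SARP are implied by the existence of the representing family, using only that $A(S)$ assigns a reference point to each menu so that $c(S)$ maximizes $\succsim_{A(S)}$.

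For sufficiency, the first step is to define the within-category revealed preference directly from choice. For each category $k$, I would set $x \succsim^{R,k} y$ whenever $(x,k)\succsim^R(y,k)$, and show that Category-SARP forces $\succsim^{R,k}$ to be acyclic (this is the analogue of Weak Reference Irrelevance) and that its transitive closure is a well-behaved partial order on $E^{R,k}$. The key leverage is that $A$ being a generalized average lets me realize any desired local comparison of alternatives inside a single category: given $x,y$ close together and categorized as $k$ for a common reference, I can build a menu $S$ containing them whose average $A(S)$ is pinned near a target reference point, using property~(ii) of the generalized average to pad $S$ with points clustered near existing ones without moving $A(S)$ much. This is precisely the device that converts a statement about the correspondence $c$ into a statement about a latent preference $\succsim_r$. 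Having extracted $\succsim^{R,k}$, I would feed the starred versions of Category Cancellation, Category Monotonicity, Category Continuity, and Affine Across Categories through the same realization argument to obtain, within each $E^{R,k}$, the local additive structure required to invoke Theorem~2.2 of \cite{chateauneuf1993local}; the Revealed Structure Assumption supplies the connectedness hypotheses, and CAR supplies the cross-category calibration that lets me place all the category utilities on a common cardinal scale, exactly as in the proof of Theorem~\ref{result: strong CTM}.

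With the category utilities $U^k$ in hand, I would define $\succsim_r$ by the Strong CTM formula $x\succsim_r y \iff U^k(x)\ge U^l(y)$ whenever $x\in K^k(r)$, $y\in K^l(r)$, and then verify two things: that this family conforms to Strong Increasing CTM under $\mathcal{K}$ (immediate from the construction, since the $U^k$ do not depend on $r$), and, crucially, that $c(S)=\{x\in S: x\succsim_{A(S)} y \ \forall y\in S\}$ for every $S\in\mathcal{X}$. The latter consistency check is where I expect the main obstacle. Showing $c(S)\subseteq$ the $\succsim_{A(S)}$-maximizers is the content of Category-SARP applied along a chosen element; the reverse inclusion requires that any $\succsim_{A(S)}$-maximizer is actually selected, which is not automatic because the representation was pieced together from many different menus and reference points rather than from $S$ itself. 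The difficulty is that a given alternative may sit in $D^k$ but outside $E^{R,k}$ (never uniquely chosen), so its value was only assigned by continuous extension; I would handle this by exploiting density of $E^{R,k}$ in $D^k$ together with Category Continuity* to approximate any maximizer by genuinely chosen nearby alternatives, and then use continuity of $A$ (property~(i) of the generalized average) to ensure the reference point, and hence the categorization, is stable along the approximating sequence. Reconciling these approximations across category boundaries—where the representation may be discontinuous—is the delicate point, and it is exactly here that the openness and density clauses of Revealed Structure do the work.
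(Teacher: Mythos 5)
Your proposal takes essentially the same route as the paper's proof: the paper's key lemma (Lemma \ref{lem: revealed k preference}) implements exactly your menu-padding device---using property (ii) of the generalized average together with Category Continuity* and Category Monotonicity* to force a choice between two nearby alternatives in the same category, yielding local completeness of $\succsim^{R,k}$---and then invokes Theorem 2.2 of \cite{chateauneuf1993local} for an additive $U^k$ on each $E^{R,k}$, extends to $D^k$ by density, and calibrates across categories via CAR, with Category-SARP giving well-definedness and AAC* forcing the cross-category transformation to be affine. The only discrepancy is minor bookkeeping: AAC* does its work in that cross-category calibration step rather than in the within-category additive structure where you placed it, but this does not alter the argument.
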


The result is the counterpart of Theorem \ref{result: strong CTM} with an endogenous reference point.
The behavior corresponding to categorization does not fundamentally change across settings.
As long as the DM reacts consistently when alternatives are categorized in the same way, then we can represent her choices as categorical thinking where the reference point only affects how she categorizes each alternative.
The key challenge in the proof is to establish that the arguments we used to establish our earlier results still hold.
We adapt our earlier arguments to show that revealed preference within category $k$ is complete on $E^{R,k}$. This relies on small changes in alternatives not changing choice, a property implied by generalized average. Then, the remaining axioms establish that this within-category preference has an additive representation. CAR allows us to extend across categories.

\subsection{Behavioral Foundations for BGS}
In this subsection, we provide a behavioral foundation for BGS.
The first step is to show that the Revealed Structure assumption holds.
\begin{lemma}\label{lem:BGS region}
	If $A$ is a strong generalized average, $\mathcal{K} $ satisfies  \textbf{S0},  \textbf{S1}, and  \textbf{S4}, and $c$ satisfies Category Montonicity*, then $E^{R,k}=\mathbb{R}^2_{++}$ for $k=1,2$.
\end{lemma}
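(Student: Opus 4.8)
The plan is to establish the nontrivial inclusion $\mathbb{R}^2_{++}\subseteq E^{R,k}$ for each $k$; the reverse inclusion is immediate because $E^{R,k}\subseteq X=\mathbb{R}^2_{++}$ by definition. Concretely, for each $x\in\mathbb{R}^2_{++}$ I will exhibit a categorized menu from which $x$ is the unique choice while lying in category $k$. Since \textbf{S4} treats the two coordinates symmetrically, it suffices to handle $k=2$; the case $k=1$ follows verbatim after exchanging the roles of the two attributes.

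Fix $x=(x_1,x_2)$, put $x'=(x_1,x_2/2)$, and consider the two-point menu $S=\{x,x'\}$. The two alternatives share their first coordinate and satisfy $x\geq x'$, $x\neq x'$. Because $A$ is a strong generalized average, $A(S)\in co(S)\setminus ext(S)$; as $co(S)$ is the vertical segment $\{x_1\}\times[x_2/2,x_2]$ and $ext(S)=\{x,x'\}$, the reference is forced to have the form $A(S)=(x_1,r_2)$ with $x_2/2<r_2<x_2$. In particular $r_2\neq x_2$ and $r_2\neq x_2/2$, so \textbf{S4} (Difference) applies to each point and yields $x\in K^2(x_1,r_2)$ and $x'\in K^2(x_1,r_2)$. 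Thus both alternatives lie in $K^2(A(S))$; invoking \textbf{S0} so that the categories are genuine disjoint regular-open sets, this confirms $S\subset\bigcup_i K^i(A(S))$, i.e. $S\in\mathcal{X}$, and that $S$ is entirely $2$-categorized.

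It remains to pin down the choice. Since $x\geq x'$, $x\neq x'$, and $x,x'\in K^2(A(S))$, Category Monotonicity* forbids $x'\in c(S)$; as $c(S)$ is nonempty and contained in $\{x,x'\}$, we obtain $c(S)=\{x\}$. Hence $x\in K^2(A(S))$ is the unique choice from $S$, so $x\in E^{R,2}$. Letting $x$ range over $\mathbb{R}^2_{++}$ gives $E^{R,2}=\mathbb{R}^2_{++}$, and the attribute-swapped argument gives $E^{R,1}=\mathbb{R}^2_{++}$.

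The one point that needs care is the localization of the reference: for a general strong generalized average we know only that $A(S)\in co(S)\setminus ext(S)$, not its precise value. The collinear two-point menu is engineered exactly so that this weak information already fixes the reference's first coordinate at $x_1$ and places its second coordinate strictly inside $(x_2/2,x_2)$ --- precisely the configuration \textbf{S4} needs to sort both alternatives into the same category. I expect this to be the main (though mild) obstacle; \textbf{S0} enters only to make the category-membership statements and the claim $S\in\mathcal{X}$ meaningful, and \textbf{S1} (Moderation) is not needed for this collinear construction, though it would license an alternative argument built from alternatives drawn toward the reference.
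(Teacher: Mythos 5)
Your proposal is correct and follows essentially the same argument as the paper's proof: a two-point collinear menu $\{x,x'\}$ with $x'$ dominated and sharing one coordinate with $x$, so that the strong generalized average pins the reference onto the connecting segment, \textbf{S4} places both alternatives in the same category, and Category Monotonicity* forces $c(S)=\{x\}$. The only differences are cosmetic: the paper works out $k=1$ (with $x'=(x_1/2,x_2)$) and appeals to symmetry for $k=2$, and your write-up is slightly more explicit about the reference lying strictly inside the segment and about uniqueness of the choice.
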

Given the assumptions we have made so far, every alternative is chosen in some menu when it is $k$-salient. Consequently, the revealed structure assumption must hold. The result relies  on the observation that the DM categorizes $x$ as $1$-salient when all other available options have the same value in dimension $2$ as $x$. If $x$ has the highest value in attribute $1$ in such a choice set, then it must be chosen. 

Now, we can apply Theorem \ref{thm: strong CTM choice} in combination with the insights gained from Proposition \ref{thm:BGS preference} to understand the behavioral foundation of the BGS model.

\begin{proposition}
	\label{thm: BGS choice}
	Assume that $A$ is a strong generalized average and that  CAR holds.
	The choice correspondence $c$ satisfies   Category-SARP, Category Monotonicity*, Category Cancellation*, Category Continuity*, Affine Across Categories*, Reference Interlocking*, and Salient Dimension Overweighted* for a category function  $\mathcal{K} $ satisfying   \textbf{S0}-\textbf{S5} if and only if $c$ conforms to BGS  where $\sigma$ has diminishing sensitivity.
\end{proposition}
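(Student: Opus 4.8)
The plan is to reduce the endogenous-reference problem to the preference-based characterization in Proposition \ref{thm:BGS preference} by first invoking Theorem \ref{thm: strong CTM choice} to construct a rationalizing family $\{\succsim_r\}_{r\in X}$, and then checking that this family meets the additional hypotheses that single out BGS. I would treat the two directions separately, with the bulk of the work in the ``if'' direction.

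For the ``if'' direction, I would first observe that a strong generalized average is in particular a generalized average, and that \textbf{S0}-\textbf{S5} imply \textbf{S0}, \textbf{S1}, and \textbf{S4}. Since $c$ satisfies Category Monotonicity*, Lemma \ref{lem:BGS region} yields $E^{R,k}=\mathbb{R}^2_{++}$ for $k=1,2$. This makes the Revealed Structure assumption immediate: $E^{R,k}$ is open, dense in $D^k$, and connected, and its coordinate slices $\{x\in E^{R,k}:x_j=s\}$ are connected because they are just rays in $\mathbb{R}^2_{++}$; the connectedness of the revealed indifference sets $\{y\in E^{R,k}:(x,k)\sim^R(y,k)\}$ follows from Category Monotonicity* and Category Continuity* by the standard argument that monotone, continuous preferences on $\mathbb{R}^2_{++}$ have connected indifference curves. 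With Revealed Structure and CAR in hand, Theorem \ref{thm: strong CTM choice} applies to the subcollection Category-SARP, Category Monotonicity*, Category Cancellation*, Category Continuity*, and Affine Across Categories*, producing a family $\{\succsim_r\}_{r\in X}$ that conforms to Increasing Strong CTM under $\mathcal{K}$ and rationalizes $c$.

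It then remains to verify the hypotheses of Proposition \ref{thm:BGS preference} for this family. Because $E^{R,k}=\mathbb{R}^2_{++}$, the revealed within-category relation $\succsim^R$ restricted to category $k$ coincides with the preference-based within-category relation $\succsim^k$, so I can transfer the two remaining behavioral axioms: Reference Interlocking* translates to Reference Interlocking and Salient Dimension Overweighted* translates to Salient Dimension Overweighted for $\{\succsim_r\}_{r\in X}$. Reference Irrelevance and the Affine CTM axioms hold because the family conforms to Increasing Strong CTM. Finally, CAR — which asserts that each $x\in E^{R,k}$ is revealed indifferent to some $y\in E^{R,j}$ — delivers exactly the ancillary hypothesis of Proposition \ref{thm:BGS preference} that every pair of categories contains mutually indifferent alternatives for each reference point. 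Proposition \ref{thm:BGS preference} then produces a BGS representation with diminishing-sensitivity $\sigma$, and since $c$ is rationalized by $\{\succsim_r\}_{r\in X}$, $c$ conforms to BGS.

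The ``only if'' direction is the routine one: if $c$ conforms to BGS, its rationalizing family has a BGS representation, so by Proposition \ref{thm:BGS preference} it satisfies the Affine CTM axioms, Reference Interlocking, Reference Irrelevance, and Salient Dimension Overweighted for a $\mathcal{K}$ satisfying \textbf{S0}-\textbf{S5}; each transfers to its starred counterpart because $c$ is rationalized by the family, with Category-SARP coming from Reference Irrelevance together with the fact that unchanged categorization fixes the within-category ranking across menus. I expect the main obstacle to be this two-way translation between the starred choice-correspondence axioms and their preference counterparts, especially for Reference Interlocking* and Salient Dimension Overweighted*, whose antecedents are phrased via $\succsim^R$ under an endogenous reference point. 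The identity $E^{R,k}=\mathbb{R}^2_{++}$ from Lemma \ref{lem:BGS region} is what makes the translation clean, since it guarantees $\succsim^R$ is defined on all of $\mathbb{R}^2_{++}$ and agrees with $\succsim^k$; the secondary subtlety is checking that CAR aligns the reference points so as to yield precisely the indifference-existence hypothesis demanded by Proposition \ref{thm:BGS preference}.
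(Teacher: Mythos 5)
Your skeleton coincides with the paper's own proof (Lemma \ref{lem:BGS region} gives Revealed Structure; Theorem \ref{thm: strong CTM choice} produces a rationalizing Strong Increasing CTM; then the BGS-specific steps), but your black-box invocation of Proposition \ref{thm:BGS preference} contains a genuine gap. The ancillary hypothesis of that proposition is \emph{per-reference}: for every $r\in X$ and every pair of categories $k,j$ there must exist $x \in K^k(r)$ and $y \in K^j(r)$ with $x \sim_r y$. CAR is a reference-free, global condition: for each $x \in E^{R,k}$ it produces some $y \in E^{R,j}$ that is revealed indifferent to $x$, but that $y$ is only guaranteed to be $j$-categorized at the reference points of whatever menus generate the revealed indifference --- it need not lie in $K^j(r)$ for the particular $r$ at which the hypothesis is demanded. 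The two conditions are not equivalent, and the paper's own footnote to Proposition \ref{thm:BGS preference} points out that the per-reference hypothesis can fail for legitimate BGS representations (e.g., with bounded utility indexes), so it cannot be a consequence of CAR. As written, your appeal to Proposition \ref{thm:BGS preference} is therefore not licensed.

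The repair is exactly what the paper does: do not re-derive Strong CTM through Proposition \ref{thm:BGS preference} --- its per-reference hypothesis is used only to feed Theorem \ref{result: strong CTM}, which you do not need because Theorem \ref{thm: strong CTM choice} (via CAR) has already given you Strong Increasing CTM. What remains of the proof of Proposition \ref{thm:BGS preference} is the reweighting step (the argument of Theorem \ref{thm: RI}) and the SDO step, neither of which uses that hypothesis. Your transfer of Reference Interlocking* to Reference Interlocking is legitimate, since both are phrased through within-category relations that coincide with $\succsim^{R,k}$ once $E^{R,k}=\mathbb{R}^2_{++}$. But the transfer of Salient Dimension Overweighted* is \emph{not} a matter of identifying $\succsim^{R}$ with $\succsim^k$: SDO* is stated in terms of choices from pairs of menus, not in terms of $\succsim^R$. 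To run the SDO argument of Proposition \ref{thm:BGS preference} (which uses the references $(x_1,x_2/2)$ and $(x_1/2,x_2)$), you must construct menus $S$ and $S'$ whose generalized-average reference points approximate these targets, in which $x$ and its within-category-indifferent perturbation $y$ remain correctly categorized (category continuity, Lemma \ref{lem: usc}) and are the unique utility maximizers --- e.g., by padding the menus with points near the reference, which have strictly lower utility by monotonicity --- so that $y \in c(S)$, whence SDO* yields $x \notin c(S')$ and hence $U^2(y)>U^2(x)$. This menu construction, exploiting the strong generalized average, is the substantive content of the endogenous-reference step and is absent from your proposal.
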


This proposition lays out the behavioral postulates that characterize the BGS model with endogenous reference point formation. Most importantly, it connects the (unobserved) components of the model to observed choice behavior. Fundamentally, the properties that Proposition \ref{thm:BGS preference}  characterized the model in our first setting still characterize it. To do so, we note that Theorems \ref{thm: BGS regions} and \ref{thm: strong CTM choice} imply that there exists a Strong CTM with categories generated by a salience function. We then establish that choice within the $k$-salient alternatives overweights dimension $k$ by using SDO and the structure of regions.

Finally, we ask the question of whether the choice correspondence with an endogenous reference point provides enough leverage to identify salience.
	\begin{proposition}
		\label{prop: revealing regions with c}
		Given that $c$ conforms to BGS with a strong generalized average, the categories are uniquely identified.
	\end{proposition}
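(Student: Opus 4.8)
The plan is to reduce everything to Proposition \ref{prop: revealing regions}. That result already shows that once the family $\{\succsim_r\}_{r\in X}$ is in hand, the categories are pinned down by the explicit formula for $\hat{K}^i(r)$, because BGS always satisfies $LIS^1(x)\neq LIS^2(x)$ (as noted after Proposition \ref{prop: general category ID}, the distinct weight ratios give distinct local indifference sets at every point). So it suffices to show that $c$, together with the known reference generator $A$, determines the local indifference relation $\sim_\rho$ around each point $x$ for the three references $\rho\in\{r,r^1_x,r^2_x\}$ that appear in that formula, where $r^1_x=(x_1/2,x_2)$ and $r^2_x=(x_1,x_2/2)$.

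First I would recover the within-category structure. By Lemma \ref{lem:BGS region}, $E^{R,k}=\mathbb{R}^2_{++}$, so every alternative is uniquely chosen from some menu when it is $k$-salient; combined with Category-SARP and the representation of Theorem \ref{thm: strong CTM choice}, the indirect revealed relation $\succsim^R$ restricted to pairs categorized in $k$ recovers the within-category ranking $\succsim^k$, hence the level sets of each category utility $U^k$, on all of $X$. Because BGS is a Strong CTM, $\succsim_\rho$ restricted to any single category is reference-independent and coincides with the corresponding $\succsim^k$; thus the only missing ingredient is, for each reference $\rho$ and point $x$, which category governs $x$ at $\rho$.

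Second, and this is the crux, I would show how to read $\sim_\rho$ locally at $x$ from choice. Using that $A$ is a strong generalized average, so $A(S)\in co(S)\setminus ext(S)$, together with property (ii) of the generalized-average definition (augmenting a menu by points clustered near existing ones while keeping the reference within $\epsilon$ of the target) and the continuity property (i), I would build a sequence of menus $S_n$ with $A(S_n)\to\rho$ that contain $x$ and test points $y$ near $x$ chosen on the $U^1$-level curve through $x$. To decide whether $x\sim_\rho y$ I would compare $x$ and $y$ against a common background that fixes the reference near $\rho$: swapping $x$ for $y$ moves the reference only negligibly (by (i)), so whether the two are chosen against identical competitors reveals whether they have equal value at reference $\approx\rho$, and hence whether $x$ is $1$- or $2$-salient there. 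Category Monotonicity* guarantees that dominated padding is never chosen, so the background can be arranged not to mask $x$ and its probes. Passing to the limit and invoking continuity of $K^k(\cdot)$ (\textbf{S0}) converts the statements read off at $A(S_n)$ into statements at $\rho$; since the formula of Proposition \ref{prop: revealing regions} takes an interior, this limiting step is robust.

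The main obstacle is exactly this second step: simultaneously forcing the endogenous reference to equal (a dense set of, and then by continuity all) target values $\rho$ while keeping $x$ and its local probes among the chosen alternatives, so that the within-category indifference — and hence the comparison of $\sim_\rho$ with $\sim_{r^i_x}$ — is genuinely revealed rather than hidden by better unchosen options. The strong-generalized-average hypothesis (interiority of the reference plus the clustering property) is precisely what makes the needed menus available, and the diagnostic references $r^1_x$ and $r^2_x$ lie close to $x$, so those comparisons are immediate; the real work is in realizing arbitrary $r$. Once $\sim_\rho$ has been recovered for the relevant $\rho$, the formula of Proposition \ref{prop: revealing regions} applies verbatim to the reconstructed family, and since that reconstruction depends only on the observables $(c,A)$, the categories are uniquely identified.
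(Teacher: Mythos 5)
Your first step (recovering the within-category utilities via Lemma \ref{lem:BGS region} and Theorem \ref{thm: strong CTM choice}) is sound, and your basic diagnostic --- put a probe $y$ on the same $U^k$-level curve as $x$ and check whether both are chosen --- is exactly the test the paper uses. But the step you yourself flag as the crux fails, and not merely because it is hard: menus $S_n$ with $A(S_n)\to\rho$ that keep $x$ and its probes chosen \emph{do not exist} for arbitrary $\rho$. Since $A$ is a strong generalized average, $A(S)\in co(S)$; with monotone additive category utilities (take the linear BGS specification for concreteness), $\max_{z\in S}U^k(z)\geq U^k(A(S))$ for each $k$. So whenever $U^1(\rho)>U^1(x)$ and $U^2(\rho)>U^2(x)$ --- for instance $\rho\gg x$ --- every menu whose reference is near $\rho$ contains an alternative that beats $x$ regardless of categorization, hence $x\notin c(S)$ and the local relation $\sim_\rho$ at $x$ is never revealed by choice. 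The formula of Proposition \ref{prop: revealing regions} that you want to apply verbatim quantifies over all references $r$ and requires knowing $\sim_r$ in a neighborhood of every $x$, including such pairs, so the reduction breaks down precisely where you wave at the strong-generalized-average hypothesis to rescue it.

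The missing idea is the one the paper's proof is built around: restrict the direct revelation to configurations $y\gg r$, where padding clustered near $r$ lies strictly below $y$ in both category utilities, so that $y$ and its probe $y'$ (with $U^k(y)=U^k(y')$) are the maximizers and $c(S(y'))=\{y,y'\}$ holds if and only if both are $k$-salient; and then extend from dominating pairs to \emph{all} alternative--reference pairs using the symmetry property \textbf{S2} of salience-generated categories (Theorem \ref{thm: BGS regions}). Concretely, $\sigma(x,a)>\sigma(y,b)$ can always be rewritten as a statement that some point strictly dominating its reference belongs to $K^1$ of that reference (the four cases listed in the paper's proof, obtained by swapping alternative and reference coordinates), so identification on the cone $\{y\gg r\}$ pins down the categories everywhere. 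Without this symmetry step, your reconstruction of the family $\{\succsim_\rho\}$ is simply not available from the observables, and no amount of care with the limiting argument fixes it; with it, the elaborate realization of arbitrary references that you describe becomes unnecessary.
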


As with Propositions \ref{prop: revealing regions} and \ref{thm:BGS preference}, Propositions \ref{thm: BGS choice} and \ref{prop: revealing regions with c} provide a roadmap for testing BGS without a known salience function. However, it still requires that the reference generator is a strong generalized average.
Consequently, the axioms capture the full testable implication of the model and allow for tight comparisons with other existing work.

\section{Related Literature}
\label{sec: conclusion}

This paper  is closely  related to the literature which studies how a reference point affects choices, (e.g. \citet*{TK91,munro2003theory,sugden2003reference,MO05,sagi2006anchored,
salant2008f,apesteguia2009theory,masatlioglu2013choice,MO2014,dean2017limited}). 
The papers focus on an exogenous reference point, as in Section \ref{sec: axiomatization}.
While TK and MO are examples of CTM, the others are not. Nonetheless, our analysis puts the models on an equal footing so their implications can be compared.

Our extension to endogenous reference point formation adopts the approach of a number of  recent papers, e.g. \citet*{bodner1994centroid,kivetz2004alternative,orhun2009,bordalo2012salience,
tserenjigmid2015choosing}. As in Section \ref{sec: Choice Correspondence}, the reference point is a function of the context, and is identical for all feasible alternatives. Finally, \citet*{KR06,ok2014revealed,freeman2017preferred} and \cite{kibris2018} study models where the endogenous reference point is determined by what the agent chooses, but is otherwise independent of the choice set. This represents a very different approach to reference formation, and our approach does not easily generalize to accommodate it.\footnote{\citet*{maltz2017} is the only model of which we are aware that combines an exogenous reference point with endogenous reference-point formation.} 

One of our key contributions is to provide an axiomatization of the salient-thinking model. 
Recent work by \cite{lanzani2019} introduces a model of risk preferences where the correlation between outcomes affects the pair-wise ranking of monetary lotteries. The salient-thinking model under risk is a special case  and an axiomatic characterization is provided. Other than the domain, a key distinction with our result is that the DM violates transitivity, which we avoid by considering reference-dependent preferences. 

Interpreting salience as arising from  differential attention to attributes, CTM has a close relationship with the literature studying how limited attention affects decision making.
 \cite{MNO12} and \cite{manzini2014stochastic}  study a DM who has limited attention to the alternatives available. The DM maximizes a fixed preference relation over the consideration set, a subset of the alternatives actually available. In contrast, in CTM the DM the considers all available alternatives but maximizes a preference relation distorted by her attention.  \cite{caplindean2013}, \cite{deOliveraetal2016} and \cite{ellis2013} study a DM who has limited attention to information. In contrast to CTM, attention is chosen rationally to maximize ex ante utility, rather than determined by the framing of the decision, and choice varies across states of the world. The most related interpretation considers attributes as payoffs in a fixed state. In addition to choices varying across states, each alternative has the same weights on each attribute, similar to \cite{koszegi2013model}. 
Taken together, these results highlight the effects on behavior of different types of attention.

While we argue in this paper that a number of prominent behavioral economic models can be thought of as resulting from categorization, few papers in economics explicitly address categorization.
 \cite{mullainathan2002thinking} provides a model of belief updating and shows how categorization can generate non-Bayesian effects. \cite{fryer2008categorical} introduce a categorical model of cognition where a decision maker categorizes her past experiences. Since the number of categories is limited, the decision maker must group distinct experiences in the same category. In this model, prediction is based on the prototype from the category which matches closely the current situation. 
Finally, \cite{manzini2012categorize} introduce a two-stage  decision-making model. In the first stage, a decision maker eliminates some alternatives based on the categories to which they belong, and in the second stage she maximizes her preference among those that  survived the  first stage. 
\cite{BGS_MAC} provide a model of memory and attention, where the context's similarity to past consumption opportunities affects the salience of the alternatives currently available. They show this leads to endogenous categorization of the current opportunity set, and discuss the resulting implications for choice.


\appendix

\singlespacing
\section{Proofs and Extras from Sections \ref{sec: Model} - \ref{sec: BGS}}
\subsection{Proof of Theorem \ref{thm: weak CTM}}
\label{pf: thm: weak CTM}
\begin{lemma}
	 $\succ^{k*}$ has open upper and lower contour sets in $E^k$.
	 \label{lem: cont trans closure}
\end{lemma}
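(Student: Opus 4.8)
The plan is to reduce first to the one-step relation $\succsim^k$ and then propagate openness through the transitive closure. For the one-step relation: if $y \succ^k x$ then there is a witness $r$ with $x,y \in K^k(r)$ and $y \succ_r x$. The first part of Category Continuity makes $UC_k(x)=\{y' \in K^k(r): y' \succ_r x\}$ open, and since $K^k(r)$ is regular open it is open in $X$. Every $y' \in UC_k(x)$ lies in $K^k(r)$ together with $x$ and has $y' \succ_r x$, so $y' \succ^k x$; thus $UC_k(x)$ is an open neighborhood of $y$ contained in $\{y : y \succ^k x\}$. The symmetric argument using $LC_k(x)$ handles $\{y : x \succ^k y\}$.

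To pass to $\succ^{k*}$ I would first record the consequence of Weak Reference Irrelevance that $y_0 \succ^{k*} x$ holds exactly when there is a weak $\succsim^k$-chain from $y_0$ to $x$ containing a strict link; a short cycle argument shows that the reverse comparison $x \succsim^{k*} y_0$ is then impossible, so the chain indeed certifies strict preference. The favorable case is when such a chain can be chosen with its \emph{first} link strict, say $y_0 \succ^k w^1$ with $w^1 \succsim^{k*} x$: then $UC_k(w^1)$ is a full open neighborhood of $y_0$, and for each $y'$ in it we get $y' \succ^k w^1 \succsim^{k*} x$, hence $y' \succ^{k*} x$. The symmetric statement (a strict \emph{last} link) handles the lower contour set via $LC_k$. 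Combined with Category Monotonicity, this already settles the directions in which a dominating perturbation of the endpoint converts a weak link into a strict one.

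The crux is the case where $y_0$ attaches to the chain only through an indifference, $y_0 \sim_r w^1$, with all strictness occurring strictly later. A single perturbation of $y_0$ inside $K^k(r)$ does not suffice, because moving to the dominated side of the $r$-indifference surface through $w^1$ may drop the perturbed point below $x$. My plan is to manufacture the needed slack using Category Monotonicity together with the local additive representation of $\succsim_r$ guaranteed earlier in the proof. It is enough to produce a single $\hat y \ll y_0$ lying with $y_0$ in a common $K^k(r_0)$ and still satisfying $\hat y \succ^{k*} x$: then $\{y : y \gg \hat y\} \cap K^k(r_0)$ is an open neighborhood of $y_0$ on which monotonicity yields $y \succ^k \hat y \succ^{k*} x$, so $y \succ^{k*} x$ throughout.

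Establishing the existence of this dominated witness $\hat y$ — equivalently, that strict preference $y_0 \succ^{k*} x$ leaves room on the dominated side — is the main obstacle. I expect it to draw on Weak Reference Irrelevance to rule out cyclic obstructions, on the connectedness of $E^k$ and of the indifference classes $\{y \in E^k : x \sim^{k*} y\}$ from the Structure Assumption to keep those classes thin, and possibly on first establishing completeness of $\succsim^{k*}$ on $E^k$; with completeness in hand, openness of the strict contour sets is equivalent to closedness of the weak ones, which is the form in which the indifference-boundary case is most naturally controlled. The lower contour set is treated by the mirror-image argument throughout.
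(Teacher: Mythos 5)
Your setup and your reduction are correct: the one-step case follows from Category Continuity exactly as you say, the chain-with-a-strict-link characterization of $\succ^{k*}$ is right (acyclicity rules out the reverse comparison), and it does suffice to produce a dominated witness $\hat y \ll y_0$ lying with $y_0$ in a common $K^k(r_0)$ and satisfying $\hat y \succ^{k*} x$. But the construction of $\hat y$ is precisely where you stop, and that construction is the entire content of the lemma; ``I expect it to draw on\dots'' is not an argument. Worse, the tools you propose to reach for are the wrong ones. Completeness of $\succsim^{k*}$ is established in the paper only \emph{after} this lemma (Lemma \ref{lem: trans closure complete}), and its proof itself invokes topological control of $\succsim^{k*}$ (e.g., that indifference sets are closed in $E^k$) of the sort the present lemma delivers, so the ``completeness first'' route risks circularity. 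The Structure Assumption plays no role in this lemma at all: in the paper it is reserved for the Chateauneuf--Wakker aggregation step later in the proof of Theorem \ref{thm: weak CTM}.

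The missing idea is a backward induction along the witnessing chain, using nothing beyond Category Continuity, Category Monotonicity, and transitivity of each \emph{single} $\succsim_{r}$. Write the chain as $y_0=x^1 \succsim_{r^1} x^2 \succsim_{r^2} \cdots \succsim_{r^{M-1}} x^M=x$ with $x^j,x^{j+1}\in K^k(r^j)$, let $m$ index a strict link, and pick $\epsilon>0$ so that $B_{\epsilon}(x^j),B_{\epsilon}(x^{j+1})\subset K^k(r^j)$ for every $j$. At the strict link, Category Continuity yields $\epsilon^*_m\in(0,\epsilon)$ with $x^m-\epsilon^*_m \succ_{r^m} x^{m+1}$ (subtraction in every coordinate). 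Now walk backwards: given $x^{j+1}-\epsilon^*_{j+1}\succ^{k*}x$ with $\epsilon^*_{j+1}<\epsilon$, both $x^{j+1}$ and $x^{j+1}-\epsilon^*_{j+1}$ lie in $K^k(r^j)$, so Category Monotonicity gives $x^{j+1}\succ_{r^j} x^{j+1}-\epsilon^*_{j+1}$, and composing the \emph{weak} link $x^j\succsim_{r^j}x^{j+1}$ with this strict dominance via transitivity of the complete relation $\succsim_{r^j}$ gives the \emph{strict} preference $x^j\succ_{r^j}x^{j+1}-\epsilon^*_{j+1}$; Category Continuity then supplies $\epsilon^*_j\in(0,\epsilon)$ with $x^j-\epsilon^*_j\succ_{r^j}x^{j+1}-\epsilon^*_{j+1}$, and acyclicity propagates $x^j-\epsilon^*_j\succ^{k*}x$. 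This is exactly how a weak first link is converted into slack on the dominated side: not through any global property of $\succsim^{k*}$, but because within each fixed reference the preference is a bona fide weak order, so weak-then-strict composes to strict. After finitely many steps you reach $\hat y=y_0-\epsilon^*_1$, which is your dominated witness; your own closing argument (Category Monotonicity on $\{y: y\gg \hat y\}\cap K^k(r^1)$ plus acyclicity) then finishes the upper contour set, and the mirror-image argument handles the lower one.
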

\begin{proof}
	Suppose $x\succ^{k*} y$. Then, there are $x^1,x^2,\dots, x^M \in E^k$ and $r^1,\dots,r^{M-1}$ with $x^1=x$ and $x^M=y$ so that $x^j\succsim_{r^j} x^{j+1}$ and $x^j,x^{j+1} \in K^k(r^j)$.
	Let $\epsilon_j>0$ be such that $B_{\epsilon_j}(x^j),B_{\epsilon_j}(x^{j+1}) \subset K^k(r^j)$.
	Set $\epsilon=\min\{ \epsilon_j\}_{j<M} $.
	
	Now, $x^j \succ^k x^{j+1}$ (and so $x^j \succ_{r^j} x^{j+1}$) for at least one $j$. Let $m$   be an index for which this is true.
	Since $B_{\epsilon}(x^{m}),B_{\epsilon}(x^{m+1}) \subset K^k(r^m)$,
	there exists $0<\epsilon^*_m<\epsilon$ be such that $B_{2\epsilon^*_m}(x^m)$  is a subset of $\{x \in K^k(r^m):x \succ_{r^m} x^{m+1}\}$ by Category Continuity.
	Then, $x^m-\epsilon^*_m \succ_{r^m} x^{m+1}$, so $x^m-\epsilon^*_m \succ^k x^{m+1}$ and, by definition of $\succsim^{k*}$, it follows that $x^m-\epsilon^*_m \succ^{k*} y$. Assume (IH) that there is $\epsilon^*_{m-j}\in (0, \epsilon)$ so that $x^{m-j}-\epsilon^*_{m-j} \succ^{k*} y$. 
	Then,   \[
	x^{m-j-1} \succ_{r^{m-j-1}} x^{m-j}-\epsilon^*_{m-j} \]
	since  $ x^{m-j} \succ_{r^{m-j-1}} x^{m-j}-\epsilon^*_{m-j}$ by Category Monotonicity, $x^{m-j-1} \succsim_{r^{m-j-1}} x^{m-j}$ by definition, and transitivity of $\succsim_{r^{m-j-1}}$.
	By Category Continuity and Monotonicity, there then exists $\epsilon^*_{m-j-1} \in (0,\epsilon)$ so that $x^{m-j-1}-\epsilon^*_{m-j-1} \succ_{r^{m-j-1}} x^{m-j}-\epsilon^*_{m-j} $, and by definition it follows that $x^{m-j-1}-\epsilon^*_{m-j-1} \succ^k x^{m-j}-\epsilon^*_{m-j} $.
	By (IH), Weak Reference Irrelevance, and the definition of $\succsim^{k*}$, it follows that $x^{m-j-1}-\epsilon^*_{m-j-1} \succ^{k*} y$.  
		Therefore, there is $\epsilon^*_1 \in (0, \epsilon)$ so  that $ x^1 - \epsilon^*_1 \succ^{k*} y$, so by Category Monotonicity, Weak Reference Irrelevance, and definition of $\succsim^{k*}$, we have $x' \succ^{k*} y$ for any $x'\in B_{\epsilon^*_1}(x)$. Conclude the upper-contour set is open; similar arguments hold for the lower-contour set.	
\end{proof}

\begin{lemma}
	\label{lem: trans closure complete}
	$\succsim^{k*}$ is complete on $E^k$.
\end{lemma}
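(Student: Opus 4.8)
The plan is to fix two alternatives $x,y\in E^k$ and show they are $\succsim^{k*}$-comparable, exploiting that each $\succsim_r$ is itself complete, together with the open contour sets from Lemma~\ref{lem: cont trans closure} and the connectedness built into the Structure Assumption. Two facts I would record at the outset are: (i) since $\succsim_r$ is complete, any two points lying in a common $K^k(r)$ are automatically $\succsim^k$-comparable, and on a small ball $B\subseteq K^k(r)$ the within-category additive representation makes $\succsim^{k*}$ restricted to $B$ a complete order represented by a continuous index $u$ (Weak Reference Irrelevance rules out a reversed chain, so $u(a)>u(b)$ forces $a\succ^{k*}b$); and (ii) by Category Monotonicity, along the slice $V_x=\{z\in E^k:z_1=x_1\}$ the relation $\succsim^{k*}$ is a strict total order in the second coordinate, and $V_x$ is connected by the Structure Assumption, so $V_x$ is order-isomorphic to an interval and every point of it is comparable to $x$.

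First I would locate $y$ relative to the slice $V_x$. The sets $A=\{z\in V_x:z\succ^{k*}y\}$ and $B=\{z\in V_x:y\succ^{k*}z\}$ are relatively open in $V_x$ by Lemma~\ref{lem: cont trans closure}, disjoint, and respectively up- and down-closed for the slice order. If they cover $V_x$, connectedness forces one of them to be empty and the other to be all of $V_x$; since $x\in V_x$, this directly yields $x\succ^{k*}y$ or $y\succ^{k*}x$ and we are done. Otherwise there is a point $z^\ast\in V_x$ that is neither strictly above nor strictly below $y$. Because $z^\ast$ lies on $V_x$ it is comparable to $x$, so once I establish $z^\ast\sim^{k*}y$, transitivity of $\succsim^{k*}$ delivers a comparison between $x$ and $y$.

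The crux is therefore to rule out that $z^\ast$ is merely \emph{incomparable} to $y$ rather than indifferent: the local representation and the open contour sets only give that $z^\ast$ is not strictly above and not strictly below $y$, which is a priori consistent with incomparability. Here I would invoke the second clause of Category Continuity together with the connectedness of the indifference set $\{z\in E^k:z\sim^{k*}y\}$ from the Structure Assumption. Working in a ball around $z^\ast$, the comparable points split into a $u$-up-closed piece (those $\succsim^{k*}y$) and a $u$-down-closed piece (those $\precsim^{k*}y$), whose common boundary is a level set of $u$; the empty-interior condition guarantees that, on a dense set of configurations, the category actually containing $z^\ast$ meets the indifference curve of $y$, so a genuine indifferent witness exists arbitrarily close to $z^\ast$, and continuity of $u$ propagates this to $z^\ast$ itself. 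Connectedness of the indifference set then pins $z^\ast$ into it, giving $z^\ast\sim^{k*}y$.

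I expect this last step to be the main obstacle: everything before it is a routine packaging of monotonicity, connectedness of the slice, and the open contour sets, but distinguishing indifference from incomparability at the interface between categories is exactly the pathology that the empty-interior clause of Category Continuity and the connectedness hypotheses of the Structure Assumption are designed to preclude. Once $z^\ast\sim^{k*}y$ is secured, combining it with the comparability of $z^\ast$ and $x$ along $V_x$ completes the argument that $x$ and $y$ are comparable, establishing that $\succsim^{k*}$ is complete on $E^k$.
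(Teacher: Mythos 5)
Your approach differs from the paper's, but it contains two genuine gaps. The first is circular: you assert that, by Category Monotonicity, every point of the slice $V_x=\{z\in E^k:z_1=x_1\}$ is $\succsim^{k*}$-comparable to $x$. Category Monotonicity only delivers the negative half: if $z\geq z'$ and $z\neq z'$, then $z'\not\succsim^{k*} z$. The positive half, $z\succsim^{k*} z'$, requires exhibiting a finite chain of alternatives in which consecutive members share a category $K^k(r)$ for some reference $r$ and are ranked by $\succsim_r$ --- and producing such chains between arbitrary points of $E^k$ is exactly the content of the completeness lemma you are trying to prove. Unless $z$ and $z'$ happen to lie in a common $K^k(r)$ (in which case the ``in particular'' clause of the axiom applies), monotonicity is silent, so the claim that $V_x$ is totally ordered, and in particular that $z^*$ is comparable to $x$, is an unproved instance of the lemma itself.

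The second gap is the step you yourself flag as the crux: showing that the point $z^*$, which is neither strictly above nor strictly below $y$, is indifferent to $y$ rather than incomparable. The tools you invoke do not deliver this. The empty-interior clause of Category Continuity concerns the contour sets $UC_j(\cdot)$ and $LC_j(\cdot)$ of a single preference $\succ_r$ within a category for a fixed reference; it is not a statement about the transitive closure $\succsim^{k*}$ over all of $E^k$, and the sketch of how it yields ``indifferent witnesses'' near $z^*$ is not an argument. Moreover, even granting nearby witnesses $w\sim^{k*}y$, local comparability of $z^*$ with $w$ gives comparability of $z^*$ with $y$ in one direction only, and comparability is not enough: from $x\succsim^{k*}z^*$ and $y\succsim^{k*}z^*$ nothing follows about $x$ versus $y$, which is precisely why you insisted on genuine indifference --- but genuine indifference is never established. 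Finally, ``connectedness of the indifference set then pins $z^*$ into it'' is a non sequitur: connectedness of $\{z\in E^k:z\sim^{k*}y\}$ says nothing about whether a particular point belongs to that set. For contrast, the paper sidesteps the indifference-versus-incomparability issue entirely: it takes a continuous path from $x$ to $y$ inside $E^k\cap B_{d(x,y)+1}(x)$, modifies it so that it crosses each indifference curve at most once, covers its compact image by finitely many balls each contained in a single $K^k(r_z)$ (on which the relation is complete), and chains comparisons through points chosen in consecutive overlaps, so that only transitivity --- never a dichotomy between indifference and incomparability --- is required.
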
 
\begin{proof}
	Pick any $x,y\in E^k $ and let $E^*=E^k \bigcap B_{d(x,y)+1}(x)$.
	As the intersection of two intersecting connected sets, $E^*$ is connected, and as a subset of $\mathbb{R}^n$, there is a continuous path $\theta:[0,1]\rightarrow E^*$ so that $\theta(0)=x$ and $\theta(1)=y$.
	
	This $\theta$ can be chosen so that it crosses each $\succsim^{k*}$ indifference curve at most once. To see why, suppose that $\theta(a) \sim^{k*}  \theta(b)$ and $b>a$.
	First, we show that $IC=\{b' \in E^*: b' \sim^{k*}  \theta(a) \}$ is path-connected. Then, $IC_{-n}=\{y_{-n} :y  \in IC \} \subset \mathbb{R}^{n-1} $ is also connected as the projection of $IC$ onto the first $n-1$ coordinates.
Moreover it is open since $y \in E^k$ implies there is a reference $r$ and $\epsilon>0$ so that $B_\epsilon(y) \subset K^k(r)$ and $\succsim_r$ is complete, transitive, monotone, and continuous when restricted to $B_\epsilon(y)$.
Conclude $IC_{-n}$ is path-connected as a connected open subset of $\mathbb{R}^{n-1}$. 
Now for any $a',b' \in IC$, there is a path $\theta''_{-n}:[0,1]\rightarrow IC_{-n}$ from $a'_{-n}$ to $b'_{-n}$, and 
for each $c_{-n} \in I$ there is a unique $c_n$ so that $c=(c_{-n},c_n) \in IC$ by category monotonicity.
Let $\theta''_n(x)$ be such that $(\theta''_{-n}(x),\theta''_n(x)) \in IC$. $\theta''_n$ is continuous since $IC$ is closed in $E^k$. Then, $\theta''=x \in [0,1] \mapsto (\theta''_{-n}(x),\theta''_n(x)) \in IC$ is the desired path.
Hence there is a path $\theta':[0,1]\rightarrow IC$ with $\theta'(0)=\theta(a)$ and $\theta'(1)=\theta(b)$.
	Then the path $\theta^*$ given by $\theta^*(x)=\theta(x)$ for $x\notin [a,b]$ and $\theta^*(x)=\theta' \left( \frac{x-a}{b-a} \right)$ for $x\in [a,b]$ is also a continuous path from $x$ to $y$. Constructing this for $a^* = \min \{a':\theta(a') \sim^{k*} \theta(a) \} $ and $b^* = \max \{a':\theta(a') \sim^{k*} \theta(a)\}$ gives a path that crosses $IC$ at most once. These are well-defined since $\theta$ is continuous.
	
	Now, let $Y=\theta^{-1}([0,1])$. $Y$ is closed since $\theta$ is continuous and so compact as a subset of $cl(B_{d(x,y)+1}(x))$. For any $z\in Y$, there exists $r_z \in X$ and $\epsilon_z>0$ so that $B_z= B_{\epsilon_z}(z) \subset K^k(r_z)$.	Since $B_z\subset  K^k(r_z)$ and $\succsim^k$ is a subrelation of   $\succsim^{k*}$,  $\succsim^{k*}$ is complete and transitive when restricted to $B_z$.
	Then, the collection $\{B_z:z\in Y\}$ is an open cover of $Y$ and hence has a finite subcover $B_{z_1},B_{z_2},\dots, B_{z_m}$. W.L.O.G., $B_{z_j}$ is not a subset of $B_{z_{j'}}$ for any $j,j'$ and $\theta(z_j) < \theta(z_{j+1})$, so $x \in B_{z_1}$ and $y \in B_{z_m}$.
	Moreover, since $\theta$ crosses each indifference curve only once, if $z_k \succ^{k*} z_{k+1}$ ($z_k \prec^{k*} z_{k+1}$) for any $k$, then $z_j \succsim^{k*} z_{j'}$ ($z_k \precsim^{k*}   z_{k+1}$) for any $j'>j$.
	W.L.O.G. consider the former. Pick $a^1 \in B_{z_1} \bigcap B_{z_2} \bigcap Y$ so that $x\succsim^k a^1 $ and then pick $a^j \in B_{z^j} \bigcap B_{z^{j+1}} \bigcap Y$ so that $a^{j-1}\succsim^k a^j$. Then, $$x \succsim^{k*} a^1 \succsim^{k*} a^2 \succsim^* \dots \succsim^{k*} a^m \succsim^{k*} y.$$ Since $\succsim^{k*}$ is transitive, we conclude $x \succsim^{k*} y$. Since $x,y$ were arbitrary, $\succsim^{k*}$ is complete.
\end{proof}

Apply CW Theorem 2.2 to get an additive representation $U^i(x)$ on $E^i$.
For any $x,y \in K^i(r)$, $x \succsim_r y$ if and only if $U^i(x) \geq U^i(y)$ and $U^i(x)=\sum_j U^i_j(x_j)$.

\begin{lemma}
	\label{lem: indif point or all same}
	For categories $K^i(r)$ and $K^j(r)$, either (i) there exists $ x^i \in K^i(r)$ and $x^j \in K^j(r)$ so that $x^i \sim_r x^j$; or (ii) $x^i \succ_r x^j$ for all $x^i\in K^i(r)$ and $x^j \in K^j(r)$; or (iii)  $x^j \succ_r x^i$ for all $x^i\in K^i(r)$ and $x^j \in K^j(r)$.
\end{lemma}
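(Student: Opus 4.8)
The plan is to prove the trichotomy by assuming alternative (i) fails --- that no $x^i\in K^i(r)$ is indifferent to any $x^j\in K^j(r)$ --- and deriving that one of (ii), (iii) must hold. The first step is to observe that, by the within-category additive representations $U^i,U^j$ already in hand, the cross-category comparison depends only on utility \emph{levels}: if $U^i(x^i)=U^i(\hat x^i)$ then $x^i\sim_r\hat x^i$, so by transitivity of $\succsim_r$ we get $x^i\succ_r x^j\iff\hat x^i\succ_r x^j$. Writing $R^i=U^i(K^i(r))$ and $R^j=U^j(K^j(r))$, the comparison therefore descends to a relation between levels $s\in R^i$ and $t\in R^j$ that is strict in one direction for every pair (since (i) fails) and monotone (by Category Monotonicity and transitivity).

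Second, I would fix a level $t\in R^j$ realized by some $x^j_0$ and examine $P=UC_i(x^j_0)$ and $N=LC_i(x^j_0)$. Category Continuity makes both open, Category Monotonicity and transitivity make $P$ up-closed and $N$ down-closed in the $U^i$-order, and failure of (i) makes them partition $K^i(r)$. A short argument then shows that if both $P$ and $N$ are nonempty the common threshold value $c(t)=\sup_{N}U^i=\inf_{P}U^i$ cannot be attained inside $K^i(r)$: an attained value would sit in an open piece of $P$ or of $N$, while by strict monotonicity of $U^i$ it has both larger and smaller $U^i$-values arbitrarily close, hence points of both $P$ and $N$ in one neighborhood, a contradiction. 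In other words, a splitting level is always a value that $R^i$ \emph{omits}.

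Third --- and this is where the topological hypotheses do the work --- I would use that $U^i(x)=\sum_k U^i_k(x_k)$ is continuous on all of $X$ and that $cl(K^i(r))$ is connected, so that $\bar R^i:=U^i(cl(K^i(r)))$ is a genuine interval in which $R^i$ is dense. The monotone exchange map $t\mapsto c(t)$ then carries the interval $\bar R^j$ (obtained the same way for $j$) to a connected image. If that image lies entirely at or above $\sup R^i$, then every $x^j$ beats every $x^i$ and (iii) holds; if it lies entirely at or below $\inf R^i$, then (ii) holds. In the only remaining case the image meets the interior of $\bar R^i$, and since $R^i$ is dense there, some threshold $c(t)$ is actually attained in $K^i(r)$; by the second step this forces an indifferent cross-category pair, contradicting the assumed failure of (i). The empty-interior clause of Category Continuity is exactly what excludes the degenerate possibility that every $t$ is a splitting level, i.e. that the whole interaction hides in omitted values.

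The main obstacle is precisely that a category is only required to be regular open with connected closure, so $K^i(r)$ --- and hence its utility range $R^i$ --- can be disconnected (two open regions pinched together along an indifference surface). Consequently $R^i$ need not be an interval, and the naive ``connected domain $\Rightarrow$ connected image $\Rightarrow$ an attained separating level'' argument fails on $K^i(r)$ itself. The resolution, and the delicate part of the write-up, is to transfer connectedness to the closure via the globally continuous additive $U^i$, exploit density of $R^i$ in the interval $\bar R^i$, and invoke the empty-interior half of Category Continuity to rule out the pathological case in which the entire cross-category frontier sits on values omitted by $R^i$.
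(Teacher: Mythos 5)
Your steps 1 and 2 are sound: cross-category comparisons do descend to utility levels, and any splitting threshold must be a value that $R^i$ omits. The proof breaks down at step 3, in two places. First, the claim that the monotone map $t\mapsto c(t)$ carries the interval $\bar R^j$ to a connected image is false: monotone functions can jump, so the image of an interval under a monotone map need not be connected (e.g.\ the thresholds could cluster in two distinct ``cracks'' of $R^i$). Nothing in your argument gives continuity of $c$, and establishing it is essentially equivalent to what you are trying to prove. Second, and more fundamentally, even a connected image does not deliver your contradiction: by your own step 2, under failure of (i) \emph{every} splitting threshold is an omitted value, so the image of $c$ over splitting levels lies entirely in $\bar R^i\setminus R^i$; if that image is a single point (all of $K^j(r)$ hiding in one crack of $R^i$), this is perfectly consistent with density of $R^i$ in $\bar R^i$ and produces no attained threshold. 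You recognize this degenerate case and assert that the empty-interior clause of Category Continuity excludes it, but you give no argument for how, and that is exactly the crux: to apply the clause one must exhibit an \emph{open} set of splitting elements. This requires showing that splitting is robust to perturbation: if $y\succ_r x\succ_r z$ with $x\in K^i(r)$ and $y,z\in K^j(r)$, then openness of $LC_i(y)$ and $UC_i(z)$ (the first half of Category Continuity) gives a ball $B_\epsilon(x)\subset K^i(r)$ on which $y\succ_r x'\succ_r z$, so either some $x'$ in the ball has an indifferent partner in $K^j(r)$ (giving (i)), or the ball lies inside the set that the clause declares to have empty interior. This perturbation argument is, in fact, the paper's entire proof; once you have it, the level/interval/density machinery is superfluous, and without it your argument is incomplete at its central step.

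A secondary gap: you assert that $U^i=\sum_k U^i_k$ is continuous on all of $X$, but the additive representation is only guaranteed on $E^i$, the closure $cl(K^i(r))$ need not be contained in $E^i$, and the components $U^i_k$ may diverge at the boundary of their domains. So the claim that $\bar R^i=U^i(cl(K^i(r)))$ is a genuine interval containing $R^i$ densely is itself unjustified as stated; working in the extended reals or with $cl(R^i)$ would need care, though this issue is repairable in a way the first one is not.
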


\begin{proof}
	If neither (ii) nor (iii) holds, then after relabeling categories if necessary, there exist $x \in K^i(r)$ and $y,z\in K^j(r)$ such that $y\succ_r x \succ_r z$. Let $UC_j (x)$ and $LC_j (x)$ be the strict upper and lower contour sets of $x$ in category $j$ for reference $r$.  Any point in $K^j(r) \setminus [UC_j (x) \bigcup LC_j (x)]$ is indifferent to $x$, so either (i) holds or the set is empty.
	There exists an $\epsilon>0$ such that for every $x' \in B_\epsilon(x)$,  $y\succ_r x' \succ_r z$ by Category Continuity and hence $K^j(r) \neq U_j(x')$ and $K^j(r) \neq L_j(x')$. 
	By Category Continuity, there exists $x'\in B_\epsilon(x)$  such that  $K^j(r) \setminus [UC_j (x') \bigcup LC_j (x')] \neq \emptyset$ (otherwise, $B_\epsilon(x)$ is contained in the interior of the set considered), so we can take $y' \in K^j(r) \setminus [UC_j (x') \bigcup LC_j (x')]$ and conclude $y' \sim_r x'$.
\end{proof}

\begin{defn}
	\label{def: IS}
	A finite sequence $(Q_{1},\dots,Q_{m+1})$ with each $Q_i \in \{ K^1(r), \dots ,K^n(r)\}$ is an \emph{indifference sequence for $r$} (IS) if there exists $x^1,\dots ,x^m,y^1,\dots,y^m$ with $x^k \in Q_{k}$, $y^k \in Q_{k+1}$ and $x^k \sim_r y^k$.
\end{defn}
We omit the dependence on $r$ when clear from context.

Define the relation $\bowtie_r$ by $x \bowtie_r y$ if there exists  an indifference sequence of categories $(Q_1,\dots,Q_m)$ with $x \in Q_1$ and $y \in Q_m$.
It is easy to see that $\bowtie_r$ is an equivalence relation (reflexive, symmetric, and transitive).
Let $[x]_r$ denote the $\bowtie_r$ equivalence class of $x$.
\begin{lemma}
	\label{lem: ordering of ISes}
	If $y \notin [x]_r$ and $x \succ_r y$, then $x^\prime \succ_r y^\prime$ for all $x^\prime \in [x]_r$ and $y^\prime \in [y]_r$.
\end{lemma}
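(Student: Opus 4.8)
The plan is to lift the trichotomy of Lemma~\ref{lem: indif point or all same} from pairs of categories to pairs of $\bowtie_r$-equivalence classes, and then read off the direction from the hypothesis $x \succ_r y$. First I would record the elementary observation that every point in $[x]_r$ or $[y]_r$ lies in some category for $r$ (otherwise it could not occupy the first or last term $Q_1$ or $Q_m$ of any indifference sequence), so I may choose categories $K^a \ni x'$ and $K^b \ni y'$. Since $x'\in[x]_r$, $y'\in[y]_r$, and $[x]_r\neq[y]_r$, these two categories cannot be joined by an indifference link: if some $p\in K^a$ and $q\in K^b$ satisfied $p\sim_r q$, then $(K^a,K^b)$ would be an indifference sequence, giving $x'\bowtie_r y'$ and hence $[x]_r=[y]_r$, a contradiction. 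Thus case (i) of Lemma~\ref{lem: indif point or all same} is impossible for the pair $(K^a,K^b)$, and either every point of $K^a$ strictly beats every point of $K^b$, or the reverse.

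The heart of the argument is to show that this direction is the \emph{same} for every choice of categories meeting the two classes. I would fix a category $K^b \subseteq [y]_r$ and prove, by induction along an indifference sequence, that the direction against $K^b$ is constant across $[x]_r$. Concretely, let $(Q_1,\dots,Q_m)$ be an indifference sequence with every $Q_i\subseteq[x]_r$ (all terms lie in one class, since consecutive ones are $\bowtie_r$-linked), and suppose $Q_i$ beats $K^b$ in the sense of case (ii). At the link between $Q_i$ and $Q_{i+1}$ there are $p\in Q_i$ and $q\in Q_{i+1}$ with $p\sim_r q$. Because every point of $K^b$ is strictly below $p$, transitivity of $\succsim_r$ gives $q\succ_r w$ for all $w\in K^b$, which rules out case (iii) for $(Q_{i+1},K^b)$. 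Case (i) is also excluded, since $Q_{i+1}\subseteq[x]_r$ and $K^b\subseteq[y]_r$ lie in distinct classes. Hence $Q_{i+1}$ beats $K^b$, completing the induction; running this along a sequence connecting any two categories of $[x]_r$ shows the direction is independent of the chosen representative of $[x]_r$. The symmetric argument, fixing a category in $[x]_r$ and walking along an indifference sequence inside $[y]_r$, shows independence of the representative of $[y]_r$ as well. Therefore there is a single well-defined direction between the two classes.

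Finally I would pin down that direction using the hypothesis. Applying the uniform direction to the original witnesses $x\in[x]_r$ and $y\in[y]_r$, the assumption $x\succ_r y$ excludes the possibility that $[y]_r$ beats $[x]_r$; hence $[x]_r$ beats $[y]_r$, i.e. every element of $[x]_r$ strictly beats every element of $[y]_r$. In particular $x'\succ_r y'$ for the arbitrary $x'\in[x]_r$ and $y'\in[y]_r$, as claimed. I expect the only delicate point to be the induction step of the second paragraph: one must simultaneously use the indifference link (to carry the strict comparison across $p\sim_r q$) and the fact that distinct classes admit no indifference link (to exclude case (i)); together these force case (ii) to persist, which is exactly what propagates the ordering through the entire equivalence class.
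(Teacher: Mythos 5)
Your proof is correct and takes essentially the same route as the paper's: both arguments hinge on excluding case (i) of Lemma~\ref{lem: indif point or all same} across distinct $\bowtie_r$-classes (an indifference link would make them one class) and then inducting along an indifference sequence, carrying the strict dominance through each link $p \sim_r q$ by transitivity. The only difference is bookkeeping — you propagate category-versus-category dominance and finish with a two-sided constancy argument, while the paper fixes the point $x$, propagates point-versus-category dominance along the sequence in $[y]_r$, and then repeats symmetrically.
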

\begin{proof} 
	Fix $x,y,r \in X$ with $y \notin [x]_r$ and $x \succ_r y$, and assume $x\in K^k$.
	Pick any $y^\prime \in [y]_r$.
	By definition, there is an IS $(Q_1,\dots,Q_m)$ with $y^\prime \in Q_m$ and $y \in Q_1$.
	Let $i=1$ and $y_1=y$. If there exists $y^{\prime\prime} \in Q_i$ with $y^{\prime\prime} \succsim_r x$, then $y^{\prime\prime} \succsim_r x \succ_r y_i$, so by  Lemma \ref{lem: indif point or all same}, we can find $z\in Q_i$ and $x' \in K^k$ with $z \sim_r x'$. If that occurs, then $(K^k, Q_i,\dots,Q_1)$ is an IS and $y \in [x]_r$, a contradiction. Thus $ x \succ_r y^{\prime\prime}$ for all $y^{\prime\prime} \in Q_i$.	
	Now, there exists $y_{i+1} \in Q_{i+1}$ with $x \succ_r y_{i+1}$ by transitivity and definition of IS. Hence, we can apply above logic to $Q_{i+1}$ as well:  $ x \succ_r y^{\prime\prime}$ for all $y^{\prime\prime} \in Q_{i+1}$. Inductively, this extends all the way to $Q_m$, so $x \succ_r y'$ in particular. Since $y'$ is arbitrary, this extends to any $y' \in [y]_r$.
	
	Similar arguments show that $x'\succ_r y$ for any $x^\prime \in [x]_r$.  Combining, $x' \succ_r y'$ whenever $x' \in [x]_r$  and $y' \in [y]_r$.
\end{proof}

	Fix a reference point $r$. Let $A_1,\dots,A_n$ be the distinct equivalence classes of $\bowtie_r$.
	By Lemma \ref{lem: ordering of ISes}, these sets can be completely ordered by $\succ_r$, i.e. $A_i \succ_r A_j \iff x \succ_r y$ for all $x \in A_i$ and $y\in A_j$.
	Label so that $A_1 \succ_r A_2 \succ_r \dots \succ_r A_n$.

	Pick an indifference class $A_i$ and an IS $Q_1,\dots,Q_M$ that contains points in every region in $A_i$.	
	We define $V_i(\cdot)$ on $A_i$ as follows.
	Define $V_i(x)$ on $Q_1$ so that $V_i(x)=U^{j}(x)$ for all $x \in K^j(r)$ where $K^j(r)=Q_1$. Clearly $V_i$ represents $\succ_r$ 
	when restricted to $Q_1$. There is no loss in assuming that $V_i$ is bounded, and the closure of its range is an interval.%
	\footnote{We can define $V'(x)=h(V(x))$ for $h(v)=-1/(1+v)$ when $v \geq 0$ and $h(v)=-2+1/(1-v)$ when $v<0$.}

	Now, assume inductively that,  for a given $m \leq k$, $V_i$ represents $\succ_r$ when restricted to $\bigcup_{j=1}^{m-1} Q_j\equiv \mathbf{Q}^{m-1}$, is bounded, is continuous on $\mathbf{Q}^{m-1}$, and is an increasing transformation of   $U^{k}$ within $Q_j$ when $Q_j=K^k(r)$.
	Then, extend $V_i$ to $Q_m$ as follows. 
	By Lemma \ref{lem: ordering of ISes}, it is impossible that $y \succ_r x$ for every $x \in \mathbf{Q}^{m-1}$ and every $y\in Q_m$.
	It will be convenient to relabel regions so that $Q_m=K^m(r)$.
	
	Pick a bounded, strictly increasing, continuous $h:\mathbb{R}\rightarrow \mathbb{R}$.
	For any $x \in K^m(r)$ so that $x \succ_r y$ for all $y \in \mathbf{Q}^{m-1}$,
	 set $$V_i(x)=h(U^{m}(x))+\beta_+$$ where 
	 	 $$\beta_+= \sup\{ V_i(x) :  {x \in \mathbf{Q}^{m-1}} \} - \inf \{h(U^{m}(x)):x\in K^m(r),\ x \succ_r y \text{ for all } y \in \mathbf{Q}^{m-1}\}.$$
	For any $x \in K^m(r)$ for which there exists $y,y' \in \mathbf{Q}^{m-1}$ so that $y\succ_r x \succ_r y'$, let $$V_i(x) = \inf \{V_i(y):y\in \mathbf{Q}^{m-1}\text{ and }y \succsim_r x \}.$$
	 For all other $x \in K^m(r)$, 
	 let $$V_i(x)=h(U^{m}(x))+\beta_-$$ where 
	 $$\beta_-= \inf\{V_i(x):x \in \mathbf{Q}^{m-1}\}-\sup \{h(U^{m}(x)):x\in K^m(r),\ y \succ_r x  \text{ for all } y \in \mathbf{Q}^{m-1}\}.$$
	This $V_i$  is bounded and continuous.
	
	We now show that it represents $\succ_r$ on $\mathbf{Q}^{m}$.
	Pick $x,y \in \mathbf{Q}^{m}$.
	There are four cases:\\
	\textbf{Case 1:} $x,y \in \mathbf{Q}^{m-1}$: then the claim follows by hypothesis.\\
	\textbf{Case 2:} $x \in K^m(r)$ and either $x \succ_r y'$ for all $y' \in \mathbf{Q}^{m-1}$ or $y' \succ_r x$ for all $y' \in \mathbf{Q}^{m-1}$: the claim is immediate.\\
	\textbf{Case 3:} $x \in K^m(r)$ and $y \in \mathbf{Q}^{m-1}$:
	If $y \succ_r x$, then $y-\epsilon \succ_r x$ for some $\epsilon>0$ so that $y-\epsilon$ belongs to the same region as $y$.
	If $y \sim_r x$, then $V_i(y) \geq V_i(x)$. If this does not hold with equality, then there is a $y' \in \mathbf{Q}^{m-1}$ so that $y' \succsim_r x$ and  $y \succ_r y'$ (since $y' \not\succsim_r y$). But then $y \succ_r x$, a contradiction.
	If $x \succ_r y$ but $V_i(y)\geq V_i(x)$, there exists $z \in \mathbf{Q}^{m-1}$ so that $V_i(z)\leq V_i(y)$ and $z \succsim_r x$.  But then by transitivity and hypothesis, $y \succsim_r z \succsim_r x$.\\
	\textbf{Case 4:} $x,y \in K^m(r)$ and Case 2 does not hold for either $x$ or $y$: Suppose $x\succsim_r y$. If not, then $V_i(y)>V_i(x)$ so there exists a $z \in \mathbf{Q}^{m-1}$ so that $z \succsim_r x$ and $z \not \succsim_r y$. By weak order, $y \succ_r z$ and so $y \succ_r x$, a contradiction.
	
	Since it represents $\succsim_r$ on $K^m(r)$, it also agrees with  $\succsim_m$ on $K^m(r)$. Hence it is an increasing transformation of   $U^{i}$ within $K^i(r)$ for each $i\leq m$.
	Renormalize $V_i$ so that its range is a subset of $[-\frac12 -i,-i]$.
	
	 For any $x,y \in A_i$, the above establishes that $V_i(x) \geq V_i(y) \iff x \succsim_r y$. For any $x\in A_i$ and $y\in A_j$ where $i<j$, $x \succ_r y$ by Lemma \ref{lem: ordering of ISes} and construction. Since $V_i(x)>-\frac12-i$, $V_j(y)<-j$, and $-\frac12-i>-j$,  we have $V_i(x)>V_j(y)$.
	Define $U^k(\cdot|r)$ to agree with the appropriate restriction of $V_i$, and conclude $\{\succ_r\}_{r\in X}$ conforms to CTM under $\mathcal{K}$.	
Since $r$ was arbitrary, this completes the proof. \hfill \qedsymbol
\subsection{Proof for  Theorem \ref{thm: RI}}
\label{pf: thm: RI}
Sufficiency is easy to verify. 
Suppose that $U^k(x)=\sum_{i=1}^n U^k_i(x_i)$. We show that for every category $j$ there exists a vector $w \gg 0$ so that $U^j(x)=\sum_{i=1}^n w_i U^k_i(x_i)$ represents $\succ_j$ on $E^k \bigcap E^j$.

Consider dimension 1, and the rest follow the same arguments. The goal is to show that $U^k_1(x)-U^k_1(y)\geq U^k_1(a)-U^k_1(b)$ if and only if $U^j_1(x)-U^j_1(y)\geq U^j_1(a)-U^j_1(b)$ for any $x,y,a,b \in E^k_1\bigcap E^j_1$.
If this is the case, then standard uniqueness results give that $U^j_1(x)=\alpha U^k_1(x)+\beta$. The $\beta$ can be dropped completing the claim.

Let $\pi_i$ be the projection onto the $i$-coordinate. Then, $E^k_1=\pi_1( E^k)$ is open and connected for any category $k$. 
This follows from $E^k$ connected and open and $\pi_i$ continuous. In $\mathbb{R}$, connected implies convex.

\begin{claim}
	For any $z \in E^k_1 \bigcap E^j_1$, there exists a neighborhood $O_z=B_{\epsilon_z}(z)$ so that $U^k_1(x)-U^k_1(y)\geq U^k_1(a)-U^k_1(b)$ if and only if $U^j_1(x)-U^j_1(y)\geq U^j_1(a)-U^j_1(b)$ for any $x,y,a,b \in O_z$.	
\end{claim}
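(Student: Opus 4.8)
The plan is to read off the claim from Reference Interlocking, which transfers the ordering of coordinate-$1$ utility differences from category $k$ to category $j$ and back; the restriction to a neighborhood enters only to guarantee that the auxiliary points produced along the way stay inside single regions.

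First I would fix $z\in E^k_1\cap E^j_1$ and pin down ambient regions. Since $E^k=\bigcup_r K^k(r)$ is open and $z\in\pi_1(E^k)$, there is a point $\hat x\in E^k$ with $\hat x_1=z$, a reference $r$, and $\rho>0$ with $B_\rho(\hat x)\subset K^k(r)$; choose analogously $\hat x'\in E^j$, $r'$, and $\rho'>0$ with $B_{\rho'}(\hat x')\subset K^j(r')$. On these balls $\succsim_r$ and $\succsim_{r'}$ agree with $\succsim^k$ and $\succsim^j$ and are represented additively by $U^k$ and $U^j$. Because $U^k_1,U^j_1$ are continuous and $U^k_{-1},U^j_{-1}$ are continuous and strictly monotone, I can pick $\epsilon_z>0$ so small that, writing $O_z:=B_{\epsilon_z}(z)$ and $s=\hat x_{-1}$, $s'=\hat x'_{-1}$, the following hold for all $\xi,\eta,\alpha,\beta\in O_z$: the equation $U^k_{-1}(t)=U^k_{-1}(s)+U^k_1(\xi)-U^k_1(\eta)$ has a solution $t$ with $(\xi,s),(\eta,t),(\alpha,s),(\beta,t)\in B_\rho(\hat x)$, and likewise $U^j_{-1}(t')=U^j_{-1}(s')+U^j_1(\xi)-U^j_1(\eta)$ has a solution $t'$ with $(\xi,s'),(\eta,t'),(\alpha,s'),(\beta,t')\in B_{\rho'}(\hat x')$.

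Next I would unwind the axiom on these realizations. By construction $(\xi,s)\sim^k(\eta,t)$, so $U^k_1(\xi)-U^k_1(\eta)=U^k_{-1}(t)-U^k_{-1}(s)$, whence $(\alpha,s)\succsim^k(\beta,t)$ holds exactly when $U^k_1(\alpha)-U^k_1(\beta)\geq U^k_1(\xi)-U^k_1(\eta)$; the identical computation in category $j$ shows $(\alpha,s')\succsim^j(\beta,t')$ holds exactly when $U^j_1(\alpha)-U^j_1(\beta)\geq U^j_1(\xi)-U^j_1(\eta)$. Feeding $x=(\xi,s)$, $y=(\eta,t)$, $a=(\alpha,s)$, $b=(\beta,t)$ and the primed points into Axiom \ref{ax: ref_inter} with $i=1$, its hypotheses $x\sim^k y$, $a\succsim^k b$, $x'\sim^j y'$ hold precisely under $U^k_1(\alpha)-U^k_1(\beta)\geq U^k_1(\xi)-U^k_1(\eta)$, and its conclusion (that $b'\succ^j a'$ fails, i.e. $a'\succsim^j b'$ by completeness of $\succsim_{r'}$ on the region) is precisely $U^j_1(\alpha)-U^j_1(\beta)\geq U^j_1(\xi)-U^j_1(\eta)$. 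This is one implication; interchanging $k$ and $j$ yields the converse, so the two difference orderings coincide on $O_z$, and reindexing $(\xi,\eta,\alpha,\beta)=(a,b,x,y)$ recovers the stated equivalence.

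The main obstacle is the bookkeeping in the first step: Reference Interlocking only applies when all eight points simultaneously sit in one category-$k$ region and one category-$j$ region, so that the primitive relations $\sim^k$, $\succsim^k$, $\sim^j$ are genuinely realized. This is exactly what confines the statement to a neighborhood, since compensating a perturbation of coordinate $1$ by a perturbation of the remaining coordinates is only feasible while staying inside the open balls $B_\rho(\hat x)$ and $B_{\rho'}(\hat x')$. Controlling those displacements through the continuity and strict monotonicity of the $U^\cdot_1$ and $U^\cdot_{-1}$ is the delicate point and is what fixes $\epsilon_z$; everything else is the routine algebra of additive representations above. The resulting local equivalences are then patched along the connected set $E^k_1\cap E^j_1$, together with standard uniqueness of additive representations, to deliver the global affine relationship asserted in Theorem \ref{thm: RI}.
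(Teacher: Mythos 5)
Your proposal is correct and follows essentially the same route as the paper: both arguments anchor the coordinate value $z$ inside a single category-$k$ region and a single category-$j$ region, pick $\epsilon_z$ small enough that first-coordinate utility differences can be compensated in the remaining coordinates without leaving those regions, encode each difference inequality as an indifference-plus-preference pattern, and then invoke Reference Interlocking (with completeness of $\succsim_r$ on a region converting ``not $b'\succ^j a'$'' into $a'\succsim^j b'$), with the converse obtained by swapping $k$ and $j$. The only cosmetic difference is that you solve the compensation equation directly from continuity and strict monotonicity of the $U^k_i$, whereas the paper produces the indifferent pairs via Category Continuity and Category Monotonicity; both are legitimate since the CTM representation is part of the theorem's hypothesis.
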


To see it is true, pick $x\in E^k_1 \bigcap E^j_1$. Then there is an $a^l \in E^l$ with $a^l_1=x$ for $l=k,j$.
Let $U_{-i}^k(y)= \sum_{j\neq i } U^k_j(y_j)$ for any $y \in X$.
Since each $a^l \in K^l(r^l)$ for some $r^l \in X$, there exists an $\epsilon^l>0$ so that $B_{2\epsilon^l}(a^l) \subset K^l(r^l) \subset E^l$, where the distance is given by the supnorm.
Pick $\epsilon\in(0,\epsilon^l)$ so that $$U^l_1(x+\epsilon)-U^l_1(x-\epsilon)< U^l_{-1}(a^l+\epsilon^l)-U^l_{-1}(a^l-\epsilon^l)$$
for $l=k,j$.
Then, for any $a,b \in [x-\epsilon,x+\epsilon]$ there exists $y^a_{-1},y^b_{-1}$ 
so that $(a,y^a_{-1}), (b,y^b_{-1})\in B_{2\epsilon^k}(a^k)$ and $(a,y^a_{-1}) \sim_{r^k} (b,y^b_{-1})$ by Category Continuity and CM.
In particular, $U_1^k(a)-U_1^k(b)=U^k_{-1}(y^b_{-1})-U^k_{-1}(y^a_{-1})$.
For any $a',b'\in [x-\epsilon,x+\epsilon]$, it holds that $U_1^k(a)-U_1^k(b)\geq U_1^k(a')-U_1^k(b')$
if and only if $(b',y^a_{-1}) \succsim_{r^k} (a',y^b_{-1})$.
Similarly, there exist $z^a_{-1},z^b_{-1}$ so that $(a,z^a_{-1}), (b,z^b_{-1})\in B_{2\epsilon^j}(a^j)$ and $(a,z^a_{-1}) \sim_{r^j} (b,z^b_{-1})$. Now, 
$ (b',z^b_{-1}) \succsim_{r^j} (a',z^a_{-1})$ if and only if $U_1^j(a)-U_1^j(b)\geq U_1^j(a')-U_1^j(b')$. 
By Reference Interlocking and weak order, $ (b',z^b_{-1}) \succsim_{r^j} (a',z^a_{-1})$ if and only if $(b',y^a_{-1}) \succsim_{r^k} (a',y^b_{-1})$, so we conclude that the claim holds with $\epsilon_x=\epsilon$.

We now extend to the entire domain (this follows similar arguments in CW). 
Pick an arbitrary $x_*<x^* \in E^k_1 \bigcap E^j_1$ and consider $Z=(x_* ,x^*]$.
If the claim is true, then standard uniqueness results give that $U^j_1(x)=\alpha U^k_1(x)+\beta$ for all $x \in O_z$ for some $\alpha>0$.
Let $\alpha^*,\beta^*$ be the constants so that $U^j_1(x)=\alpha^*U^k_1(x)+\beta^*$ for all $x$ in the neighborhood of $x^*$, as guaranteed to exist by the claim.

Let $$Z_1=\left\{ s\in Z:U^j_1(x)=\alpha^*U^k_1(x)+\beta^* \text{ for all } x\in (x_* ,s] \right\}.$$
$Z_1$ is not empty by the claim.
We show that it is both open and closed by picking any $s_1 \in cl(Z_1)$ and showing $s_1 \in int(Z_1)$. 
Since $[x_*,s_1]$ is compact and $O=\{O_z:z\in [x_*,s_1]\}$ is an open covering, there exists  $\{O_1,\dots,O_n\}\subset O$ with $x_*   \in O_1$, $s_1 \in O_n$ and $O_m\bigcap O_{m'}= \emptyset$ for all $m'\geq m+2$. 
On each $O_m$, there exists $\alpha_m,\beta_m$ so that the utility indexes agree by the claim. Also, $O_m$ and $O_{m+1}$ have non-empty intersections with  more than two points, so $(\alpha_{m+1},\beta_{m+1} )=(\alpha_m,\beta_m)$. In particular, $O_1$ intersects $O_{x_*}$ so $\alpha_m=\alpha^*$ for all $m$.
Then  $O_n \bigcap Z \subset Z_1$, i.e. $s_1 \in int(Z_1)$, so  $cl(Z_1)\subset int(Z_1) \subset Z_1 \subset cl(Z_1)$, i.e. $Z_1$ is both closed and open relative to $Z$. Conclude $Z_1=Z$ since $Z$ connected.

Since 	$U^j_1(x)=\alpha U^k_1(x)+\beta$ for all $x \in (x_*,x^*]$ for any interval in the domain, it holds for the whole domain as well. Extend to other categories that intersect $E^i_1 \bigcup E^j_1$ inductively. If there is no intersecting category, we can start again and obtain a (disjoint) interval, the values of $U^i_1$ (and $U^j_1$) on which have no bearing on the DM's choices.
Similar arguments obtain for the other dimensions. Moreover, there is no loss in setting each $\beta=0$. This completes the proof.
\hfill \qedsymbol

\subsection{Proof of Theorem \ref{result: Affine CTM}}
\label{pf: result: Affine CTM}
To save notation, until after Lemma  \ref{lem: utility on IS}, we fix $r$ and write $K^k$ instead of $K^k(r)$ and $\succsim$ instead of $\succsim_r$.
We also identify $x \alpha^k y$ with the alternative $\alpha x \oplus^k (1-\alpha) y$.
Let $\left(U^1,\dots,U^n\right)$ be the additive functions that represent $\succsim_1,\dots,\succsim_n$.
Observe that $U^k(x \alpha^k y)=\alpha U^k(x)+(1-\alpha)U^k(y)$ for any $\alpha$, provided that $x,y,x \alpha^k y \in E^k$.

Recall from Definition \ref{def: IS} that an indifference sequence is a finite sequence of categories with indifference between each succeeding members.
\begin{defn}
	The function $v$ is a \emph{utility for the indifference sequence} $(Q_1,\dots,Q_m)$ if $v$ is an increasing additive utility function on each $Q_k$ and for all $k$, $x,y \in Q_{k} \bigcup Q_{k+1}$:  $x \succsim y \iff v(x) \geq v(y)$.
\end{defn}
\begin{lemma}
	\label{lem: renormalization}
	If $ x^k \in K^k$, $x^l \in K^l$, and $x^k \sim x^l$, then there is $a>0,b \in \mathbb{R}$ such that for $x \in K^k$ and $y \in K^l$, $x \succsim y \iff U^k(x) \geq \alpha U^l(y) + \beta$.
\end{lemma}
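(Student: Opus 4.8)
The plan is to construct a single affine recalibration of $U^l$ onto the $U^k$-scale, using the given cross-category indifference as an anchor, and then to verify that it represents the cross-category comparisons everywhere. First I would set $M=\{y\in K^l:\exists\,x\in K^k\text{ with }x\sim y\}$, which is nonempty since $x^l\in M$ (matched by $x^k$), and define a calibration $\phi$ on the set of utility levels $\hat D=U^l(M)\subseteq\mathbb{R}$ by $\phi(U^l(y))=U^k(x)$ whenever $x\sim y$ with $x\in K^k$, $y\in K^l$. This is well defined: if $x,x'\in K^k$ both satisfy $x\sim y$ and $x'\sim y$, then $x\sim x'$ and so $U^k(x)=U^k(x')$; and if $U^l(y)=U^l(y')$ then $y\sim y'$, so a match for $y$ is a match for $y'$. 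It is strictly increasing because $U^l(y)>U^l(y')$ forces $y\succ y'$, whence matches $x\sim y$, $x'\sim y'$ give $x\succ x'$ and thus $U^k(x)>U^k(x')$.

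The key step is affinity of $\phi$. Given $y,y'\in M$ with matches $x,x'$, for $y,y'$ close enough that the mixing operations of the footnote are defined, I would form $w=\tfrac12 x\oplus^k\tfrac12 x'\in K^k$ and $z=\tfrac12 y\oplus^l\tfrac12 y'\in K^l$. Applying Affine Across Categories (Axiom \ref{ax: AAC}) with $\alpha=\tfrac12$ to $x\succsim y$ and $x'\succsim y'$ gives $w\succsim z$, while applying it to $y\succsim x$ and $y'\succsim x'$ gives $z\succsim w$; hence $w\sim z$, so $z\in M$ with match $w$. Since $U^l(z)=\tfrac12 U^l(y)+\tfrac12 U^l(y')$ and $U^k(w)=\tfrac12 U^k(x)+\tfrac12 U^k(x')$, this yields the Jensen identity $\phi\big(\tfrac12 U^l(y)+\tfrac12 U^l(y')\big)=\tfrac12\phi(U^l(y))+\tfrac12\phi(U^l(y'))$ on small neighborhoods. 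A strictly increasing (hence locally bounded) midpoint-affine function on an interval is affine, so locally $\phi(t)=\alpha t+\beta$; since $\hat D$ is an interval I would patch these local constants into a single global pair $\alpha>0,\beta$ by the covering/overlap argument already used in the proof of Theorem \ref{thm: RI}.

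To justify that patching, I would first verify $\hat D$ is an interval: if $t_1<t<t_2$ with $t_1,t_2\in\hat D$, take $y\in K^l$ with $U^l(y)=t$, so $y$ lies strictly between two matched points in preference; re-running the contour-set/connectedness argument of Lemma \ref{lem: indif point or all same} then produces an $x\in K^k$ with $x\sim y$, giving $t\in\hat D$. Finally I would extend the representation to arbitrary $x\in K^k$, $y\in K^l$. If $y\in M$ with match $x''$, then $U^k(x'')=\alpha U^l(y)+\beta$ and $x\succsim y\iff x\succsim x''\iff U^k(x)\ge\alpha U^l(y)+\beta$. If $y\notin M$, then $U^l(y)$ lies above or below $\hat D$; the same intermediate-value reasoning shows $y$ is then preferred to (resp.\ worse than) every element of $K^k$, while $\alpha>0$ and monotonicity of $\phi$ place $\alpha U^l(y)+\beta$ above (resp.\ below) $U^k(x)$ for every $x\in K^k$, so both sides of the claimed equivalence share the same truth value.

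The hard part will be the affinity step, specifically reconciling the fact that $\oplus^k$ and $\oplus^l$ are only locally defined with the need for a single global $(\alpha,\beta)$: the Jensen identity is available only on small neighborhoods, so I must establish that $\hat D$ is connected and then glue the locally constant calibration data across overlaps, mirroring the local-to-global argument of Theorem \ref{thm: RI}. The secondary subtlety is the unmatched region, where I rely on Lemma \ref{lem: indif point or all same} to guarantee that any $y$ outside $\hat D$ is uniformly above or below all of $K^k$ in preference.
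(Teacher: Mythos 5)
Your overall strategy---calibrate $U^l$ onto the $U^k$-scale on the matched set $M$, prove affinity of the calibration $\phi$ via $\oplus$-midpoints and AAC, then extrapolate to unmatched points---is a genuinely different route from the paper's, but it contains genuine gaps, concentrated exactly where you flagged ``the hard part.'' The root cause is that a category $K^k(r)$ need \emph{not} be connected: Definition \ref{defn: category function} only requires $cl(K^k(r))$ to be connected (for BGS, $K^1(r)$ is two open cones whose closures meet only at $r$, hence disconnected). Because of this, the second clause of Category Continuity only says that alternatives whose indifference curves ``split'' a category form a set with empty interior, and accordingly the proof of Lemma \ref{lem: indif point or all same} produces a matched pair $(x',y')$ with $x'$ only \emph{arbitrarily close} to a given alternative, never a match for that alternative itself. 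Two of your steps break on this. First, your claim that $\hat D$ is an interval does not follow: re-running Lemma \ref{lem: indif point or all same} for a $y$ lying between two matched points yields matches only in every neighborhood of $y$, so you get that $\hat D$ is \emph{dense} in an interval, not equal to one, and the appeal to ``midpoint-affine on an interval'' plus the patching argument must be reworked. (The Jensen step also has a circularity: to place the matches $x,x'$ of nearby $y,y'$ close enough together for $\tfrac12 x\oplus^k\tfrac12 x'$ to exist and lie in $K^k$, you need continuity of $\phi$, which is what affinity is supposed to deliver.) Second, and more seriously, the dichotomy in your extension step is false: an unmatched $y$ can have $U^l(y)$ strictly \emph{inside} the convex hull of $\hat D$---its indifference curve passes through the gap between components of $K^k$---and such a $y$ is above part of $K^k$ and below another part, so ``preferred to (resp.\ worse than) every element of $K^k$'' fails. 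Even when $U^l(y)>\sup\hat D$, you still owe an argument that $\alpha U^l(y)+\beta$ exceeds $U^k(x)$ for every $x\in K^k$: nothing in the matched-set calibration pins down how an unmatched upper part of $K^l$ compares with an unmatched upper part of $K^k$, and ruling out that configuration needs either the empty-interior density argument or another use of AAC.

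The paper's proof sidesteps all of this with one idea your proposal never exploits: feed the anchor indifference $x^k\sim x^l$ itself into AAC. Given arbitrary $x\in K^k$ and $y\in K^l$, it contracts both toward their anchors with a common weight, forming $x\alpha^k x^k$ and $y\alpha^l x^l$ inside fixed small balls around $x^k$ and $x^l$; AAC applied to $x\succsim y$ (or its reverse) together with $x^k\sim x^l$ shows the contraction preserves the comparison. Inside those balls a match always exists---not by density, but because the mixture path $\{x^*\beta^k x_*:\beta\in[0,1]\}$ is a \emph{connected} subset of $K^k$, so intermediate-value reasoning applies along it. That single contraction step makes the calibration global without ever requiring $\hat D$ to be an interval or the unmatched regions to be ordered. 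Your argument can probably be repaired (replace ``interval'' by ``dense in an interval,'' rule out jumps of $\phi$ via the empty-interior clause, and handle splitting points and the doubly-unmatched configuration by density-plus-completeness arguments), but as written the interval claim and the unmatched-case extension are gaps in the proof, not omissions of routine detail.
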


\begin{proof}
	W.L.O.G., take $U^k(x^k)=0$.
	There is $\epsilon_k>0$ such that $B_{2\epsilon_k}(x^k) \subset K^k$.
	By CM and Category Continuity, there is $\epsilon_l>0$ such that $ B_{\epsilon_l}(x^l) \subset K^l$ and for all $y \in B_{\epsilon_l}(x^l)$, $x^*=x^k+ \epsilon_k \succ y \succ x^k- \epsilon_k=x_*$. 
	For any $y \in K^l$ and $\alpha$ such that $y\alpha^l x^l \in B_{\epsilon_l}(x^l)$,
	there exists $\beta \in (0,1)$ such that $x^* \beta^k x_* \sim y \alpha^l x^l$ by  Category Continuity, CM, and that $\succsim$ is a weak order.
	Let $V^l(y)= \alpha^{-1} U^k(x^* \beta^k x_*)$. 
	This is well defined, additive, increasing, and ranks alternatives in the same way as $U^l$.
	Thus, $V^l(y)= a U^l(y)+ b$ for some $a>0$ and $b\in \mathbb{R}$.
	
	For any $x \in K^k $ and $y \in K^l $, pick $\alpha\in [0,1]$ such that $x\alpha^k x^k \in B_{\epsilon_k}(x^k)$ and $y \alpha^l x^l \in B_{\epsilon_l}(x^l)$.
	By construction, $y \alpha^l x^l \sim y^\prime$ when $y^\prime \in B_{\epsilon_k}(x^k)$ and $U^k(y^\prime)= \alpha V_l(y)$.
	Thus, $x \alpha^k x^k \succsim y^\prime \sim y \alpha^l x^l$ holds if and only if  $U^k(x) \geq V_l(y)$ and $x \succsim y \iff x \alpha^k x^k \succsim  y \alpha^l x^l$ by AAC since $x^k \sim x^l$, completing the proof.
\end{proof}

For an indifference sequence $(Q_1,\dots,Q_m)$ with utility $v$, we label the range of utilities as $cl(v(Q_k))=[l_k,u_k]$ where $l_k \leq u_k$. Note that we allow $Q_k =Q_l$  for $k \neq l$.

\begin{lemma}For an indifference sequence $(Q_{1},\dots,Q_{m})$, there is an affine, increasing utility $v$ for it.
	
\end{lemma}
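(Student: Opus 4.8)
The plan is to induct on the length $m$ of the indifference sequence $(Q_1,\dots,Q_m)$. For the base case $m=1$ I would take $v=U^{Q_1}$, the additive category utility of $Q_1$ (one of the $U^1,\dots,U^n$ fixed above); it is additive, increasing, and represents $\succsim$ on $Q_1$, so it is trivially an affine increasing utility for the one-term sequence. For the inductive step I would assume $(Q_1,\dots,Q_{m-1})$ admits an affine increasing utility $v$ and extend it to absorb the final link $(Q_{m-1},Q_m)$, whose witness is the pair $x^{m-1}\sim y^{m-1}$ with $x^{m-1}\in Q_{m-1}$ and $y^{m-1}\in Q_m$ guaranteed by Definition \ref{def: IS}.

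The core tool is Lemma \ref{lem: renormalization} applied to the pair $(Q_{m-1},Q_m)$ at the witness $x^{m-1}\sim y^{m-1}$: it supplies constants $\alpha>0$ and $\beta$ with $x\succsim y \iff U^{Q_{m-1}}(x)\ge \alpha U^{Q_m}(y)+\beta$ for all $x\in Q_{m-1}$ and $y\in Q_m$, and these are moreover the unique such constants. When $Q_m$ is a category not occurring among $Q_1,\dots,Q_{m-1}$, I would use that $v$ is already affine on $Q_{m-1}$, say $v|_{Q_{m-1}}=aU^{Q_{m-1}}+b$ with $a>0$, and set $v|_{Q_m}=a\alpha\,U^{Q_m}+(b+a\beta)$. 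Multiplying the displayed inequality by $a>0$ gives $x\succsim y \iff v(x)\ge v(y)$ across the pair, while the new piece is affine in $U^{Q_m}$ with positive slope $a\alpha$, hence increasing; since $Q_m$ is new, nothing defined earlier is disturbed.

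The delicate case, which I expect to be the main obstacle, is when $Q_m$ coincides with an earlier $Q_j$, so that $v$ is already pinned down on $Q_m$ and cannot be rescaled: I must \emph{verify}, rather than impose, that the existing $v$ represents $\succsim$ across $Q_{m-1}\cup Q_m$. Since a single complete and transitive $\succsim$ underlies every link, I would strengthen the induction hypothesis to the claim that $v$ represents $\succsim$ on the entire union $\bigcup_{k<m}Q_k$, not merely on adjacent pairs. Granting this, both $x^{m-1}\in Q_{m-1}$ and $y^{m-1}\in Q_m=Q_j$ lie in that union, so $x^{m-1}\sim y^{m-1}$ forces $v(x^{m-1})=v(y^{m-1})$ and the new adjacency holds automatically, ruling out any conflict. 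The real work is then to propagate the strengthened hypothesis through the new-category step: for $z\in Q_m$ and $w\in Q_k$ with $k<m-1$, I would compare $z$ and $w$ through $Q_{m-1}$, using the witness-anchored equality of values to locate a common value, at which point transitivity of $\succsim$ reduces the cross comparison to a within-category one.

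The one genuine difficulty is the bookkeeping of overlapping value ranges when a category recurs. That a common value is available on an interval rather than at the single witness point follows from Category Continuity, which makes each $cl(v(Q_k))$ an interval, together with the Structure Assumption's connectedness of the within-category indifference sets, so the ranges of linked categories overlap in a neighborhood of the witness value; values lying beyond the overlap are then ordered by Category Monotonicity and the earlier links. Once the ranges are shown to interlock as intervals anchored at the indifference witnesses, transitivity of the underlying weak order forces the already-constructed $v$ to agree with the unique affine link of Lemma \ref{lem: renormalization}, completing the induction and yielding an affine increasing utility for $(Q_1,\dots,Q_m)$.
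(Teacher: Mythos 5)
Your base case and your new-category step are correct and are essentially the paper's own construction: anchor the new piece at the witness indifference via Lemma \ref{lem: renormalization} and patch with a common positive affine rescaling so the new adjacent pair is represented (the paper rescales the old part, you rescale the new piece; these are equivalent). You have also correctly located the delicate point: when $Q_m$ recurs as some earlier $Q_j$, $v$ is already pinned down on $Q_m$, and the induction hypothesis about adjacent pairs does not by itself cover the new adjacency $Q_{m-1}\cup Q_j$. However, your repair---strengthening the induction hypothesis to ``$v$ represents $\succsim$ on all of $\bigcup_{k<m}Q_k$'' and propagating this through the new-category step using transitivity, Category Monotonicity, and Category Continuity---does not work, and this is a genuine gap rather than a presentational one.

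Whole-union representation is simply not a consequence of within-category representation, adjacent-pair representation, and transitivity. Schematically: let $(Q_1,Q_2,Q_3)$ have $v$-ranges $(0,10)$, $(4,6)$, $(5,9)$, and let $v'=v$ on $Q_1\cup Q_2$ while $v'=f\circ v$ on $Q_3$, where $f$ is continuous and increasing, equals the identity on the overlap $(5,6)$, and rises steeply above $6$. Then $v'$ induces exactly the same within-category rankings and the same rankings on $Q_1\cup Q_2$ and on $Q_2\cup Q_3$, yet it reverses some $Q_1$-versus-$Q_3$ comparisons at values outside the overlap (e.g.\ $v(z)=6.5$ versus $v(w)=7$). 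Both orders are complete, transitive, monotone and continuous on the union, so no argument from the tools you cite can decide between them; your phrase ``values lying beyond the overlap are then ordered by Category Monotonicity and the earlier links'' is precisely the step that fails. What actually pins down non-adjacent, out-of-overlap comparisons in the paper is the Affine Across Categories axiom (Axiom \ref{ax: AAC}) used through the $\oplus^k$ operation: in Lemma \ref{lem: shorter sequence}, to compare $x\in Q_i$ with $y\in Q_{i+2}$, one forms the mixtures $x\alpha^i x^i$ and $y\alpha^{i+2}x^{i+2}$ toward the witnesses, which for small $\alpha$ have values inside the overlap ball, routes the comparison through $Q_{i+1}$ by transitivity, and then uses AAC to transfer the conclusion back to $x$ and $y$; Lemma \ref{lem: triple indifference} and Lemma \ref{lem: utility on IS} then deliver the whole-union statement by a range-splitting induction. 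Your proposal never invokes AAC or $\oplus^k$, so the strengthened hypothesis cannot be propagated. Note also that you are proving more than the lemma asks: a ``utility for the indifference sequence'' only requires representation on adjacent pairs, which is exactly why the paper's induction checks only the new pair and defers the whole-union claim to Lemma \ref{lem: utility on IS} (whose proof does not use the existence lemma, so invoking that machinery to handle the recurring case involves no circularity).
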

\begin{proof}
	The proof is by induction.
	We claim that there is a utility $v^k:X\rightarrow \mathbb{R}$ that is  a utility for the IS $(Q_1,\dots, Q_k)$ for any $k$.
	When $k=1$ or $k=2$, this is true by the above lemmas.
	The induction hypothesis (IH) is that the claim is true for $k=N$.  Consider $k=N+1$. Let $v^N$ be the utility for $(Q_1,\dots, Q_N)$ be index that exists by the IH.
	If $Q_{N+1} \subseteq \bigcup_{i=1}^N Q_i$, then we are done. 
	If not, then for $Q_N=K^l$, there is no loss in normalizing $v^N$ so that it equals $U^l$ on $K^l(r)$. Suppose $Q_{N+1}=K^j(r)$, and let $\alpha,\beta$ be the scalars claimed to exist by Lemma \ref{lem: renormalization}, so that $U^j(x) \geq \alpha U^l(y) + \beta \iff x \succsim_r y$  for $x \in K^k(r)$ and $y \in K^l(r)$. Restricted to $Q_N$,  $v^N=U^l$, so we can define $v^{N+1}(x)=\alpha v^N(x)+\beta$ if $x \in \bigcup_{i=1}^N Q_i$ and $$v^{N+1}(x)=U^j(x)$$ if $x \in Q_{N+1}$.
	Then, if $l<N$ and $x,y \in Q_l \bigcup Q_{l+1}$, then we are done by the IH, since $v^{N+1}(x)\geq v^{N+1}(y) \iff v^N(x)\geq v^N(y)$.
	If $x,y \in Q_N \bigcup Q_{N+1}$, then Lemma \ref{lem: renormalization} and construction implies the result.
	The claim then holds by induction.
\end{proof}

\begin{lemma}
	\label{lem: shorter sequence}
	Fix an indifference sequence $(Q_{1},\dots,Q_{n})$ with utility $v$. 
	If $x^k \in Q_k$ for $k=i,i+1,i+2$ with $x^i \sim x^{i+1} \sim x^{i+2}$, then  $(Q_1,\dots,Q_i,Q_{i+2},\dots,Q_n)$
	is an indifference sequence (after relabeling) with utility $v$.
\end{lemma}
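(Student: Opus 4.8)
The plan is to verify, for the shortened list $(Q_1,\dots,Q_i,Q_{i+2},\dots,Q_n)$, the two defining properties of a utility for an indifference sequence, reusing the given $v$. Throughout I fix the reference $r$ and keep the notation $Q_k$, $\succsim$, and the additive representations $U^k$ from the proof.

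First I would dispatch the purely combinatorial content. By transitivity of $\succsim$, the hypotheses $x^i \sim x^{i+1}$ and $x^{i+1}\sim x^{i+2}$ give $x^i \sim x^{i+2}$ with $x^i \in Q_i$ and $x^{i+2}\in Q_{i+2}$. Hence the pair $(Q_i,Q_{i+2})$ admits an indifferent pair, and since every other consecutive pair of the shortened list already appeared consecutively in the original sequence, Definition \ref{def: IS} is met and the shortened list is an indifference sequence. Moreover $v$ is unchanged on each region, so it is still an increasing additive utility on each $Q_k$; and because $v$ represents $\succsim$ on every original consecutive pair, it continues to represent $\succsim$ on each consecutive pair of the new sequence, \emph{except possibly} on the newly glued union $Q_i \cup Q_{i+2}$.

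The whole content therefore reduces to showing that $v$ represents $\succsim$ on $Q_i \cup Q_{i+2}$; if $Q_i=Q_{i+2}$ this is immediate, so assume they differ. For pairs lying inside one region it is inherited, so the only task is the cross-comparison of $x \in Q_i$ with $y \in Q_{i+2}$. The naive approach --- find $w\in Q_{i+1}$ indifferent to one of them and bridge --- fails precisely when $v(x)$ and $v(y)$ both fall outside the range $v(Q_{i+1})$, and this out-of-range case is the main obstacle: the comparison between $x$ and $y$ then sits entirely ``above'' or ``below'' all of $Q_{i+1}$, so transitivity through $Q_{i+1}$ conveys no information.

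To get around this I would not bridge at all, but instead manufacture a genuinely global cross-category representation and match it to $v$. Applying Lemma \ref{lem: renormalization} to the indifferent pair $x^i \sim x^{i+2}$ yields $\alpha>0$ and $\beta$ with $x \succsim y \iff U^i(x) \geq \alpha U^{i+2}(y)+\beta$ for all $x\in Q_i$, $y \in Q_{i+2}$; write $\tilde v$ for the function equal to $U^i$ on $Q_i$ and to $\alpha U^{i+2}+\beta$ on $Q_{i+2}$, which by construction represents $\succsim$ on all of $Q_i \cup Q_{i+2}$, including the out-of-range comparisons. It then remains to show $v$ is a positive affine transform of $\tilde v$ on this union. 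Since $v$ and $\tilde v$ are both additive representations of $\succsim$ restricted to $Q_i$, uniqueness of additive representations gives $v=a_1\tilde v+b_1$ with $a_1>0$ there, and likewise $v=a_2\tilde v+b_2$ with $a_2>0$ on $Q_{i+2}$. The key step is to force $(a_1,b_1)=(a_2,b_2)$: because $x^i \sim x^{i+2}$ gives $\tilde v(x^i)=\tilde v(x^{i+2})$, and because $x^i,x^{i+2}$ are interior points of the open sets $Q_i,Q_{i+2}$ on which the continuous $\tilde v$ carries interior points into the interior of its interval range, the ranges $\tilde v(Q_i)$ and $\tilde v(Q_{i+2})$ share an open neighborhood of the common value $\tilde v(x^i)$. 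On that nondegenerate overlap the two affine expressions for $v$ must coincide, forcing $a_1=a_2$ and $b_1=b_2$. Hence $v=a_1\tilde v+b_1$ throughout $Q_i\cup Q_{i+2}$, so $v$ represents $\succsim$ there, and $v$ is a utility for the shortened sequence.
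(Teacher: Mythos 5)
Your reduction to the cross-comparison on $Q_i \cup Q_{i+2}$ and your use of Lemma \ref{lem: renormalization} are fine, but the final step --- ``on that nondegenerate overlap the two affine expressions for $v$ must coincide, forcing $a_1=a_2$ and $b_1=b_2$'' --- is a non sequitur, and it is exactly where the content of the lemma lives. The two affine maps $f_1(t)=a_1t+b_1$ and $f_2(t)=a_2t+b_2$ convert $\tilde v$-values into $v$-values on the \emph{disjoint} domains $Q_i$ and $Q_{i+2}$; the fact that $\tilde v(Q_i)$ and $\tilde v(Q_{i+2})$ share an open interval $I$ around $c=\tilde v(x^i)$ does not by itself link the values of $v$ across the two regions. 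To get $f_1=f_2$ on $I$ you would need, for a nondegenerate set of $t\in I$, points $p\in Q_i$ and $q\in Q_{i+2}$ with $\tilde v(p)=\tilde v(q)=t$ \emph{and} $v(p)=v(q)$. But $\tilde v(p)=\tilde v(q)$ only gives $p\sim q$, and the implication ``$p\sim q \Rightarrow v(p)=v(q)$'' for $p\in Q_i$, $q\in Q_{i+2}$ is precisely the statement being proved: $v$ being a utility for the original sequence constrains only \emph{consecutive} pairs, and $Q_i,Q_{i+2}$ are not consecutive there. What your ingredients actually deliver is a single point of agreement, $f_1(c)=f_2(c)=v(x^i)=v(x^{i+1})=v(x^{i+2})=c^*$, and one common point does not pin down an affine map (it fixes one equation for the two unknowns relating $(a_1,b_1)$ to $(a_2,b_2)$).

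The gap can be closed, but only by re-introducing the bridge through $Q_{i+1}$ that you explicitly said you would avoid --- the normalization of $v|_{Q_i}$ relative to $v|_{Q_{i+2}}$ is tied down \emph{only} through the middle region, so no argument can dispense with it. Concretely: by your own interior-point argument, $c^*=v(x^{i+1})$ is interior to $v(Q_{i+1})$ as well as to $v(Q_i)$ and $v(Q_{i+2})$; hence for every $s$ in a small interval around $c^*$ there are $p\in Q_i$, $w\in Q_{i+1}$, $q\in Q_{i+2}$ with $v(p)=v(w)=v(q)=s$, so $p\sim w\sim q$ (consecutive pairs), so $p\sim q$ by transitivity, so $\tilde v(p)=\tilde v(q)$, i.e.\ $(s-b_1)/a_1=(s-b_2)/a_2$ for an interval of values $s$, which does force $(a_1,b_1)=(a_2,b_2)$. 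Even so repaired, your route differs from the paper's: the paper fixes arbitrary $x\in Q_i$, $y\in Q_{i+2}$ and uses the mixture operation $\oplus$ together with Affine Across Categories to contract them toward $x^i$ and $x^{i+2}$ until their $v$-values enter $B\subset(l_{i+1},u_{i+1})$, bridges the contracted points through $Q_{i+1}$ via the selection $v^{-1}$, and then cancels the contraction using affineness of $v$ under $\oplus$ and $v(x^i)=v(x^{i+2})$.
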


\begin{proof}
	The Lemma is vacuously true for any $1$ or $2$-element IS.
	Fix an IS $(Q_1,\dots,Q_n)$ with $n \geq 3$ and $v$ as above, and suppose $x^k \in Q_k$ for $k=i,i+1,i+2$ with $x^i \sim x^{i+1} \sim x^{i+2}$.
	By transitivity $x^i \sim x^{i+2}$, so $(Q_1,\dots,Q_i,Q_{i+2},\dots,Q_n)$ is an IS; it remains to be shown that $v$ is a utility for it.
	There is an $\epsilon>0$ s.t. $B=B_{\epsilon}(v(x^i))  \subset (l_k,u_k)$ for  $k=i,i+1,i+2$.
	Let $v^{-1}(u):B\rightarrow Q_{i+1}$ be an arbitrary point in $Q_{i+1}$ such that $v[v^{-1}(u)]=u$.
	Now, fix $x \in Q_i$ and $y\in Q_{i+2}$.
	For $\alpha$ small enough, $v(x\alpha^i x^i), v(y\alpha^{i+2} x^{i+2}) \in B$.
	Then $x\alpha^i x^i \sim v^{-1}(v(x\alpha^i x^i))$ and $y\alpha^{i+2} x^{i+2} \sim v^{-1}(v(y\alpha^{i+2} x^{i+2}))$.
	So\begin{eqnarray*}
		x  \succsim  y & \iff & x\alpha^i x^{i}  \succsim  y \alpha^{i+2} x^{i+2}\\
		& \iff & v^{-1}(v(x\alpha^i x^i))  \succsim  v^{-1}(v(y\alpha^{i+2} x^{i+2})) \\
		& \iff & v[v^{-1}(v(x\alpha^i x^i))]  \geq  v[v^{-1}(v(y\alpha^{i+2} x^{i+2}))]\\
		& \iff & \alpha v(x)+(1-\alpha)v(x^i)  \geq  \alpha v(y) +(1-\alpha) v(x^{i+2})\\
		& \iff & v(x) \geq v(y)
	\end{eqnarray*} 
	This establishes the Lemma.
\end{proof}

\begin{lemma}
	\label{lem: triple indifference}
	Fix an indifference sequence $(Q_1,\dots,Q_n)$ with utility $v$. 
	If $(l_1,u_1)\bigcap (l_n,u_n) \neq \emptyset$, then there exists $i$ and  $x^k \in Q_k$ for $k=i,i+1,i+2$ with $x^i \sim x^{i+1} \sim x^{i+2}$.
\end{lemma}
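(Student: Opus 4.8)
The plan is to recast the statement as a fact about overlapping intervals on the line and then translate back. Write $I_k=(l_k,u_k)$ for the interior of $cl(v(Q_k))=[l_k,u_k]$. Since $v$ restricted to $Q_k$ is additively separable and strictly monotone in each coordinate, it is an open map (vary a single coordinate to see that every value near $v(x)$ is attained near $x$), so $v(Q_k)$ is an open subset of $\mathbb{R}$ that is dense in $[l_k,u_k]$; being open inside $[l_k,u_k]$ it cannot contain an endpoint, so $v(Q_k)\subseteq I_k$. For each $k<n$ the indifference sequence supplies $a\in Q_k$ and $b\in Q_{k+1}$ with $a\sim b$, hence $v(a)=v(b)\in v(Q_k)\cap v(Q_{k+1})\subseteq I_k\cap I_{k+1}$, so consecutive intervals overlap. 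The hypothesis is exactly $I_1\cap I_n\neq\emptyset$. I claim these force $I_i\cap I_{i+1}\cap I_{i+2}\neq\emptyset$ for some $i$ (this needs $n\ge 3$, the only case in which the conclusion is meaningful).

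To prove the interval claim I would argue by contradiction, supposing no three consecutive $I$'s meet. Set $J_k=I_k\cap I_{k+1}$, a nonempty open interval; the assumption says $J_k\cap J_{k+1}=I_k\cap I_{k+1}\cap I_{k+2}=\emptyset$. Thus $J_k$ and $J_{k+1}$ are disjoint subintervals of $I_{k+1}$, so one lies entirely to the left of the other. A short case check shows these ``directions'' cannot alternate: writing $I_{k+1}=(p,q)$ and $I_{k+2}=(p',q')$, if $J_k$ is left of $J_{k+1}$ then $\sup J_k\le\inf J_{k+1}=\max(p,p')$ forces $p'>p$, whence $\inf J_{k+1}=p'$ and any $J_{k+2}\subseteq I_{k+2}=(p',q')$ lying left of $J_{k+1}$ would satisfy $\sup J_{k+2}\le p'\le\inf J_{k+2}$, impossible for a nonempty open interval. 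Hence all the $J_k$ are linearly ordered the same way, say increasingly. Choosing $m_k\in J_k$ gives $m_1<\cdots<m_{n-1}$ with $m_k\in I_k\cap I_{k+1}$, and therefore $[m_{k-1},m_k]\subseteq I_k$ for $2\le k\le n-1$. Now take any $t\in I_1\cap I_n$; since $I_1\ni m_1,t$ and $I_n\ni m_{n-1},t$ are intervals, comparing $t$ against the ordered points $m_1<\cdots<m_{n-1}$ places one of $m_2$, $m_{n-2}$, or $t$ itself inside three consecutive intervals, contradicting the standing assumption.

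It then remains to turn $I_i\cap I_{i+1}\cap I_{i+2}\neq\emptyset$ into an indifferent triple, and this is the step I expect to be the main obstacle. The subtlety is that a category $Q_k$ need not be connected (a regular open set with connected closure can be disconnected, e.g. two tangent disks), so $v(Q_k)$ may have gaps and a single value of $W:=I_i\cap I_{i+1}\cap I_{i+2}$ need not be attained in all three categories simultaneously. I would circumvent this using that each $v(Q_k)$ is open and dense in $[l_k,u_k]\supseteq W$: the three sets $v(Q_k)\cap W$ for $k=i,i+1,i+2$ are open and dense in the nonempty open interval $W$, so by the Baire property their intersection is nonempty. Picking $s$ in it and $x^k\in Q_k$ with $v(x^k)=s$ for $k=i,i+1,i+2$, the facts $x^i,x^{i+1}\in Q_i\cup Q_{i+1}$ and $x^{i+1},x^{i+2}\in Q_{i+1}\cup Q_{i+2}$, together with $v$ being a utility for the indifference sequence, give $x^i\sim x^{i+1}\sim x^{i+2}$, which is exactly the desired conclusion.
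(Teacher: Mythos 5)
Your proof is correct, and at the top level it follows the same strategy as the paper's: translate the hypothesis into overlapping open intervals $I_k=(l_k,u_k)$ on the line, show by contradiction that some three consecutive intervals must share a point, and then pull a shared utility value back to an indifferent triple. The executions differ in two ways worth noting. First, for the interval combinatorics, the paper assigns a ``direction'' of disjointness to each pair $(I_i,I_{i+2})$ and locates a switch point where $I_{i+2}$ lies strictly to one side of both $I_i$ and $I_{i+4}$, concluding that $l_{i+2}$ (or $u_{i+2}$) lies in $I_{i+1}\cap I_{i+3}$; you instead order the consecutive overlaps $J_k=I_k\cap I_{k+1}$, rule out alternation, and play the monotone chain $m_1<\cdots<m_{n-1}$ against a point $t\in I_1\cap I_n$. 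Both are valid, and yours avoids the index bookkeeping out to $i+4$. Second, and more substantively, your final step is more careful than the paper's: the paper simply asserts that a value $u$ in the triple intersection is attained as $v(x_j)$ with $x_j\in Q_j$ for each of the three categories, but since a category need not be connected (only its closure is), $v(Q_j)$ is in general only an open set dense in $[l_j,u_j]$, not the whole interval --- your tangent-disks example shows the image can genuinely omit interior values. Your observation that the three sets $v(Q_j)\cap W$ are open and dense in the open interval $W$ (openness because $v$ is additively separable with strictly monotone continuous components, hence an open map), so that their finite intersection is nonempty, is exactly what is needed; Baire is overkill for finitely many sets but harmless. So your proposal not only proves the lemma but repairs a small lacuna in the paper's own argument.
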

\begin{proof}
	If there is $i$ with $(l_i,u_i) \bigcap (l_{i+2},u_{i+2}) \neq \emptyset$, then there is $u \in \bigcap_{j=i,i+1,i+2}(l_j,u_j)$ so there exists $x_j \in Q_j$ with $v(x_j)=u$ for $j=i,i+1,i+2$ and thus by the hypothesis, $x_i \sim x_{i+1} \sim x_{i+2}$.
	We show there exists such an $i$ by contradiction.
	If $l_{i+2}>u_i$ for all $i$ or $l_i>u_{i+2}$ for all $i$, then $(l_1,u_1) \bigcap (l_n, u_n) =\emptyset$, a contradiction.
	So there must exist $i$ such that [$l_{i+2}>u_i$ and $l_{i+2}>u_{i+4}$] or [$u_{i+2}<l_i$ and $u_{i+2}<l_{i+4}$].
	In the first case, $l_{i+2} \in (l_{i+1},u_{i+1}) \bigcap(l_{i+3},u_{i+3})$; in the second, $u_{i+2} \in (l_{i+1},u_{i+1}) \bigcap(l_{i+3},u_{i+3})$.
	In either case, we have a contradiction.
\end{proof}

\begin{lemma}
	\label{lem: utility on IS}
	Fix an indifference sequence $(Q_1,\dots,Q_n)$ with utility $v$.  
	Then for all $x,y \in \bigcup_i Q_i$, $x \succsim y \iff v(x) \geq v(y)$.
\end{lemma}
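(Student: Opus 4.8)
The plan is to induct on the length $n$ of the indifference sequence, using the two preceding lemmas to cut long sequences down to shorter ones. The base cases $n\le 2$ are immediate: for $n=1$ the function $v$ agrees with the additive utility $U^k$ on $Q_1$ and hence represents $\succsim_r$ there, while for $n=2$ the defining property of a utility for the IS (with $k=1$) says precisely that $v$ represents $\succsim$ on $Q_1\cup Q_2$. For the inductive step, I fix $x\in Q_i$ and $y\in Q_j$ and pass to the contiguous subsequence running between indices $\min(i,j)$ and $\max(i,j)$, which is itself an IS with utility $v$. If either point can be placed at an interior index, this subsequence is strictly shorter and the induction hypothesis applies directly, so it remains to treat the forced-endpoint case $x\in Q_1$, $y\in Q_n$.

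First I would record a structural fact used throughout: consecutive closed utility ranges overlap. The IS supplies $a^k\in Q_k$ and $b^k\in Q_{k+1}$ with $a^k\sim_r b^k$, and since $v$ represents $\succsim$ on $Q_k\cup Q_{k+1}$ this forces $v(a^k)=v(b^k)$; hence $[l_k,u_k]\cap[l_{k+1},u_{k+1}]\neq\emptyset$ and $\bigcup_k[l_k,u_k]$ is an interval.

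The reducible case is when the endpoint ranges meet, $(l_1,u_1)\cap(l_n,u_n)\neq\emptyset$. Here Lemma \ref{lem: triple indifference} produces a consecutive triple $Q_p,Q_{p+1},Q_{p+2}$ carrying a common indifferent point, and Lemma \ref{lem: shorter sequence} collapses it, yielding an IS of length $n-1$ with the same utility $v$. Since the deleted index $p+1$ is interior, both $Q_1$ and $Q_n$ survive, so $x$ and $y$ still lie in the shortened sequence, and the induction hypothesis gives $x\succsim y\iff v(x)\ge v(y)$.

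The hard part will be the irreducible case, $(l_1,u_1)\cap(l_n,u_n)=\emptyset$, say $u_1\le l_n$. Then every point of $Q_1$ has $v$-value at most that of every point of $Q_n$, so $v(x)\le v(y)$ automatically, and the claim reduces to showing $v(x)<v(y)\Rightarrow y\succ_r x$ together with $v(x)=v(y)\Rightarrow x\sim_r y$. I would prove both by constructing a finite chain $x=z^0,z^1,\dots,z^p=y$ in which each consecutive pair $z^t,z^{t+1}$ lies in a common block $Q_k\cup Q_{k+1}$ and the values $v(z^t)$ are non-decreasing: using the overlap of consecutive closed ranges, the denseness of $v(Q_k)$ in $[l_k,u_k]$ together with the local mixing operation $\oplus^k$ to realize intermediate utility levels inside a category, and the bridging indifferences $a^k\sim_r b^k$ to move between adjacent categories, one walks monotonically in $v$ from $v(x)$ up to $v(y)$. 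Each step is then a weak preference $z^{t+1}\succsim_r z^t$ by the IS-utility property on that block, with at least one strict step exactly when $v(x)<v(y)$; transitivity of $\succsim_r$ delivers $x\precsim_r y$ (strict when $v(x)<v(y)$), and the symmetric walk yields $x\sim_r y$ when the values coincide. Threading this chain through overlapping ranges while keeping it monotone in $v$ is the only delicate point; the remainder is bookkeeping for the induction.
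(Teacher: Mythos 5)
Your base cases, the reduction to the forced-endpoint case $x\in Q_1$, $y\in Q_n$, and your treatment of the reducible case $(l_1,u_1)\cap(l_n,u_n)\neq\emptyset$ via Lemmas \ref{lem: triple indifference} and \ref{lem: shorter sequence} all match the paper's argument. The gap is in the irreducible case, and it is not merely a ``delicate point'': the value-monotone chain you want to build need not exist, because consecutive overlap of the ranges does not prevent an intermediate range from dipping entirely below $v(x)$. Concretely, take $n=6$ with $(l_k,u_k)$ equal to $(0,10)$, $(9,20)$, $(3,12)$, $(2,5)$, $(4,13)$, $(11,30)$: consecutive open ranges overlap, $(l_1,u_1)\cap(l_6,u_6)=\emptyset$, and $u_1\leq l_6$, so this is your irreducible case. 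Take $v(x)=6$ and $v(y)=25$. Any chain whose consecutive elements lie in a common block $Q_k\cup Q_{k+1}$ must, in travelling from $Q_1$ to $Q_6$, contain a point of $Q_4$ (the block structure only permits the category index to move by one per step), and every point of $Q_4$ has $v$-value at most $u_4=5<6=v(x)$. Hence no chain from $x$ to $y$ can have non-decreasing $v$-values, and your construction cannot be completed. Nothing in the definition of an indifference sequence rules out such a configuration, so a proof must handle it. (A smaller point: your equality subcase $v(x)=v(y)$ is vacuous, since $v(Q_1)$ and $v(Q_n)$ are open sets, giving $v(x)<u_1\leq l_n<v(y)$.)

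The paper's irreducible case uses the induction hypothesis instead of a direct construction, and that is exactly what absorbs the dips. Since consecutive ranges overlap, $\bigcup_{i<n}(l_i,u_i)$ is an open interval $(\bar l,\bar u)$ containing $l_n$, and the range of $v$ on $\bigcup_{i<n}Q_i$ is dense in it; hence there exists $y'\in Q_{n'}$ with $1<n'<n$ and $l_n<v(y')<v(y)$. The sequence is then split into $(Q_1,\dots,Q_{n'})$ and $(Q_{n'},\dots,Q_n)$, both strictly shorter, and the induction hypothesis applied to each gives $y'\succ x$ and $y\succ y'$, so transitivity yields $y\succ x$. In the dip example above, the sub-sequence containing the dip now has overlapping endpoint ranges, so in the recursive call it falls into the reducible case and gets collapsed by Lemmas \ref{lem: shorter sequence} and \ref{lem: triple indifference}; non-monotone geometry is eliminated by collapsing, not threaded through. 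If you replace your chain construction with this split-and-recurse step, the rest of your outline goes through as written.
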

\begin{proof}
	This is clearly true if $n=1$.
	(IH) Suppose the claim is true for any IS with $m<n$ elements.
	Fix an IS $(Q_1,\dots,Q_n)$ with utility $v$.
	If $x \notin Q_1 \bigcup Q_n$ or $y \notin Q_1 \bigcup Q_n$, then the claim immediately follows from the IH, and clearly holds if $x,y\in Q_i$ for some $i$.  So it suffices to consider arbitrary $x \in Q_1$ and $y \in Q_n$.
	By Lemmas \ref{lem: shorter sequence} and \ref{lem: triple indifference}, if $(u_1,l_1) \bigcap (l_n,u_n) \neq \emptyset$, we can form a shorter IS from $Q_1$ to $Q_n$ and the claim then follows from the IH.
	
	There are two cases to consider:  $l_n>u_1$ and $u_n<l_1$.
	Consider $l_n>u_1$.
	The range of $v$ restricted to $\bigcup_{i=1}^{n-1}Q_i$ is dense in $\bigcup_{i=1}^{n-1}(l_i,u_i)=(\bar{l},\bar{u})$.
	Note $l_n \in (\bar{l},\bar{u})$ since $x_{n-1} \sim y_{n}$, so $(l_{n-1},u_{n-1})\bigcap (l_n,u_n) \neq \emptyset$.
	Then $(l_n,v(y))$ is an open interval having a non-empty intersection with $(\bar{l},\bar{u})$.
	Since the range of $v$ is dense in $(\bar{l},\bar{u})$, there exists $y' \in  Q_{n'} $ with $l_n<v(y') <v(y)$.
	Since $l_n>u_1$, $n'>1$.
	Then $(Q_1,\dots,Q_{n'})$  and $(Q_{n'},\dots,Q_n)$ are both ISes with strictly less than $n$ elements. Applying the IH, $y^\prime \succ x$ and $y\succ y^\prime$.
	Conclude using transitivity that $y\succ x$.
	Similar arguments obtain the desired conclusion when $u_n<l_1$.
\end{proof}

Define $\bowtie_r$ as in the proof of Theorem \ref{thm: weak CTM}, and let $A_1,\dots,A_n$ be the distinct indifference classes of $\bowtie_r$.
Again using Lemma \ref{lem: ordering of ISes}, we can relabel so that $x \in A_i$ and $y \in A_{i+1}$ implies $x\succ_r y$.
By Lemma \ref{lem: utility on IS}, there is $v_i$ on $A_i$ so that $v_i$ is additive and increasing within categories and $x\succsim y \iff v_i(x)\geq v_i(y)$ for all $x,y\in A_i$.

By Unbounded and Lemma \ref{lem: ordering of ISes}, every positive unbounded region (if any) is a subset of $A_1$, and every negative unbounded region (if any) is a subset of $A_n$. If one region is both positive and negative unbounded, then $n=1$. Therefore, $v_i(A_i)$ is bounded for all $i\in (1,n)$, and $v_n(A_n)$ is bounded above whenever $n>1$.
Define $V(x)=v_1(x)$ for all $x \in A_1$. For $x \in A_i$ with $i>1$, define $V(x)$ recursively by \[
V(x) = v_i(x) - \sup_{y \in A_{i}} v_i(y) + \inf_{y\in A_{i-1}} V(y) -1.
\] 
Observe $V(\cdot)$ is a positive affine transformation of $v_i(\cdot)$ when restricted to $A_i$, and if $x \in A_i$, $y\in A_j$ and $i>j$, then $V(x)>V(y)$. Thus $V$ represents $\succsim_r$ and,  when restricted to any given region, is affine and increasing.

Defining $U^k(\cdot|r)$ as the (unique) affine transformation of $U^k$ so it agrees with $V$ on $K^k(r)$ establishes that $\succsim_r$ is an Affine CTM.
Since $r$ was arbitrary, this establishes that each $\succsim_r$ has such a  representation. Conclude that $\{\succsim_r\}$ conforms to Affine CTM, completing the proof. \hfill \qedsymbol

\subsection{Proof of Theorem \ref{result: strong CTM}} 
\label{pf: result: strong CTM}

 Without loss of generality, normalize so that $U^1(\cdot|r)=U^1(\cdot|r^\prime)$ for all $r,r^\prime$. 
Suppose $U^k(\cdot|r) \neq U^k(\cdot|r^\prime)$ for some $r,r^\prime$ and some $k$. Then, let $\bar{\epsilon}=d(r,r^\prime)$ and pick a sequence $\hat{r}_n \rightarrow \hat{r}$ such that: $U^k(\cdot|\hat{r}_n)\neq U^k(\cdot|r)$, $\hat{r}_n \in B_{\bar{\epsilon}}(r)$ for all $n$, and  $d(\hat{r}_n,r) \rightarrow \inf \{d(r^\prime,r):U^k(\cdot|r) \neq U^k(\cdot|r^\prime)\}$. Since $\hat{r}_n \in cl(B_{\bar{\epsilon}}(r))$, there is no loss in assuming this sequence converges.
Similarly, let $r_n$ be a sequence in $B_{\bar{\epsilon}}(r)$  such that $r_n \rightarrow \hat{r}$ and $U^k(\cdot|r) = U^k(\cdot|r_n)$. 

By hypothesis and that each $K^k(r)$ is open, there exists $\epsilon>0$, $x^k$ and $x^1$ such that  $B_{2\epsilon}(x^k) \subset K^k(\hat{r})$, $B_{2\epsilon}(x^1) \subset K^1(\hat{r})$, and $x^k \sim_{\hat{r}} x^1$. By continuity of the region functions, $B_{\epsilon}(x^k) \subseteq K^i(\hat{r}_n)\cap K^i(r_n)$ and $B_{\epsilon}(x^1) \subseteq K^1(\hat{r}_n)\cap K^1(r_n)$ for $n$ large enough. For $z$ close enough to $x^k$, there exists $y(z)\in B_{\epsilon}(x^1)$ such that $z \sim_{\hat{r}} y(z)$.	 But then by SC, $z \sim_{r_n} y(z)$ and $z \sim_{\hat{r}_n} y(z)$. 	Thus $U^k(z|r_n)=U^1(y(z)|r_n)=U^1(y(z)|\hat{r}_n)=U^k(z|\hat{r}_n)$ for all $z$ close enough to $x_k$, implying that $U^k(\cdot|r_n)=U^k(\cdot|\hat{r}_n)$, a contradiction. Conclude $U^k(\cdot|r)=U^k(\cdot|r^\prime)$ for all $r,r^\prime$.
\hfill\qedsymbol
\subsection{Examples from Table \ref{table:comparison}}
\label{sec: examples table}
Example \ref{ex: BGS cancellation} shows that BGS violates Cancellation and inspecting Figure \ref{fig:models} shows it violates Monotonicity. It remains to show that TK violates Reference Irrelevance and that MO violates Cancellation. This is established by the following two examples.
\begin{example}[TK violates Reference Irrelevance]
	Consider a TK model with $\lambda_1=\lambda_2=2$.
	Then, for $r=(10,10)$, $x=(12,12)$ and $y=(9,16)$, $y \succsim_r x$ since $(12-10)+(12-10)=2(9-10)+(16-10)$.
	For $r'=(11,11)$, $x \succ_r y$ since $(12-11)+(12-11)>2(9-11)+(16-11)$. But $x\in R^{GL}_1(r)\bigcap R^{GL}_1(r')$ and $r\in R^{GL}_2(r)\bigcap R^{GL}_2(r')$, so the family violates Reference Irrelevance.
\end{example}

\begin{example}[MO violates Cancellation]
	Let $Q(r)=\left\{ x \in X: x_1/2+x_2 >r_1/2+r_2 \right\}$ and  $c(r)=1$. Then, let $x=(2,1)$, $y=(1,2)$, $z=(4,4)$, and $r=(0.9,1.9)$. Since $(x_1,z_2)=(2,4)\succsim_r (4,2)=(z_1,y_2)$ and $(z_1,x_2)=(4,1)\succsim_r (1,4)=(y_1,z_2)$ because all four points belong to $Q(r)$, cancellation requires that $x\succsim_r y$. However, $x\notin Q(r)$, so $y \succ_r x$, so cancellation does not hold.
\end{example}

\subsection{Other CTM}\label{appendix:otherCTM}

\subsubsection{Quasi-Hyperbolic Model}  

Let the pair $(c, t)$ represent consumption of $c$ at time  $t$.  Formally, we define categories according to $K^{QH}=(K^{short},K^{long})$ where $K^{short}(r_c,r_t)=\{(c,t)| t < r_t\}$ and $K^{long}(r)=\{(c,t)| t > r_t\}$. 
The utility function is 
$$V_{QH}(c,t|r)=\left\{ 
\begin{array}{lc}
(\beta\delta)^t u(c)  & \text{ if } (c,t)  \in K^{short}(r)  \\
\beta^{r_t} \delta^t u(c)  & \text{ if } (c,t)  \in K^{long}(r)  \\
\end{array}	
\right.$$
where $0  <  \delta < 1 $ and $0<\beta\leq 1$.
The model is additively separable after taking logs,   
so it is a special case of CTM. 
It exhibits present bias when $\beta<1$: there exist values $c>c'>0$ so that the DM prefers $(c,\tau ) \succsim_{r} (c',\tau+1)$ if and only if $\tau<r_t-1$.%
\footnote{For instance $u(c)=1$ and $u(c')=(\beta \delta)^{-1}$.}
Figure \ref{fig:Time} plots its indifference curves.

\begin{figure}[h]
	\begin{center}
		\includegraphics[width=0.3\textwidth]{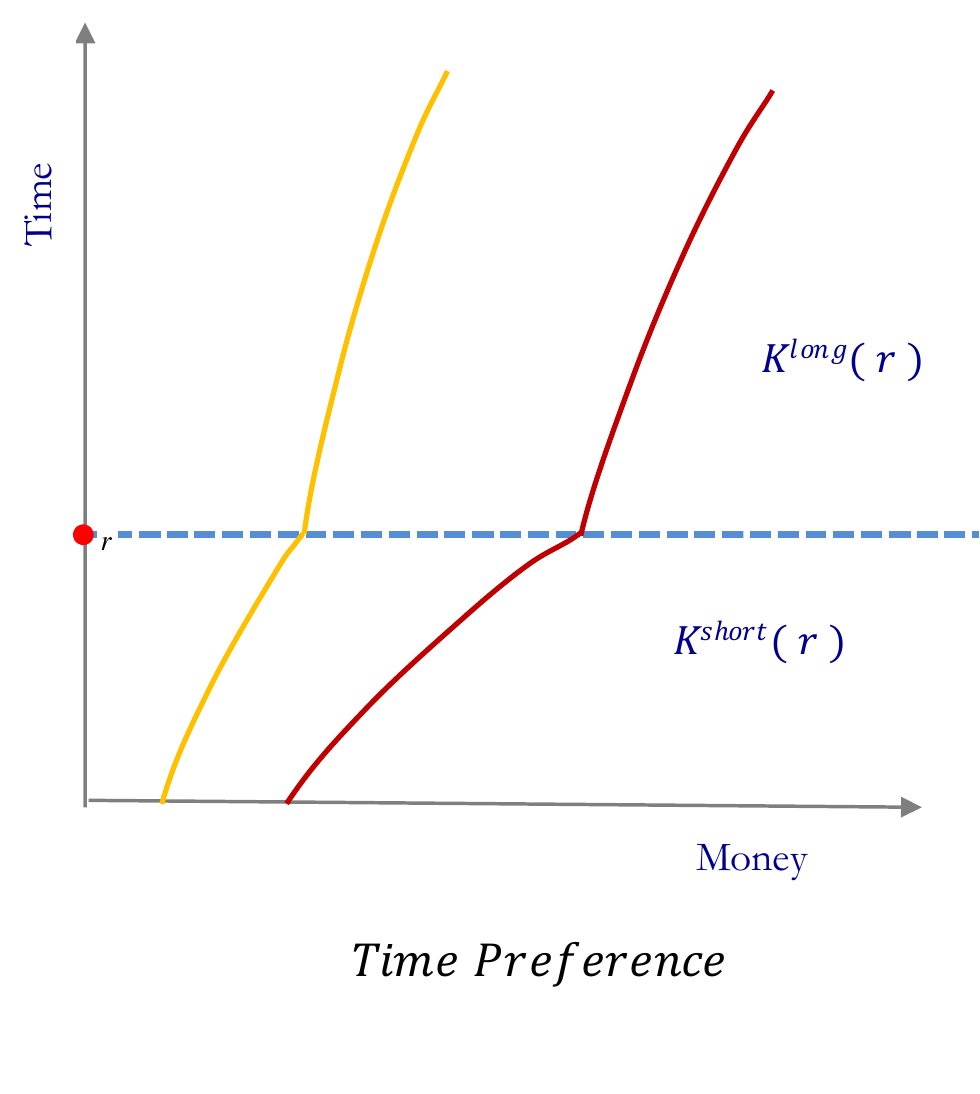}
	\end{center}
	\caption{CTM for Dated Rewards}
	\label{fig:Time}
\end{figure}

\subsubsection{Inequality Aversion Model} The category function is $\mathcal{K}^{RIA}= (K^E,K^G)$ where   $$K^G(r)=\{x \in X: {x_1}-{r_1} > {x_2}-{r_2} \}   \text{ and } K^E(r)=\{x \in X: {x_1}-{r_1} < {x_2}-{r_2} \} $$
The set $K^G(r)$ contains all allocations where individual $1$ is advantaged relative to the individual $2$, and $K^E(r)$ all those where she is disadvantaged.%
\footnote{In general there are $2^{n-1}$ categories, corresponding to envy or guilt for each binary comparison with every other individual. For instance, there are 4 categories with $n=3$: $EE$, $EG$, $GE$, and $GG$.}

The DM  feels guilty if her own relative gain  is higher than the other's relative gain. Otherwise, the  DM is envious of the other. 
Hence, a social allocation  $x$ is evaluated according to
\begin{equation*}
\label{eq:FS rep2}
V_{RIA}(x|r)=\left\{ 
\begin{array}{lc}
x_1 -\alpha [(x_1-r_1) - (x_2-r_2)]  & \text{ if } x \in K^E(r)  \\
x_1 -\beta  [(x_2-r_2) -(x_1-r_1)]  & \text{ if } x \in K^G(r)  
\end{array}
\right.
\end{equation*}
where $\alpha \geq \beta \geq 0$ and $\beta<1$.
Observe that when $r_i=r_j$ for all $i$ and $j$ (the equitable outcome), the utility function reduces to that of \cite{fehrschmidt99}. 
Also, the model is an Affine CTM, and a Strong CTM for the restricted set of reference points with $r_1=r_2$.

\begin{figure}[h]
	\begin{center}
		\includegraphics[width=0.7\textwidth]{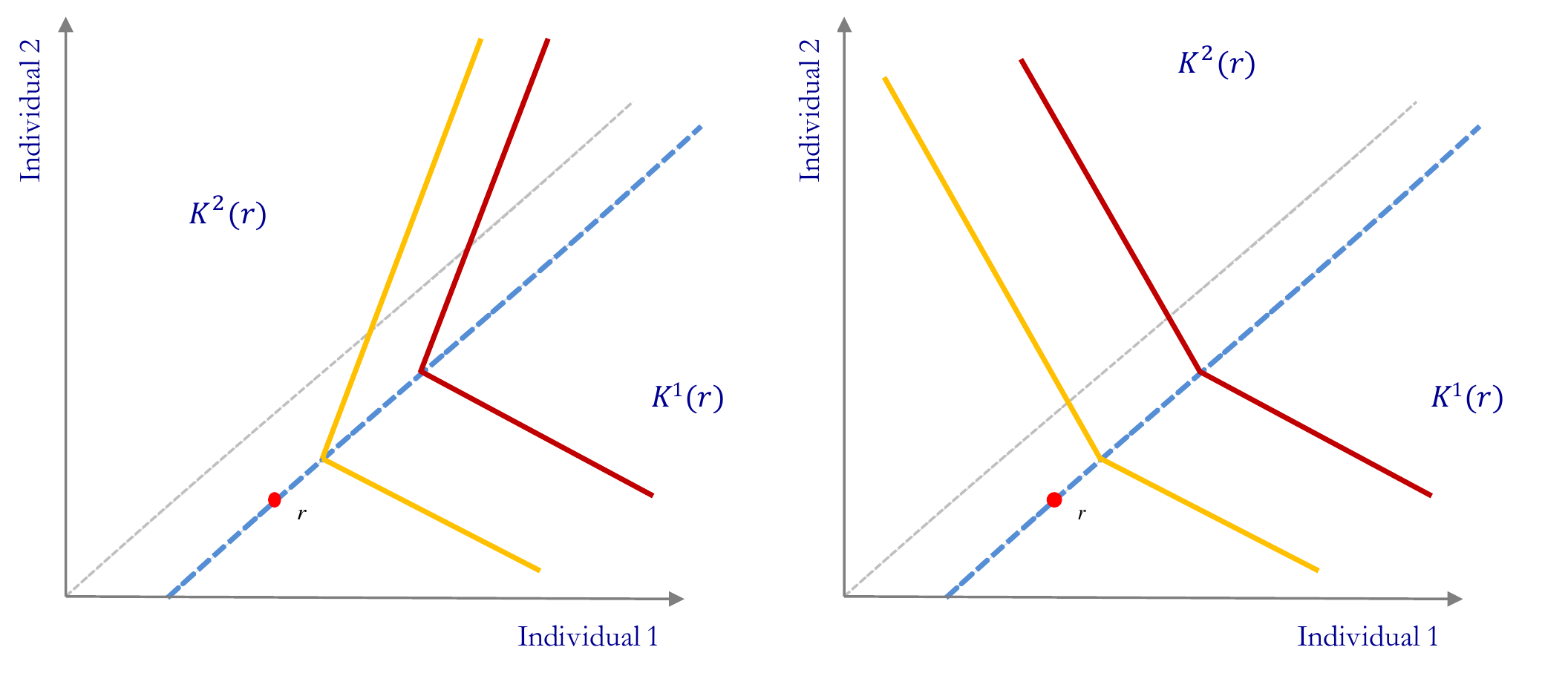}
	\end{center}
	\caption{Left: Relative Inequality Aversion and Right: Reference-Dependent Distributional Preferences }
	\label{fig:FS}
\end{figure}

\subsubsection{Distributional Preferences:}
\cite{charness2002understanding} argue that people care about both their own utility and social welfare as a whole.
They maximize a weighted average of the two. Social welfare is a weighted average of society's total utility and the utility of worst-off individual.
We propose a natural extension of their model with an exogenously given  reference point. We call this model \emph{Reference-Dependent Distributional Preferences (RDDP)}.
Formally, categories are given by $\mathcal{K}^{CR}= (K^1, K^2)$ where  $$K^j(r)= \left\{x \in X: j  = \arg\min_i  (x_i-r_i) \right\}.$$ 
Each category corresponds to the individual with the worst relative payoff.
If $j=1$, then the DM is behind and wants to catch up.
If $j=2$, then the DM is ahead and is more willing to help the other to catch up.%
\footnote{With $n$ individuals, there are $n$ categories, each corresponding to the identity of the worst-treated individual in terms of relative consumption, $(x_i-r_i)$.  The utility function reduce to $V_{CR}(x|r)=(1- \lambda)(x_1-r_1)+ \lambda [\delta  \min\{x_1-r_1,\dots,x_n-r_n\}+(1-\delta) \sum_k (x_k-r_k)] $.
While RDDP and RIA have category functions that coincide with $n=2$ individuals, their category functions diverge for all other $n$. Figure \ref{fig:FS} reveals that the behavior necessarily differs even with $n=2$.
}

In RDDP, the DM puts extra weight on the consumption of the individual who is furthest behind. Formally, she evaluates a social allocation $x$ with reference $r$ according to
\[
V_{CR}(x|r)
=\left\{	\begin{array}{cc}
(1- \lambda  )(x_1-r_1)+ \lambda[\delta  (x_1-r_1)+(1-\delta)\sum_k(x_k-r_k)]  &\text{ if } x \in K^1(r) \\
(1- \lambda  )(x_1-r_1)+ \lambda[\delta  (x_2-r_2)+(1-\delta)\sum_k(x_k-r_k)] &\text{ if } x \in K^2(r) 
\end{array}\right.
\] where  $\delta, \lambda \in (0,1)$.
Utility is increasing in the DM's own consumption, the minimum of all individuals' payoffs, and the total of all individuals' payoffs. Hence, the DM is willing to give up more of her own consumption to increase that of the worst-off individual than that of one of others.
The parameter $\delta$ measures the degree of concern for helping the worst-off individual (Rawlsian) versus maximizing the total social payoffs (Utilitarian), and
 $\lambda  $ measures how the DM balances social welfare with her own material payoff.
Note that if $r_i=r_j$ for all $i$ and $j$, the utility function is cardinally equivalent to that of \cite{charness2002understanding}.%
\footnote{The authors assume $U(x)=(1- \lambda)x_1+ \lambda [\delta  \min\{x_1,\dots,x_n\}+(1-\delta) \sum_k x_k] $ (see their Appendix 1). Pick any allocation $x$ and reference point $r$ so that $r_j=r_k=r^*$ for every $j$ and $k$. Let $i^*\in \arg\min_j x_j$.
Subtracting the same constant from each element in a set does not change the minimizer, so $V_{CR}(x|r)=(1-\lambda)(x_1-r_1)+\lambda \delta (x_{i^*}-r_{i^*})+\lambda (1-\delta)\sum_k (x_k-r_k)=U(x)-\lambda \delta r_{i^*}-(1-\lambda) r_1-\lambda \sum_k r_k$.
Since $r_{j}=r^*$ for all $j$, $V_{CR}(x|r)=U(x)-(\lambda n +(1-\lambda(1-\delta)))r^*$, i.e. it is an affine transformation of $U(x)$ and this tranformation does not depend on $x$.}
The model is an Affine CTM.

\subsection{Other models and CTM}
\label{sec: behavioral models that are not RPM}
In this subsection, we present the functional forms of the other models we discussed, and show that they are not CTM.

\begin{itemize}
%

	\item \cite{gabaix2014sparsity} assumes a rational DM would  maximize $u(a,w)$ but actually maximizes \[u \left(a,(w_1 m^*_1,\dots,w_n m^*_n) \right)\] where \[
	m^* \in \arg \min_{m \in [0,1]^n} \frac12 \sum_{i,j}(1-m_i)\Lambda_{ij} (1-m_j)+\kappa \sum_{i} m_i^\alpha 
	\]
	where $\Lambda_{ij}$ incorporates the ``variance'' in the marginal utility of dimensions $i$ and $j$. When $n$ is large, $m^*_i$ is often zero, so $(w_1 m^*_1,\dots,w_n m^*_n)$ is a ``sparse'' vector.
	\item \cite{TK91} refer in general to $$V_{CTK}(x|r)=\sum_i v_i(u_i(x_i)-u_i(r_i)) $$  where $v_i$ is concave above $0$ and convex below
	\item \cite{BGS_MAC} and the continuous form of the salient thinking model has $$V_{CBGS}(x|r) =w(x_1,r_1) x_1 + w(x_2,r_2) x_2$$ where $w$ has the same properties as a salience function. 
	\item \cite{munro2003theory} use the functional from $$
	V_{MS}(x|r)=A(r)\left( \sum_i \gamma_i r_i^{\rho-\beta} x_i^\beta \right)^{\frac{1}{\beta}}
	$$
	\item \cite{bhatia2013attention} assume that the DM chooses the bundle $x$ that maximizes \[
	U(x|r)= \alpha_1(r_1) [V(x_1)-V(r_1)]+\alpha_2(r_2) [V(x_2)-V(r_2)]
	\]
	given that a reference point $r$, where each $\alpha_i$ is increasing and positive.

\end{itemize}	

The first fails to be CTM, as the indifference curves have the same slope everywhere for a fixed context.  If they were CTM, then they would necessarily have only a single region. Single region CTM coincides with the neoclassical model.
The final four explicitly take into account a reference point.
In all four, it is easy to see that the reference point affects the marginal rate of substitution between attributes.
This implies a violation of weak reference irrelevance for any given category function: any two points in the same category that are indifferent to each other necessarily remain so for a sufficiently small change in the reference point.

\subsubsection{Non-increasing CTM}
For simplicity, we have so far focused on increasing CTM.
This is a desirable feature in consumer choice, but models of social preference often violate this property.
For instance, inequality-averse individual $1$ prefers to increase the allocation to individual $2$ from $x$ to $y$ when  she feels guilty but not when she is envious.
However, she always prefers increasing the allocation to $2$ in an allocation categorized as guilty, and to decrease in any categorized as envious.
This contradicts Category Montonicity, suggesting the following weakening.

\begin{ax1}
	[Consistent Preference within Category, CPC]
	For each category $k$, there exists a set of attributes $P^k$ so that if $x_j \geq y_j$ for all $j \in P^k$, $y_i \geq x_i$ for all $i\notin P^k$, and $x \neq y$, then $y \not \succsim^{k*} x$.
\end{ax1}
The set $P^k$ contains the attributes for which an increase positively affects the DM's evaluation. CPC requires that the set of positive attributes in a category does not depend on the reference point. For the two-person-RIA model, the set  for the ``guilty'' category is $\{1,2\}$ since she strictly prefers increasing everyone's allocation, but the set for the ``envious'' one is $\{1\}$ -- she prefers more for herself but dislikes others having even more.
Note that CM is the special case of CPC where $P^k$ includes every dimension for every category.

A  CTM is characterized by all the properties of an increasing CTM, except where CM is replaced by CPC. The proof is a straightforward generalization of earlier one, so it is omitted.

\subsection{Proof of Proposition \ref{prop: general category ID}}
 Suppose that $\{\succsim_r\}_{r\in X}$ has a CTM and fix a category $k$ with $LIS^k(x)\neq LIS^l(x)$ for every $x \in X$ and category $l \neq k$.
Consider a category $k$ and reference $r$.
Define \[K= \{x \in X: \exists \epsilon>0\ s.t.\ \forall y \in B_{\epsilon}(x),\ y \sim_r x \iff U^k(x) = U^k(y)\}. \]
We show $int(K)= K^k(r)$.
Let $x \in K^k(r)$. Then, there exists a neighborhood $O \ni x$ with $O \subset K^k(r)$ since $K^k(r)$ open.
By the representation, for any $y\in O$, $x \sim_r y$ if and only if $U^k(y)=U^k(x)$, so picking any $\epsilon>0$ so that $B_\epsilon(x) \subset O$ shows that $ x\in K$. Since $K^k(r)$ is open and $K^k(r) \subset K$, $K^k(r) \subset int (K)$.

To show the reverse inclusion, suppose that $x \in K^l(r)$ for category $l \neq k$.
Since $LIS^k(x) \neq LIS^l(x)$, for any neighborhood $O \ni x$ there exists $y \in O$ so that either $U^k(y) \neq U^k(x)$ and $U^l(x) = U^l(y)$ or $U^k(y) = U^k(x)$ and $U^l(x) \neq U^l(y)$.
In particular this applies to $O'=O \cap K^l(r)$, so either there exists $y \in O'$ so that either $y \sim_r x$ and $U^k(y) \neq U^k(x)$ (in the first case) or $y \not \sim_r x$ and $U^k(y) = U^k(x)$ (in the second). Hence, $x \notin K$. Since $x$ is arbitrary, we have $K^l(r) \cap K = \emptyset$. Since $K^l$ is open, we have $int(K) \cap cl(K^l(r)) = \emptyset$.
Since $l$ was arbitrary, $cl(\bigcup_{l\neq k} K^l(r))\cap int(K)=\bigcup_{l\neq k} cl(K^l(r)) \cap int(K)= \emptyset$ since there are finitely many categories. Since the categories are dense, $int(K) \subset cl(K^k(r))$, and it follows that $int(K) \subset int( cl(K^k(r)))=K^k(r)$ since $K^k(r)$ is a regular open set.
Conclude $int(K)=K^k(r)$, and that we can identify $K^k(r)$ for any $k$ and $r$.

\subsection{Proofs and extra material from Section \ref{sec:revealing}}
\label{sec: prop: revealing regions}
\begin{proof}[Proof of Proposition \ref{prop: revealing regions}]
 Suppose that $\{\succsim_r\}_{r\in X}$ has a BGS representation. From Proposition \ref{prop: general category ID}, we need to show that $LIS^1(x) \neq LIS^2(x)$ for all $x$. 
Fix any $x$ and take $r^1_x=(x_1/2,x_2)$ and $r^2_x=(x_1,x_2/2)$. By S4, $x \in K^i(r^i_x)$ for $i=1,2$.
Since $K^1(r^1_x)\cap K^2(r^2_x) $ is open and contains $x$, there exists a neighborhood $O_x$ of $x$ contained in it.
For $y \in O_x$, $y \sim_{r^i_x} x$ if and only if $w^i_1/w^i_2 [u_1(y_1)-u_1(x_1)]=u_2(x_2)-u_2(y_2)$.
Since $u_1$ and $u_2$ are strictly increasing and $w^1_1/w^1_2>w^2_1/w^2_2$, $LIS^1(x) \neq LIS^2(x)$.
Hence, Proposition \ref{prop: general category ID} is applicable and the categories are uniquely identified. Moreover,\[
K^i(r)=\hat{K}^i(r)=int \left\{x \in X: \exists \epsilon>0\ s.t.\ \forall y \in B_{\epsilon}(x),\ y \sim_r x \iff y \sim_{r^i_x} x \right\}
\]
using the above arguments and taking $\epsilon$ so that $B_\epsilon(x)\subset O_x$.
\end{proof}

\begin{proposition}
Let $\{\succsim_r\}_{r\in X}$ be  a CTM where each category is connected. For any reference $r$ such that 
 $U^k(x|r) \neq U^l(x|r)$ for every $x\in X$ and categories $l,k$ with $l \neq k$, the category function is uniquely identified for $r$.
\end{proposition}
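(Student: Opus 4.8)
The plan is to exploit the discontinuities that the hypothesis forces at every boundary, tracing them out to recover the partition $\{K^1(r),\dots,K^m(r)\}$ from $\succsim_r$ alone. First I would establish that, at the fixed $r$, the categories are totally ordered by their valuations. Since each $U^k(\cdot\mid r)$ is continuous on $X$, the difference $U^k(\cdot\mid r)-U^l(\cdot\mid r)$ is a continuous real function on the connected set $X=\mathbb{R}^n_{++}$; as the hypothesis says it never vanishes, it has constant sign, so either $U^k(x\mid r)>U^l(x\mid r)$ for all $x$ or the reverse for all $x$. The relation ``$U^k$ dominates $U^l$ everywhere'' is thus complete, antisymmetric, and transitive on the finitely many categories, so after relabeling I may assume $U^1(\cdot\mid r)>\dots>U^m(\cdot\mid r)$ pointwise. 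Write $V_r$ for the representation equal to $U^k(\cdot\mid r)$ on $K^k(r)$.

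Next I would recover the categories as the connected components of the \emph{regular set}
\[
R=\{x\in X:\succsim_r \text{ admits a continuous utility representation on some neighborhood of }x\}.
\]
Because $R$ is defined purely from $\succsim_r$, any two CTM representations of the same preference yield the same $R$. One inclusion is immediate: on the open set $K^k(r)$ we have $V_r=U^k(\cdot\mid r)$, which is continuous, so $\bigcup_k K^k(r)\subseteq R$. For the converse I would show no boundary point lies in $R$, giving $R=\bigsqcup_k K^k(r)$. Since each $K^k(r)$ is open and connected (by hypothesis) and its complement within $R$ is the open set $\bigcup_{l\ne k}K^l(r)$, each $K^k(r)$ is clopen and connected in $R$, hence is exactly a connected component of $R$. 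The components of the $\succsim_r$-determined set $R$ are then precisely the categories, which gives identification. This is the step where connectedness of each category is indispensable: without it the components would only refine the partition.

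The hard part will be the converse inclusion, i.e.\ showing every $\bar x\in cl(K^k(r))\cap cl(K^l(r))$ is a genuine discontinuity of the \emph{order} $\succsim_r$, not merely of the chosen representation $V_r$. Taking $k,l$ consecutive among the categories whose closures contain $\bar x$, so that $U^k(\bar x\mid r)>U^l(\bar x\mid r)$ with no intermediate category clustering there, I would produce in every neighborhood $O$ of $\bar x$ two points $w,w'$ with distinct values lying strictly inside the gap $\big(U^l(\bar x\mid r),U^k(\bar x\mid r)\big)$, together with sequences $a_n\to\bar x$ in $K^k(r)$ and $b_n\to\bar x$ in $K^l(r)$ with $a_n\succ_r w\succ_r b_n$ eventually. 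A continuous representation $f$ on $O$ would then force $f(w)=f(\bar x)=f(w')$ by squeezing $f(a_n),f(b_n)\to f(\bar x)$, whence $w\sim_r w'$, contradicting $V_r(w)\ne V_r(w')$. The delicate point—and the real obstacle—is guaranteeing such gap-valued points exist: this is where the continuity and strict monotonicity of the category utilities must be used to push values off $\bar x$ into the gap along $K^k$ or $K^l$, and where one must rule out the degenerate configuration in which a boundary coincides with a level set of the utilities so that the jump is ``absorbed'' and the order stays continuous. That absorbing case is exactly the behavior underlying the impossibility example, so isolating why it cannot arise here (and thereby pinning down what the hypothesis must be doing) is the crux of the proof.
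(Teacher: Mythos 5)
Your architecture is the same as the paper's, not a different route. The paper, for each $x^*$, takes the maximal connected open set $O\ni x^*$ on which $\succsim_r$ has locally closed contour sets (on balls this is equivalent, via Debreu's theorem, to admitting a local continuous representation, i.e.\ your set $R$), shows that points on the boundary between two categories cannot belong to such a set by exactly your squeeze argument (sequences in $K^i(r)$ and $K^j(r)$ converging to $\bar x$, plus nearby ``witness'' points whose values lie strictly inside the gap $\bigl(U^j(\bar x|r),U^i(\bar x|r)\bigr)$), and then concludes $O=K^k(r)$ by the same clopen-plus-connectedness reasoning you apply to the components of $R$. Your preliminary observation that the hypothesis totally orders the categories by pointwise dominance is correct and is not in the paper, but it does no work toward closing the argument.

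The gap you flag is therefore the real issue, and you should know that the paper does not close it either: at the corresponding step it simply \emph{asserts} that for every $\epsilon>0$ there exists $x'\in K^i(r)\cap B_\epsilon(\bar x)$ with $U^i(\bar x|r)>U^i(x'|r)>U^j(\bar x|r)$ (and symmetrically for $K^j$), with no justification. Moreover, the ``absorbing'' configuration you worry about genuinely occurs under the stated hypotheses, so this assertion cannot be proved: let every $\succsim_r$ be represented globally by $x_1+x_2$, and for a threshold $c>0$ set $K^1(r)=\{x:x_1+x_2<c\}$, $K^2(r)=\{x:x_1+x_2>c\}$ (constant in $r$), $U^1(x|r)=x_1+x_2$, $U^2(x|r)=2(x_1+x_2)$. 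Both categories are connected, $U^1(x|r)\neq U^2(x|r)$ for every $x\in\mathbb{R}^2_{++}$, and the CTM conditions hold: within each category the ranking is by $x_1+x_2$, and across categories every $K^2$ point beats every $K^1$ point under both the true preference and the representation, because values on $K^1$ lie in $(0,c)$ and values on $K^2$ lie in $(2c,\infty)$. No alternative's value falls in the gap $(c,2c)$, so the order is continuous across the boundary, your $R$ is all of $X$, and---since \emph{every} threshold $c$ yields a valid representation of the same family---the category function is not identified. Unlike the paper's own impossibility example (which violates the hypothesis by having $U^1=U^2$ on a region), this example satisfies the hypothesis, so the proposition as stated needs a stronger assumption: the jump in the representation must be witnessed by nearby alternatives, as under the local-indifference-set condition of Proposition~\ref{prop: general category ID}, or under an explicit requirement that category boundaries are not locally level curves of the adjacent category utilities. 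In short, your inability to rule out the absorbing case is not a defect of your approach relative to the paper's; it is precisely the point at which the paper's own proof, and indeed the claim itself, is incomplete.
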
 
\begin{proof}
For any $x^*$, let $O$ be the $\subseteq$-largest connected, open set connected so that for every $x\in O$ there is an $\epsilon>0$ so that $\{y \in B_\epsilon(x): y \succsim_r z\}$ and $\{y \in B_\epsilon(x): z \succsim_r y\}$ are closed for each $z \in B_\epsilon(x)$.
A maximal set with this property exists by Zorn's Lemma, and is unique since any two such maximal sets contain $x^*$, so their union is also a maximal set. 
If $x^* \in K^k(r)$, we claim that $O=K^k(r)$. 

First, note  $O \subset \bigcup_{l=1}^n K^l(r)$, where $n$ is the number of categories. For $x \in X$ so that $x \notin \bigcup_{l=1}^n K^l(r)$, there are categories $i,j$ so that $x\in bd K^i(r)$ and $x\in bd K^j(r)$.
WLOG, $U^i(x|r)>U^j(x|r)$.
For any $\epsilon>0$, there exists $x' \in K^i(r) \cap B_\epsilon(x)$ with $U^i(x|r)>U^i(x'|r)>U^j(x|r)$, $x''\in K^j(r) \cap B_\epsilon(x)$ with $U^i(x|r)>U^j(x''|r)>U^j(x|r)$, and sequences $x'_n \in K^i(r)\cap B_\epsilon(x)$ and $x''_n \in K^j(r)\cap B_\epsilon(x)$ so that $x'_n \rightarrow x$ and $x''_n \rightarrow x$. Since $x' \succ_r x''$, either $x' \succ_r x$ or $x \succ_r x''$.
In the former case, $x'_n \succsim_r x'$ for all $n$ large enough but $x' \succ_r x$; in the latter, $x'' \succsim_r x''_n$ for all $n$ large enough but $x \succ_r x''$.
In either case we obtain a contradiction. 

Now, $K^k(r)$ and $K^{-k}(r)=\bigcup_{l\neq k} K^l(r)$ are disjoint, open sets whose union contains $O$. Hence either $K^k(r) \cap O=\emptyset $ or $K^{-k}(r) \cap O= \emptyset$. The former is impossible, so $K^k(r) \supseteq O$. But clearly $K^k(r)$  is a connected, open set satisfying the condition of $O$, so $K^k(r) \subseteq O$. Conclude $K^k(r) =  O$.
\end{proof}

Consider the following category utility functions. $U^1(x_1,x_2|r)=x_1+x_2$ and \[
U^2(x_1,x_2|r)=\left\{ \begin{array}{cc}
x_1+x_2 & \text{ if }x_1+x_2\leq 1 \\
2(x_1+x_2) -1 &\text{ if }x_1+x_2> 1 \\
\end{array}\right.
\]
For any category function, the boundary between the categories can be identified on the set $\{x:x_1+x_2>1\}$ but not on the other points. Intuitively, the DM evaluates objects in this set the same regardless of their categorization and so we cannot hope to identify their category.
When the category does not affect the DM's choice, the revealed preference approach cannot distinguish the two and non-choice data must be used.

\subsection{Proof of Proposition \ref{thm:BGS preference}}
	$K$ satisfying S0-S4 implies that $E^1=E^2=\mathbb{R}^n_{++}$, so the structure assumption is satisfied.
	Moreover, Theorem \ref{thm: BGS regions} gives that the categories are generated by a salience function.
	The axioms allow us to apply Theorems \ref{thm: RI} and \ref{result: strong CTM} to get a Strong CTM representation of the family with reweighted utility indexes.
	Hence, $$U^k(x)= w^k_1 u_1(x_1)+w^k_2 u_2(x_2)+\beta^k$$ for each $x \in X$. 
	 
	There is no loss in normalizing so that $\beta^1=0$.
	Pick $x \in X$ with $x_1>x_2$, and by S4 observe that $x \in K^1(r)$ for $r=(x_1,x_2/2)$ and $x \in K^2(r')$ for $r'=(x_1/2,x_2)$.
	Since $K^1(r)$ and $K^2(r')$ are open, there exists $\epsilon>0$ so that $B_\epsilon(x) \subset K^1(r) \bigcap K^2(r)$. Since $U^1$ is continuous and increasing, there is $y \in B_\epsilon(x) $ with $y_1<x_1$ so that $U^1(y)=U^1(x)$, i.e. $y \sim_r x$; this $y$ necessarily has $y_2>x_2$ by CM. 
	Then, SDO implies $y \succ_{r'} x$, i.e. $U^2(y)>U^2(x)$, which requires $w_2^1/w_2^2<w^1_1>w^2_1$.
	We can incorporate $\beta^2$ into $u_2$ by replacing it with $u_2+\beta^2/(w^2_2-w^1_2)$ or into $u_1$ by replacing $\beta^2$ into $u_1$ by replacing it with $u_1+\beta^2/(w^2_1-w^1_1)$. 
	At least one does not involve dividing by zero, as otherwise $w^2_i=w^1_i$ for $i=1,2$.
	\hfill\qedsymbol
\subsection{TK} 
\label{pf: thm:TKinRPM}
This subsection states and proves a characterization theorem for TK.
	\begin{proposition}
	\label{thm:TKinRPM}
	A family of preferences $\{\succsim_r\}_{r\in X}$ has a TK representation if and only if it is an Affine CTM with a gain-loss regional function that satisfies Reference Interlocking, Monotonicity, Cancellation, and continuity of each $\succsim_r$.
\end{proposition}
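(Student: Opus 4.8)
The plan is to prove the two directions separately: necessity (``only if'') is a routine verification, while sufficiency (``if'') is the substantive part. Throughout I write $\mathcal{K}^{GL}$ for the gain–loss category function.

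For necessity, suppose $\{\succsim_r\}_{r\in X}$ has a TK representation. The paper already records that TK is an Affine CTM under $\mathcal{K}^{GL}$: within category $k$ one has $V_{TK}(x|r)=U^k(x)-(\text{const in }r)$, where $U^k(x)=w^k_1u_1(x_1)+w^k_2u_2(x_2)$ with $w^k_i\in\{1,\lambda_i\}$ according to whether dimension $i$ is a gain or a loss in $k$. Since this is a common-index reweighting and $E^k=\mathbb{R}^2_{++}$ is connected, the ``only if'' direction of Theorem \ref{thm: RI} (reweighting $\Rightarrow$ Reference Interlocking) delivers Reference Interlocking. For the classical axioms I would observe that $V_{TK}(x|r)=f_1(x_1|r)+f_2(x_2|r)$, where $f_i(t|r)$ equals $u_i(t)-u_i(r_i)$ for $t\ge r_i$ and $\lambda_i\bigl(u_i(t)-u_i(r_i)\bigr)$ for $t\le r_i$; each branch agrees at $t=r_i$, so $f_i(\cdot|r)$ is a single continuous, strictly increasing function. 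Hence $V_{TK}(\cdot|r)$ is globally additively separable, continuous, and strictly increasing in each coordinate, which immediately yields Monotonicity, Cancellation (a necessary condition for an additive representation), and continuity of $\succsim_r$.

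For sufficiency, suppose the family is an Affine CTM under $\mathcal{K}^{GL}$ satisfying Reference Interlocking, Monotonicity, Cancellation, and continuity. Since Monotonicity forces the CTM to be increasing and $E^k=\mathbb{R}^2_{++}$ is connected, I would first apply Theorem \ref{thm: RI} to obtain a common utility index $u_i$ and weights $w^k_i>0$, both independent of $r$, with $U^k(x)=w^k_1u_1(x_1)+w^k_2u_2(x_2)$ representing $\succsim_r$ on each $K^k(r)$. Second, fixing $r$, I would invoke the classical additive conjoint measurement theorem (Debreu; Krantz et al.): on the product $\mathbb{R}^2_{++}$, a continuous weak order with both coordinates essential (Monotonicity) that satisfies Cancellation admits a continuous additive representation $V(\cdot|r)=\phi_1(\cdot|r)+\phi_2(\cdot|r)$.

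The crux is reconciling the global representation with the four within-category ones. Each gain–loss category is a product of intervals, e.g. $K^1(r)=(r_1,\infty)\times(r_2,\infty)$, so on $K^1(r)$ both $\phi_1+\phi_2$ and $w^1_1u_1+w^1_2u_2$ are additive representations of the same restricted preference; uniqueness of additive decompositions forces, for constants $s^1(r)>0,\,t^1_i(r)$,
\[
\phi_1(x_1|r)=s^1(r)\,w^1_1u_1(x_1)+t^1_1(r)\ \ (x_1>r_1),\qquad \phi_2(x_2|r)=s^1(r)\,w^1_2u_2(x_2)+t^1_2(r)\ \ (x_2>r_2).
\]
Repeating this on $K^2(r)=(0,r_1)\times(r_2,\infty)$ expresses the \emph{same} function $\phi_2(\cdot|r)$ on $x_2>r_2$ as $s^2(r)w^2_2u_2(x_2)+\cdots$; since $u_2$ is non-constant, matching coefficients gives $s^1(r)w^1_2=s^2(r)w^2_2$. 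Running this across all four shared boundaries yields $s^1w^1_1=s^3w^3_1$, $s^1w^1_2=s^2w^2_2$, $s^2w^2_1=s^4w^4_1$, $s^3w^3_2=s^4w^4_2$. The loss/gain slope ratio in dimension $1$ is then $\tfrac{s^2(r)w^2_1}{s^1(r)w^1_1}=\tfrac{w^1_2w^2_1}{w^2_2w^1_1}=:\lambda_1$, independent of $r$, and the two loss regions agree since $\tfrac{s^4w^4_1}{s^3w^3_1}=\lambda_1$; define $\lambda_2$ symmetrically. Finally, rescaling the additive representation by the common factor $1/s^1(r)$ (which preserves $\succsim_r$) and setting the reference-independent indices $\bar u_i=w^1_iu_i$, each coordinate contribution becomes, using continuity of $\phi_i(\cdot|r)$ at $x_i=r_i$ to absorb the kink, $\bar u_i(x_i)-\bar u_i(r_i)$ on the gain side and $\lambda_i\bigl(\bar u_i(x_i)-\bar u_i(r_i)\bigr)$ on the loss side, up to an $r$-dependent constant; summing over $i$ gives $(1/s^1(r))V(x|r)=[\text{const in }r]+V_{TK}(x|r)$ with $\bar u_i,\lambda_i$ not depending on $r$. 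Thus $V_{TK}$ represents $\succsim_r$ for every $r$, and since $\lambda_i>0$ and $\bar u_i$ strictly increases this is a TK representation. I expect the main obstacle to be exactly this bookkeeping of the reference-dependent scalars $s^k(r)$: the four matching conditions must be combined so that the ratios $\lambda_i$ come out reference-independent and identical across both loss regions, and the additive constants must be shown to collapse into an $x$-independent term that leaves $\succsim_r$ unchanged.
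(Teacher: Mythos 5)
Your proposal is correct, and it reaches the result by a genuinely different route than the paper. The paper's proof is a reduction to the representation theorem of \cite{TK91}: it verifies that an Affine CTM with gain--loss categories satisfies TK's \emph{sign-dependence} axiom (because $U^k(\cdot|r)=\alpha U^k(\cdot|s)+\beta$ with $\alpha>0$, rankings between alternatives lying in the same quadrant under both references cannot reverse) and that the paper's Reference Interlocking axiom implies TK's own reference-interlocking condition; TK's theorem --- which assumes monotonicity, continuity, cancellation, sign-dependence, and reference interlocking --- then supplies the representation, and necessity is likewise delegated to that theorem. You instead build the representation by hand: Theorem \ref{thm: RI} yields common indices $u_i$ and weights $w^k_i$ (Monotonicity makes the CTM increasing, and $E^k=\mathbb{R}^2_{++}$ is connected, so the theorem applies); Debreu-type additive conjoint measurement applied to each $\succsim_r$, using the unrestricted Cancellation, Monotonicity, and continuity, gives a global additive representation $\phi_1(\cdot|r)+\phi_2(\cdot|r)$; and uniqueness of continuous additive representations on each quadrant lets you match $\phi_i(\cdot|r)$ against the category utilities, producing exactly the four interlocking conditions $s^1w^1_2=s^2w^2_2$, $s^1w^1_1=s^3w^3_1$, $s^2w^2_1=s^4w^4_1$, $s^3w^3_2=s^4w^4_2$, from which $\lambda_1=\frac{w^1_2w^2_1}{w^2_2w^1_1}$ and $\lambda_2$ come out reference-independent and consistent across the two loss quadrants, with the kink at $x_i=r_i$ absorbed by continuity of $\phi_i$. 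As for what each approach buys: the paper's argument is a few lines but uses TK's theorem as a black box and turns on a careful translation between the two axiom systems; yours is longer but self-contained modulo Debreu and Theorem \ref{thm: RI}, makes the identification of the loss-aversion coefficients from the CTM weights explicit, and --- notably --- never invokes the affine structure of the Affine CTM (cross-category comparisons are handled entirely by the global additive representation), so it in fact establishes the marginally stronger claim that an increasing CTM under the gain--loss category function satisfying the listed axioms already admits a TK representation.
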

\citet[p. 1053]{TK91} provide an alternative axiomatic characterization of the model, and our result makes heavy use of their theorem. 
\begin{proof}
	Necessity follows from the discussion above and TK's theorem. To show sufficiency, we rely on TK's theorem, which states that any monotone, continuous family of preference relations that satisfies cancellation, sign-dependence and reference interlocking has a TK representation.
	Given our assumptions, we need to show that $\{\succsim_r\}$ satisfies sign-dependence and reference interlocking.
	
	TK say that $\{\succsim_r\}$ satisfies sign-dependence if ``for any $x,y,r,s \in X$, $x\succsim_r y \iff x \succsim_s y$ whenever $x$ and $y$ belong to the same quadrant with respect to $r$ and with respect to $s$, and  $r$ and $s$ belong to the same quadrant with respect to $x$ and with respect to $y$.''
	This happens if and only if $x \in K_k(r) \bigcap K^k(s)$ and $y \in K^k(r) \bigcap K^k(s)$ for some $k \in \{1,2,3,4\}$.
	Then, sign-dependence is exactly an implication of Affine CTM, since $U^k(\cdot|r)=\alpha U^k(\cdot|s)+\beta$ for $\alpha>0$.
	
	TK say that $\{\succsim_r\}$ satisfies reference interlocking if ``for any $w,w',x,x',y,y',z,z'$ that belong to the same quadrant with respect to $r$ as well as with respect to $s$, $w_1=w'_1$, $x_1=x'_1,y_1=y'_1,z_1=z'_1$ and $x_2=z_2,w_2=y_2,x'_2=z'_2,w'_2=y'_2$, if $w\sim_r x$, $y \sim_r z$, and $w' \sim_s x'$ then $y' \sim_s z'$.''
	The assumptions on quadrants imply that $w,w',x,x',y,y',z,z' \in K^k(r) \bigcap K^l(s)$ for some $k,l \in \{1,2,3,4\}$. 
	Since $y',z' \in K^l(s)$, the conclusion follows immediately from RI.
\end{proof}

\subsection{Example \ref{ex: regions BGS}}
\begin{example}
	\label{ex: regions BGS}  
	 The  categories plotted in Figure \ref{fig:categories} are described formally below. They all satisfy S0-S3, but only a subset of the other properties. 
	\begin{enumerate}
		\item The category function\[
		K^1(r)=\{x:s^1(x_1,r_1)>s^1(x_2,r_2)\}\text{ and }K^2(r)=\{x:s^1(x_1,r_1)<s^1(x_2,r_2)\}
		\] 
		where $s^1(x,r)=\frac{\max \{x,r\}^2}{\min \{x,r\}}$  violates S4-S6. Note $s^1$ is not a salience function since it is not grounded: $s(a,a)=a$ for $a >0$. Then $(a,b+\epsilon),(a,b) \in K^1(a,b)$ for all $a > b$ and small enough $\epsilon>0$, contradicting S4 and S6, respectively. Also note $s^1(a,a)=s^1(\sqrt{a},1)$ for $a >0$. Hence, $(a,\sqrt{a}) \notin K^1(a,1)$ but $(a+\epsilon,\sqrt{a}) \in K^1(a+\epsilon,1)$ for every $\epsilon>0$, violating S5.
		\item The salience function  $s^2(x,r)= |x^2-r^2|$ generates regions that satisfy S0-S4 but violate S5 and S6. 
		Observe that $(2,\sqrt{5}) \notin K^1(1,\sqrt{2})$ since $s^2(2,1)=\sigma(\sqrt{5},\sqrt{2})=3$, but $(2+\epsilon,\sqrt{5}) \in K^1(1+\epsilon,\sqrt{2})$ for any $\epsilon>0$ since $s^2(2+\epsilon,1+\epsilon)=3+2\epsilon > 3$, contradicting S5. It is routine to verify S4 by differentiating. Also, $x=(2,2)$   and $r=(4,1)$ have $x_1 x_2=r_1 r_2$, but $s^2(2,4) > s^2(2,1)=$, so $x \in K^1(r)$, contradicting S6.
		\item The salience  function $s^3(x,r)= |\sqrt{x}-\sqrt{r}| $ generates regions that satisfy S0-S5 but violate S6. Also, $x=(2,2)$   and $r=(4,1)$ have $x_1 x_2=r_1 r_2$, but $s^3(2,4)> s^3(2,1)$, so $x \in K^1(r)$, contradicting S6.  Differentiating establishes S4 and S5.
		\item The  salience function  $s^4(x,r)= \frac{ \max \{x,r\}}{\min\{x,r\}}$ generates regions that satisfy S0-S6.
	\end{enumerate}
\end{example}

\subsection{Proof of Theorem \ref{thm: BGS regions}}
\label{pf: thm: BGS regions}
We first prove the following lemma. 
\begin{lemma}
	If $\mathcal{K}$ is a category function, then for any $\epsilon>0$ and $x$ so that $B_\epsilon(x) \subset K^i(r)$, there exists $\delta>0$ so that $B_{\epsilon/2}(x) \subset K^i(r')$ for all $r' \in B_\delta(r)$.
	\label{lem: usc}
\end{lemma}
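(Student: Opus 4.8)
The plan is to reduce the statement to the upper hemicontinuity of the \emph{other} categories, rather than working directly with $K^i$ (lower hemicontinuity of $K^i$ would only guarantee that $K^i(r')$ \emph{meets} a neighborhood, not that it \emph{contains} the points of $B_{\epsilon/2}(x)$). The starting observation is that $cl(B_{\epsilon/2}(x))$ is compact and, since $cl(B_{\epsilon/2}(x))\subset B_\epsilon(x)\subset K^i(r)$, by disjointness of categories (property (3)) each $K^l(r)$ with $l\neq i$ is disjoint from $B_\epsilon(x)$ and hence contained in the open set $V:=X\setminus cl(B_{\epsilon/2}(x))$. Because there are only finitely many categories, I can apply upper hemicontinuity of each $K^l$ ($l\neq i$) at $r$: for the open set $V\supseteq K^l(r)$ there is a neighborhood $U_l$ of $r$ with $K^l(r')\subseteq V$ for all $r'\in U_l$. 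Taking $U=\bigcap_{l\neq i}U_l$ and choosing $\delta>0$ with $B_\delta(r)\subseteq U$ then yields $K^l(r')\cap cl(B_{\epsilon/2}(x))=\emptyset$ for every $l\neq i$ and every $r'\in B_\delta(r)$.

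The remaining, and only delicate, step is to pass from ``$B_{\epsilon/2}(x)$ meets no $K^l(r')$ with $l\neq i$'' to ``$B_{\epsilon/2}(x)\subseteq K^i(r')$,'' which is not automatic because a point of $B_{\epsilon/2}(x)$ could a priori be \emph{uncategorized}. To close this gap I would first establish the pointwise identity
\[
X\setminus K^k(s)=cl\Big(\bigcup_{l\neq k}K^l(s)\Big)\qquad\text{for every category }k\text{ and every }s\in X .
\]
The inclusion $\supseteq$ is immediate since $K^k(s)$ is open and disjoint from each $K^l(s)$, hence disjoint from their closure. For $\subseteq$, suppose $z\notin cl(\bigcup_{l\neq k}K^l(s))$; then some open neighborhood $N$ of $z$ misses every $K^l(s)$ with $l\neq k$, so by density of $\bigcup_l K^l(s)$ (property (2)) the categorized points of $N$ all lie in $K^k(s)$ and $N\subseteq cl(K^k(s))$, whence $z\in int(cl(K^k(s)))=K^k(s)$ by regular-openness (property (1)). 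This is the one place where properties (1) and (2) are genuinely used, and I expect it to be the crux of the argument.

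With this identity in hand the conclusion is routine: for $r'\in B_\delta(r)$ the union $\bigcup_{l\neq i}K^l(r')$ is disjoint from the open set $B_{\epsilon/2}(x)$, so its closure is too, and therefore
\[
B_{\epsilon/2}(x)\subseteq X\setminus cl\Big(\bigcup_{l\neq i}K^l(r')\Big)=K^i(r'),
\]
as desired. Beyond the identity above I anticipate no real obstacle: the finiteness of the number of categories is what lets me intersect the neighborhoods $U_l$, and the gap between $\epsilon/2$ and $\epsilon$ is exactly what supplies the compact ``buffer'' $cl(B_{\epsilon/2}(x))\subset B_\epsilon(x)$ needed to invoke upper hemicontinuity with an open superset $V$.
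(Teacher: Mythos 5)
Your proposal is correct and follows essentially the same route as the paper's proof: isolate $B_{\epsilon/2}(x)$ from the finitely many other categories at all nearby reference points using continuity of the correspondences, then use density of $\bigcup_l K^l(r')$ plus regular openness of $K^i(r')$ to conclude $B_{\epsilon/2}(x)\subseteq K^i(r')$. The only cosmetic difference is that you invoke upper hemicontinuity via its open-set definition (with $V=X\setminus cl(B_{\epsilon/2}(x))$ as the separating open set), whereas the paper phrases the same step in terms of a Hausdorff-type distance between $K^j(r')$ and $B_{\epsilon/2}(x)$; your formulation is, if anything, the more rigorous rendering of that step.
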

\begin{proof} 
	Let $\mathcal{K}$ is a category function,  $\epsilon>0$ and $x$ be given so that $B_\epsilon(x) \subset K^i(r)$. Set $B=B_{\epsilon/2}(x)$. For each $j \neq i$, $d(K^j(r),B)>\epsilon/2$, where $d(\cdot)$ is the Hausdorff metric,%
	\footnote{In this case it is actually a pseudo metric.}
	and continuity of $K^j$ implies that there exists a neighborhood $O_j$ of $r$ so that $d(K^j(r'),B)>\epsilon/4$ for all $r' \in O_j$.
	Let $O = \bigcap_{j \neq i} O_j$.
	Then, for any $r' \in O$, $B \cap cl( \bigcup_{j\neq i} K^j(r'))= \emptyset$.
	Since $cl(\bigcup_i K^i(r'))=X$, $B \subset cl(K^i(r'))$.
	But since $B$ is open, $B \subset int(cl(K^i(r')))=K^i(r')$ since $K^i(r')$ is regular open. \end{proof}

	For sufficiency, define a binary relation $S$ by $(a,b) S (c,d)$ if and only if $(a,c) \notin K^2(b,d)$. $S$ is clearly complete. 
	It is also transitive by S3.
	We show it has an open contour sets. Let $S^*$ be the strict part of $S$. If $(a,b) S^* (c,d)$, then  $x \in K^1(r)$ for $x=(a,c)$ and $r=(b,d)$.
	$K^1(r)$ is open by S0 so there exists $\epsilon>0$ so that $B_\epsilon(x) \subset K^1(r)$. By Lemma \ref{lem: usc}, 
	$x \in K^1(r')$ for all $r'$ in a neighborhood $O'$ of $r$.
	Conclude $(a',b') S^* (c',d')$ for all $(a',b'),(c',d')\in B_\epsilon(x) \times O'$.
	Standard results then show existence of a continuous function $\sigma$ so that $(a,b) S (c,d)$ if and only if $\sigma(a,b)\geq \sigma(c,d)$.
	$\sigma$ is symmetric by S2 and increasing in contrast by S1 and S4.
	Hence $x \in K^1(y)$ if and only if $\sigma(x_1,y_1)> \sigma(x_2,y_2)$, and by S2, $x \in K^2(y)$ if and only if $y' \in K^1(x')$ where $x',y'$ are the reflections of $x,y$. Hence, $x \in K^2(y)$ if and only if $\sigma(x_1,y_1) <  \sigma(x_2,y_2)$.
	
	Pick any $x,y>0$. We claim that $(x,y) \notin K^1(x,y) \cup K^2(x,y)$, and hence $\sigma(x,x)=\sigma(y,y)$. Observe that $(x+\epsilon,y) \in K^1(x,y)$  and $(x,y+\epsilon) \in K^2(x,y)$ by S4 for $\epsilon \neq 0$.  By S0, $K^1(x,y) , K^2(x,y)$ are open, so $(x,y) \notin K^1(x,y)$ and $(x,y)\notin  K^2(x,y)$. Conclude $\sigma$ is grounded.
	
	Pick any $a,b$. By S3, $\sigma(a,b)=\sigma(b,a)$ so $(a,b) \notin K^1(b,a)$ for any $a,b$. By S5, $(a+\epsilon,b) \notin K^1(b+\epsilon,a)$. 
	Then, $(b,a)S(a+\epsilon,b+\epsilon)$ so $\sigma(a,b)=\sigma(b,a)\geq \sigma(a+\epsilon,b+\epsilon)$. 
	Since $a,b$ were arbitrary, diminishing sensitivity holds.
	
	For necessity, verifying that S0-S5 hold are trivial, except that each $K^i(r)$ is regular open. To see this, pick $r$ and  $x \in int(cl(K^1(r)))$ (symmetric arguments hold for $K^2$). Suppose $x \gg r$ (the other cases follow by changing the signs). Then, there are $\epsilon_1, \epsilon_2$  such that $(x_1 - r_1)/2 > \epsilon_1 > 0,  \epsilon_2>0$ so that $\bar{x}=(x_1-\epsilon_1,x_2+\epsilon_2) \in cl(K^1(r))$. 
Since there exists $x' \in K^1(r)$ that is arbitrarily close to $\bar{x}$, we can find $x' \in K^1(r)$ so that $|x'_1-x_1|<\epsilon_1/2$ and $|x'_2-x_2|<\epsilon_2/2$.  In particular, $r_1<x'_1<x_1$ and $r_2<x_2<x'_2$. Then, $\sigma(x_1,r_1)>\sigma(x'_1,r_1)$ and  $\sigma(x'_2,r_2)>\sigma(x_2,r_2)$ since $\sigma$ is increasing in contrast. Moreover, $\sigma(x'_1,r_1)>\sigma(x'_2,r_2)$ since $x' \in K^1(r)$.  These inequalities imply $\sigma(x_1,r_1)>\sigma(x_2,r_2)$, hence $x \in K^1(r)$. Since $x$ was arbitrary, $int(cl(K^1(r)))\subset K^1(r)$. Clearly, $K^1(r) \subset int(cl(K^1(r)))$.

	Now we show the following are equivalent:
	
	(i) The  functions $K^1$ and $K^2$ satisfy S0, S1, and S6,
	
	(ii) There exists a salience function $\sigma$ s.t. $x \in K^k(r) \iff \sigma(x_k,r_k)>\sigma(x_{-k},r_{-k})$

	That (ii) implies (i) follows from the first part, and that S6 is implied by symmetry and homogeneity of degree zero.
	Now, we show (i) implies (ii). Set $\sigma(a,b)=\max\{a / b, b/a \}$.
		Clearly $\sigma$ is a salience function, and we show that $\sigma$ generates $K^1$ and $K^2$.
	Fix $r\in X$   and set $A = \{x: \sigma(x_1,r_1) > \sigma(x_2,r_2)\}$.
	We show $A=K^1(r)$.
	
	Claim $A \bigcap K^2(r) =\emptyset$.
	If not, pick $x \in A \bigcap K^2(r)$.
	$x \in A$ implies either (a) $x_1 / r_1 > x_2 / r_2$ and $x_1 / r_1 > r_2/x_2$ or (b) $r_1 / x_1 > x_2 / r_2$ and $r_1 / x_1 > r_2/x_2$.
	If (a) and $x_2 \leq  r_2$, then \[ x_1 / r_1 >  r_2 / x_2 \geq x_2 / r_2 \text{ implies }  
	x_1 > r_1 r_2 / x_2 \geq r_1,\] 
	so there exists $\lambda \in [0,1)$ such that $(\lambda x_1 +(1-\lambda) r_1 ,x_2)= ( r_1 r_2 / x_2,x_2)=x^\prime$.
	If (a) and $x_2 >  r_2$, then \[
	x_1 > r_1 x_2 / r_2 > r_1,\]
	so there exists $\lambda \in (0,1)$ such that $(\lambda x_1 +(1-\lambda) r_1 ,x_2)= ( r_1 x_2 / r_2,x_2)=x^\prime$.
	By S1 and $x \in K^2(r)$, $x^\prime \in K^2(r)$.
	However, we have either $x^\prime_1 x^\prime_2 =r_1 r_2$ or $x^\prime_1 / x^\prime_2 =r_1 / r_2$ so $x^\prime \notin K^2(r)$ by S6, a contradiction.
	A similar contradiction obtains if (b) holds.
	
	Now, since $A \bigcap K^2(r)=\emptyset$ and $K^1(r) \bigcup K^2(r)$ is dense, $A \subset cl(K^1(r))$.
	By S0, $K^1(r)=int (cl(K^1(r))$.
	Since $A$ is an open set contained in $cl(K^1(r))$, $A \subseteq K^1(r)$.
	Similarly, for $B =\{x: \sigma(x_1,r_1) < \sigma(x_2,r_2)\}$, $B \subseteq K^2(r)$.
	But \[(A\bigcup B)^c=\{x:x_1 x_2 = r_1 r_2\text{ or } x_1 / x_2 = r_1 /r_2\},\]
	and by S6, $(A\bigcup B)^c \bigcap K^k(r) = \emptyset$ for $k=1,2$.
	Thus $A =K^1(r)$ and $B =K^2(r)$, completing the proof.
	
	Finally, fix any HOD salience function $s$. Observe $s(a,b)>s(c,d)$ if and only if  $s(a/b,1)>s(c/d,1)$ by homogeneity if and only if  $s(\max(a/b,b/a),1)>s(\max(c/d,d/c),1)$ by symmetry if and only if  $\max(a/b,b/a)>\max(c/d,d/c)$ by ordering. Thus if one salience function generates the regions, every other salience function does as well.
\hfill\qedsymbol

\section{Proofs and Extras from Section \ref{sec: Choice Correspondence}}

\subsection{Axioms for $c$} 
\label{sec: axioms for c}
This subsection formally states the adaptations of the axioms for reference dependent preferences $\{\succsim_r\}_{r\in X}$ in terms of the choice correspondence $c$.
Interpretation is identical to that of those axioms.

\begin{ax1}[Category Cancellation*]
	For all $x_1, y_1, z_1,x_2,y_2,z_2\in \mathbb{R}_{++}$ and category $k$: if  $(x_1,z_2) \in c(S^1)$, $(z_1,y_2) \in S^1$,
	$(z_1,x_2) \in c(S^2)$, $ (y_1,z_2)\in S^2$, 
	$(x_1,x_2), (y_1, y_2)\in S^3$ and $S^i \subset K^k(A(S^i))$ for $i \in \{1,2,3\}$,  then $(x_1,x_2) \in c(S^3)$ whenever $ (y_1, y_2)\in c(S^3)$.
\end{ax1}

\begin{ax1}[Category Monotonicity*]
	For any $x,y \in X$: if $x \geq y$ and $x \neq y$, then 
	$(y,k) \not \succsim^{R} (x,k)$ for any category $k$. 
	
\end{ax1}

\begin{ax1}[Category Continuity*]
	and any $\epsilon>0$  so that $E \bigcap S \setminus c(S) = \emptyset$  where $E\equiv \bigcup_{x \in c(S)} B_\epsilon(x)$ there exists $\delta>0$ so that if $S' \in \mathcal{X}$,  $d(A(S'),A(S))<\delta$, and for any $y' \in S'$, there is $y \in S$ so that $y' \in B_\delta(y)$, then $c(S') \subset E$ whenever $S' \bigcap E \neq \emptyset$.
\end{ax1}

Define $\succsim^{R,k}$ by $x\succsim^{R,k} y$ if and only if $(x,k) \succsim^R (y,k)$.
Using this relation, we can define $\oplus^k$ for each category as we did with preference relations.

\begin{ax1}[Affine Across Categories*]
	For any $S^1,S^2,S^3\in \mathcal{X}$, $x^i \in K^j(A(S^i))$, $y^i \in K^k(A(S^i))$ for $i=1,2,3$, and any $\alpha \in (0,1)$ so that 
	$(x^3,j) \succsim^{R} (\alpha x^1 \oplus^j (1-\alpha) x^2,j) $ and 
	$(\alpha y^1 \oplus^k (1-\alpha) y^2,k) \succsim^{R} (y^3,k)$: \\
	if $x^1 \in c(S^1)$ and $x^2 \in c(S^2)$, then  $y^3\notin c(S^3) $. 
\end{ax1}
\begin{ax1}[Salient Dimension Overvalued*]
	\label{ax:SDOc}
	For $x,y \in S \bigcap S'$ with $x_k>y_k$ and $y_{-k}>x_{-k}$, if  $x,y \in K^k (A(S)) $, $x,y \in R_{-k} (A(S'))$, and $y \in c(S)$, then $x \notin c(S')$.
\end{ax1}

\begin{ax1}[Reference Interlocking*]
	For any $a,b,a',b',x',y',x,y\in  X$ with
	$x_{-i}=a_{-i}$, $y_{-i}=b_{-i}$,  $x'_{-i}=a'_{-i}$, $y'_{-i}=b'_{-i}$, $x_i=x'_i$, $y_i=y'_i$, $a_i=a'_i$, $b_i=b'_i$:\\
	if $x \sim^{R*}_k y$, $a \succsim^{R*}_k b$, and $x' \sim^{R*}_j y'$, then  it does not hold that $b' \succ^{R*}_j a'$.
\end{ax1}

\subsection{ Proof of Theorem \ref{thm: strong CTM choice}}
\label{pf: thm: strong CTM choice}
\begin{lemma}
	\label{lem: revealed k preference}
	Assume that Revealed Structure holds, and that $A$ is a generalized average. If Category-SARP, Category Monotonicity*, Category Cancellation*, and Category Continuity* hold, then for any category $k$ there exists a Category utility $U^k$ so that for any $x,y \in E^{R,k}$, $$(x,k)\succsim^{R} (y,k)\iff U^k(x) \geq U^k(y).$$
\end{lemma}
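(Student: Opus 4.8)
The plan is to mirror the three-step argument used to prove Theorem \ref{thm: weak CTM}, replacing the reference-dependent relation $\succsim^{k*}$ with the revealed within-category relation $\succsim^{R,k}$ on $E^{R,k}$, and to let the generalized-average property supply the ``local'' comparisons that a fixed reference point supplied in that setting. First I would establish a local additive structure. Fix $x \in E^{R,k}$ and a menu $S$ with $\{x\}=c(S)$ and $x \in K^k(A(S))$. Since $K^k$ is a continuous, regular open set, a small ball $B_\epsilon(x)$ lies in $K^k(A(S))$. For $y \in B_\epsilon(x)$, part (ii) of the generalized-average definition lets me enlarge $S$ by adjoining $y$ together with padding alternatives clustered near the existing points of $S$, producing $S'$ with $A(S')$ arbitrarily close to $A(S)$; continuity of $K^k$ keeps both $x$ and $y$ in category $k$, so whichever is chosen from $S'$ yields a length-one revealed comparison, giving $x \succsim^{R,k} y$ or $y \succsim^{R,k} x$. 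Running this over the neighborhood and invoking Category Cancellation*, Category Monotonicity*, and Category Continuity* yields a complete, transitive, monotone, continuous, additively separable local representation of $\succsim^{R,k}$ on $B_\epsilon(x)$, exactly as the unstarred axioms did in the proof of Theorem \ref{thm: weak CTM}.

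Second I would show $\succsim^{R,k}$ is complete on $E^{R,k}$, adapting Lemma \ref{lem: trans closure complete}. Since Revealed Structure makes $E^{R,k}$ connected, any $x,y \in E^{R,k}$ are joined by a continuous path which, using connectedness of the coordinate sections and of the revealed indifference sets $\{z \in E^{R,k}:(x,k)\sim^R (z,k)\}$ guaranteed by Revealed Structure, can be taken to cross each indifference surface at most once. Covering the compact image of this path by finitely many of the local neighborhoods from the first step, I would chain the local comparisons; Category-SARP ensures the chain does not reverse, playing the role that Weak Reference Irrelevance played originally, so the endpoints are comparable. Third, I would show that $\succ^{R,k}$ has open upper and lower contour sets on $E^{R,k}$, adapting Lemma \ref{lem: cont trans closure}: starting from a strict revealed chain for $x \succ^{R,k} y$, I would perturb it inward via Category Monotonicity* and Category Continuity* to show the strict comparison persists for all $x'$ near $x$.

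With completeness, continuity, monotonicity, and a local additive representation in hand, and with the topological hypotheses of Chateauneuf--Wakker (connectedness of $E^{R,k}$, of its coordinate sections, and of its revealed indifference sets) supplied directly by Revealed Structure, I would apply CW Theorem 2.2 to aggregate the local representations into a single additive $U^k$ on $E^{R,k}$ with $(x,k)\succsim^R (y,k) \iff U^k(x)\geq U^k(y)$. The main obstacle is the first step: unlike the exogenous-reference setting, comparing two candidate alternatives forces a change of menu and hence of the induced reference point, which threatens to reclassify them out of category $k$. The generalized-average property is precisely what tames this, letting me realize any local within-category comparison while keeping $A(S')$ close enough to the original reference that categorization is preserved; making this control uniform enough to stitch the local comparisons together along the path is where the real work lies.
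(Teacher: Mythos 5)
Your high-level plan---local comparability of the revealed relation via menu surgery under the generalized-average property, then a path-covering argument for completeness, then continuity, monotonicity, and cancellation feeding into Theorem 2.2 of \cite{chateauneuf1993local}---is exactly the paper's strategy. However, your execution of the step you yourself flag as ``where the real work lies'' has a genuine gap, in two respects. First, your surgery keeps $x$ in the menu and adjoins $y$ plus padding. The generalized-average property only guarantees that padding alternatives lie within $\epsilon^2$ of \emph{some} element of $S$; you do not get to choose where they land, and in particular a padding point $p$ may land just above the retained choice $x$ (i.e., $p \geq x$, $p \neq x$, $d(p,x)<\epsilon^2$). Category Continuity* confines $c(S')$ to a small ball around $x$, but inside that ball the candidates are $x$, $y$, \emph{and such padding points}; Category Monotonicity* cannot exclude $p$ from being the unique choice, since $p$ is not dominated by $x$ or, in general, by $y$. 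If $p$ is chosen, you obtain no revealed comparison between $x$ and $y$. Second, even if that were repaired, comparing only the center $x$ with each $y$ in the ball does not deliver completeness of $\succsim^{R,k}$ on the ball: from $y' \succsim^{R,k} x$ and $y'' \succsim^{R,k} x$ nothing follows about $y'$ versus $y''$, because the revealed relation carries no a priori transitivity. The covering argument and the weak-order input to the Chateauneuf--Wakker machinery require comparability of \emph{arbitrary pairs} in an open neighborhood, not star-shaped comparability through the center.

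The paper's construction resolves both problems simultaneously: given $c(S)=\{z\}$ with $z \in K^k(A(S))$, it \emph{removes} $z$ and replaces it by the strictly dominated point $z_* = z - \tfrac12\epsilon^*$, and only then adjoins an \emph{arbitrary pair} $x',y' \in B_{\epsilon^*/4}(z)$. Category Continuity* confines the choice from the resulting menu $S^*$ to $B_{\zeta}(z)$, and inside that ball every element of $S^*$ other than $x'$ and $y'$---namely $z_*$ and the padding forced to cluster within $(\epsilon^*/2)^2$ of it---is strictly dominated by both $x'$ and $y'$, because the domination margin $\epsilon^*/4$ exceeds the padding radius $(\epsilon^*/2)^2$. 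Category Monotonicity* then forces $c(S^*)$ to meet $\{x',y'\}$, yielding comparability of the arbitrary pair and hence genuine local completeness. With this one step corrected, the remainder of your outline (path argument, open contour sets, monotonicity and local additivity, and the application of Theorem 2.2 on the sets supplied by Revealed Structure) goes through as in the paper.
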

\begin{proof}
	Fix a category $i$ and pick any $x,y\in E^{R,i}$. Let $E^*=E^{R,i} \bigcap B_{d(x,y)+1}(x)$.
	As in proof of Lemma \ref{lem: trans closure complete}, there is a continuous path $\theta:[0,1]\rightarrow E^*$ so that $\theta(0)=x$ and $\theta(1)=y$ that crosses each $\succsim^{R,i}$ indifference curve at most once, and $Y=\theta^{-1}([0,1])$ is compact. We will show that for any $z\in Y$, there exists an open set $z\in B_z \subset E^*$ so that $\succsim^{R,i}$ is complete on $B_z$. If this is the case, we can mimic the rest of the proof of Lemma \ref{lem: trans closure complete} to show that either $x\succsim^{R,i} y$ or $y\succsim^{R,i} x$.
	
	By definition of $E^*$, for any $z\in E^*$, there exists $S\in \mathcal{X}$ with $A(S)=r$ so that $c(S)=z$.
	Since $K^i(r)$ is open, there exists $\epsilon_1>0$ so that $B_{2\epsilon_1}(z) \subset K^i(r)$.
	By Lemma \ref{lem: usc}, there exists $\epsilon_2>0$ so that $r' \in B_{\epsilon_2}(r)$ implies $B_{\epsilon_1}(z) \subset K^i(r')$.
	Pick $\zeta \in (0,\frac12)$ so that $B_\zeta(z)\cap S=z$.
	By Category Continuity*, there exists $\epsilon_3>0$ so that for any $S' \in \mathcal{X}$ with  $d(A(S'),A(S))<\epsilon_3$, for any $y' \in S'$, there is $y \in S$ so that $y' \in B_{\epsilon_3}(y)$, and $S' \bigcap B_{\zeta}(x) \neq \emptyset$, then $c(S') \subset B_{\zeta}(x)$.
	By Generalized Average, there exists $\epsilon_4>0$ so that $z' \in B_{\epsilon_4}(z)$ implies $d(A(S\setminus \{z\} \cup \{z'\}),A(S))<\min\{\epsilon_2,\epsilon_3\}/2$.
	Let $\epsilon^*=\min \{\epsilon_1,\epsilon_2,\epsilon_3,\epsilon_4,\zeta\}$.
	
	Pick any $x',y' \in B_{\epsilon^*/4}(z)$ and let $z_*=z-\frac12 \epsilon^*$.
	Set $S_0=S\setminus \{z\} \cup \{z_*\}$, noting $d(r,A(S_0))<\epsilon_2/2$.
	By Generalized Average,
	there exists $S^*$ with $\{x',y'\}\bigcup S_0 \subset S^*$
	so that $d(A(S^*),A(S_0))<\epsilon^*/2$ and $d(S_0,S^* \setminus [ \{x',y'\}\bigcup S_0 ])<(\epsilon^{*}/2)^2$.
	Since $d(A(S^*),r)\leq d(A(S^*),A(S_0))+d(A(S_0),r)<\epsilon_2$, $x',y' \in K^i(A(S^*))$.
	Since every member of $S^*$ is no more than $\epsilon^*$ away from a member of $S$, Category Continuity* implies that $c(S^*) \subset B_{\zeta}(z)$.
	CM* gives that either $x'\in c(S^*)$ or $y' \in c(S^*)$, so $x' \succsim^{R,i} y'$ or $y' \succsim^{R,i} x'$. 
	
	Continuity follows along the same lines as Lemma \ref{lem: cont trans closure}. CM* gives that it is also monotone, and Category Cancellation* that it is locally additive. Apply Theorem 2.2 of \cite{chateauneuf1993local} to get a globally additive representation $U^k$.
\end{proof}
	By Lemma \ref{lem: revealed k preference}, there exists a category utility $U^k$ for each category.
	Since $E^{R,k}$ is dense in $D^k$, we can extend $U^k$ to $D^k$ uniquely.
	By Generalized Average and Category Continuity*, for any $S\in X$ with $z \in [D^k \setminus E^{R_k} ] \cap S$, there is a $z' \in E^{R,k}$ arbitrarily close to $z$ so that $c(S)= c([S\setminus\{z\}]\cup \{z'\})$, so it is sufficient to establish a representation when all alternatives categorized as $k$ in $S$ belong $E^{R,k}$ for each $k$ and $S$.
	
	Fix two regions $k$ and $j$. By CAR, for any $x \in E^{R,k}$ there exists $x'\in E^{R,k}$, $y \in E^{R,j}$, and $S \in \mathcal{X}$ so that $x',y \in c(S)$ and $x \sim^{R,k} x'$. This implies there exists a strictly increasing function $H$ so that $V(x|r)=U^k(x)$ when $x \in K^k(r)$ and $V(x|r)=H(U^j(x))$ when $x \in K^j(r)$ represents choice (when $S \subset K^k \bigcup K^j$). This is well-defined and represents choice by Category SARP. By AAC*, $H$ is an affine function.
	The argument are readily seen to extend inductively to all regions, which complete the proof. 
	\hfill\qedsymbol
	
	\subsection{Proof of Lemma \ref{lem:BGS region}}
\label{pf: lem:BGS region}
Pick any $x \in X$ and set $S=\{x,x'\}$ where $x'=(\frac12 x_1,x_2)$. Then, $A(S)_2=x_2$ by strong generalized average, so both $x$ and $x'$ are $1$-salient by S4. By CM*, $x \in c(S)$, and so $x \in E^{R,1}$. $x$ was arbitrary, so  $X= E^{R,1}$. Similar for $K^2$.
\hfill\qedsymbol

\subsection{Proof of Proposition \ref{thm: BGS choice}}
By Lemma \ref{lem:BGS region}, the structure assumption is satisfied. By Theorem \ref{thm: BGS regions}, the category function is generated by a salience function. By Theorem \ref{thm: strong CTM choice}, $c$ conforms to Strong CTM. Mimicking the arguments of Theorem \ref{thm: RI}, Reference Interlocking implies $U^k(x)=w^k_1 u_1(x_1)+w^k_2 u_2(x_2)+\beta_k$.
The rest follows from the arguments that establish Proposition \ref{thm:BGS preference}.
\hfill\qedsymbol
\subsection{Proof of Proposition \ref{prop: revealing regions with c}}
\label{pf: prop: revealing regions with c}
 Suppose that $c$ has two BGS representations, $(U^1,U^2,\sigma)$ and $(U^{\prime 1},U^{\prime 2},\sigma')$.
We first show that $\sigma$ and $\sigma'$ categorize all alternatives $y \gg r$ the same for every $r$. Then, we use symmetry to show this implies they agree everywhere. Finally, we show that we can also directly reveal the category of $y$.

For contradiction, assume that $\sigma$ and $\sigma'$ disagree on the category of  $y$ for reference $r$ when $y\gg r$: $\sigma'(y_k,r_k)\leq\sigma'(y_{-k},r_{-k})$ and $\sigma(y_k,r_k)>\sigma(y_{-k},r_{-k})$. By continuity and increasing differences, we can take both inequalities to be strict by lowering $y_k$. Interchanging the role of the two representations if necessary,  there is no loss in assuming that  $U^k(r)\geq U^{-k}(r)$. 
By continuity, there exists $\epsilon>0$ so that if $d(r',r)<\epsilon$ and $d(y,y')<\epsilon$, then 
$$\sigma(y'_k,r'_k)>\sigma(y'_{-k},r'_{-k})\text{ and }\sigma'(y'_k,r'_k)<\sigma'(y'_{-k},r'_{-k}).$$ 
Pick $S$ so its convex hull is contained in $B_{\epsilon'}(r)$ and $U^k(y)>U^k(z),U^{-k}(z)$ for all $z \in B_{\epsilon'+\epsilon'^2}(r)$ for some $\epsilon'<\epsilon/2$; $\epsilon'$ exists by continuity of $U^k$ and $U^{-k}$.
Since $A$ is a strong generalized average, $d(A(S),r)<\epsilon/2$.
For any $y'$, generalized average implies there exists $S'$ so that $d(A(S'),A(S))<\epsilon'$, $d( S' \setminus\{y',y\},S)<\epsilon'^2 $, and $y,y' \in S'$. Label it $S(y')$ and note $d(A(S(y')),r)<\epsilon$.

Pick $y'$ with $d(y,y')<\epsilon$ so that $U^k(y')=U^k(y)$ and $y \neq y'$. As above, $U^{-k}(y)\neq U^{-k}(y')$. 
By Lemma \ref{lem:BGS region} and Theorem 2.2 of \cite{chateauneuf1993local}, $U^{\prime j}$ and $U^j$ agree up to an affine transformation for $j=1,2$, so $U^{\prime k}(y')=U^{\prime k}(y)$ and $U^{\prime -k}(y)\neq U^{\prime -k}(y')$ also.
Since $(U^{\prime 1},U^{\prime 2},\sigma')$ represents $c$, $y,y' \in K^k(A(S(y')))$ and  $c(S(y'))=\{y,y'\}$. 
However, $(U^{\prime 1},U^{\prime 2},\sigma')$ also represents $c$, so  $y,y' \in K^{-k}(A(S(y')))$. Hence, it is impossible that $c(S(y'))=\{y',y\}$; one has strictly higher utility than the other. This is a contradiction of both representing $c$, so conclude the categories coincide when $y\gg r$. 

We show that $\sigma$ and $\sigma'$ agree on the category of all alternatives whenever they agree whenever $y \gg r$. Pick any $x,y,a,b>0$. We show that $\sigma(x,a)>\sigma(y,a)$ if and only if  $x' \in K^1(r)$ for an appropriately chosen alternatives $x',r$ so that $x' \gg r$. This is impossible if $x=a$ and always true if $y=b$ and $x\neq a$. For any other values, it follows from symmetry of $\sigma$  that  $\sigma(x,a)>\sigma(y,b)$ if and only if either $(x,y) \in K^1 (a,b)$, $x>a$ and $y>b$; $(x,b) \in K^1 (a,y)$, $x>a$, and $b>y$;  $(a,y) \in K^1 (x,b)$, $x<a$, and $y>b$; or $(a,b) \in K^1 (x,y)$, $x<a$, and $b>y$.

We finally turn to directly revealing the salience of each alternative. As above, it suffices to consider $y \gg r$ and identify the categories of each alternative in $U(r)=\{x:x\gg r\}$. Again, pick $k$ so that $U^k(r) \geq U^{-k}(r)$ and define $S(y')$ as above.
If $y \in K^k(r)$, then there exists $\epsilon'>0$ so that $y,y' \in K^k(A(S(y'))$ whenever  $y' \in B_{\epsilon'}(y)$. It follows that $c(S(y'))=\{y,y'\}$ when $U^k(y)=U^k(y')$.
If $y \in K^{-k}(r)$, then there exists $\epsilon''>0$ so that $y,y' \in K^{-k}(A(S(y')))$ for all $y' \in B_{\epsilon''}(y)$.
For any such $y'$ with $U^k(y)=U^k(y')$ and $y'\neq y$, $c(S(y'))\neq \{y,y'\}$: $U^{-k} (y) \neq U^{-k}(y')$, so either $y'$ is not chosen, $y$ is not chosen, or both are not chosen, in which case one of the alternatives close to $r$ is chosen.
Since $K^k(r) \cup K^{-k}(r)$ is dense in $U(r)$, $K^k(r)\cap U(r)$ is the interior of the set of $y \gg r$ for which  there exists an $\epsilon'$ so that $c(S(y'))=\{y,y'\}$ when $U^k(y)=U^k(y')$ and $y' \in B_{\epsilon'}(y)$, and $K^{-k}(r)\cap U(r)$ is the interior of the set of $y \gg r$ for which  there exists an $\epsilon'$ so that $c(S(y'))\neq \{y,y'\}$ when $U^k(y)=U^k(y')$ and $y' \in B_{\epsilon'}(y)$.
\hfill\qedsymbol

\bibliographystyle{apa-good}

\bibliography{choice_new}
\end{document}